\def\res{\mathop{{\rm Res}}}
\newcommand{\beq}{\begin{equation}}
\newcommand{\eeq}{\end{equation}}
\newcommand{\lb}{\left (}
\newcommand{\rb}{\right )}
\newcommand{\dif}[2]{\dfrac{d #1 }{d #2 } }
\newcommand{\pdif}[2]{\dfrac{\partial #1 }{\partial #2 } }
\newcommand{\la}{\lambda}
\newcommand{\Bc}{\check{B}}
\newcommand{\br}[1]{\lb #1 \rb }
\newcommand{\pdub}{p_{\rm \hspace{.1mm}Dub}}
\newcommand{\vast}{\bBigg@{4}}
\newcommand{\Vast}{\bBigg@{5}}
\newtheorem{theorem}{Theorem}[section]
\newtheorem{proposition}[theorem]{Proposition}
\newtheorem{corollary}[theorem]{Corollary}
\newtheorem{lemma}[theorem]{Lemma}
\newtheorem{example}[theorem]{Example}
\theoremstyle{definition}
\newtheorem{definition}[theorem]{Definition}
\theoremstyle{remark}
\newtheorem{remark}[theorem]{Remark}
\newcommand{\C}{\mathbb{C}}
\newcommand{\N}{\mathbb{N}}
\newcommand{\R}{\mathbb{R}}
\newcommand{\oM}{\overline{\mathcal{M}}}
\newcommand{\diag}{\mathrm{diag}}
\newcommand{\Id}{\mathbf{I}}
\newcommand{\ep}{\epsilon}
\newcommand{\un}{1\!\!1}
\def\picsize{1.8cm}
\def\rectsize{1.1}
\newcommand\encadremath[1]{\vbox{\hrule\hbox{\vrule\kern8pt
\vbox{\kern8pt \hbox{$\displaystyle #1$}\kern8pt}
\kern8pt\vrule}\hrule}}
\def\enca#1{\vbox{\hrule\hbox{
\vrule\kern8pt\vbox{\kern8pt \hbox{$\displaystyle #1$}
\kern8pt} \kern8pt\vrule}\hrule}}
\newcommand\figureframex[3]{
\begin{figure}[bth]
\hrule\hbox{\vrule\kern8pt
\vbox{\kern8pt \vbox{
\begin{center}
{\mbox{\epsfxsize=#1.truecm\epsfbox{#2}}}
\end{center}
\caption{#3}
}\kern8pt}
\kern8pt\vrule}\hrule
\end{figure}
}
\newcommand\figureframey[3]{
\begin{figure}[bth]
\hrule\hbox{\vrule\kern8pt
\vbox{\kern8pt \vbox{
\begin{center}
{\mbox{\epsfysize=#1.truecm\epsfbox{#2}}}
\end{center}
\caption{#3}
}\kern8pt}
\kern8pt\vrule}\hrule
\end{figure}
}
\newcommand{\bea}{\begin{eqnarray}}
\newcommand{\eea}{\end{eqnarray}}
\renewcommand{\and}{{\qquad {\rm and} \qquad}}
\newcommand{\Res}{\mathop{\,\rm Res\,}}
\newcommand{\om}{\omega}
\newcommand{\Pint}{{\int\kern -1.em -\kern-.25em}}
\newcommand{\cal}{\mathcal}
\newcommand{\brem}{\begin{remark}\rm\small}
\newcommand{\er}{\end{remark}}
\newcommand{\bt}{\begin{theorem}}
\newcommand{\et}{\end{theorem}}
\newcommand{\bd}{\begin{definition}}
\newcommand{\ed}{\end{definition}}
\newcommand{\bp}{\begin{proposition}}
\renewcommand{\ep}{\end{proposition}}
\newcommand{\bl}{\begin{lemma}}
\newcommand{\el}{\end{lemma}}
\newcommand{\bc}{\begin{corollary}}
\newcommand{\ec}{\end{corollary}}
\newcommand{\beaq}{\begin{eqnarray}}
\newcommand{\eeaq}{\end{eqnarray}}
\newcommand{\refl}{\mathcal{R}}
\begin{document}
	
\title{Dubrovin's superpotential as a global spectral curve}

\author[P.~Dunin-Barkowski]{P.~Dunin-Barkowski}
\address{P.~D.-B.:
Faculty of Mathematics, National Research University Higher School of Economics, Usacheva 6, 119048 Moscow, Russia}
\email{ptdbar@hse.ru}

\author[P.~Norbury]{P.~Norbury}
\address{P.~N.: Department of Mathematics and Statistics, University of Melbourne, Australia 3010}
\email{pnorbury@ms.unimelb.edu.au}

\author[N.~Orantin]{N.~Orantin}
\address{N.~O.:  D\'epartement de math\'ematiques, 
Ecole Polytechnique F\'ed\'erale de Lausanne,
CH-1015 Lausanne,
Switzerland}
\email{nicolas.orantin@epfl.ch}

\author[A.~Popolitov]{A.~Popolitov}
\address{A.~P.: Korteweg-de Vries Institute for Mathematics, University of Amsterdam, Postbus 94248, 1090 GE Amsterdam, The Netherlands and ITEP, Moscow, Russia}
\email{A.Popolitov@uva.nl}

\author[S.~Shadrin]{S.~Shadrin}
\address{S.~S.: Korteweg-de Vries Institute for Mathematics, University of Amsterdam, Postbus 94248, 1090 GE Amsterdam, The Netherlands}
\email{S.Shadrin@uva.nl}

\begin{abstract}
We apply the spectral curve topological recursion to Dubrovin's universal Landau-Ginzburg superpotential associated to a semi-simple point of any conformal Frobenius manifold. We show that under some conditions the expansion of the correlation differentials reproduces the cohomological field theory associated with the same point of the initial Frobenius manifold. 
\end{abstract}

\maketitle

\tableofcontents

\section{Introduction}

\subsection{Goal of the paper}
A semi-simple (conformal) Frobenius manifold is an important algebro-geometric structure, introduced by Dubrovin, that appears naturally in a circle of questions related to classical mirror symmetry.  Closely related to a semi-simple conformal Frobenius manifold is a cohomological field theory, that is, a system of cohomology classes on the moduli space of stables curves introduced by Kontsevich and Manin in order to capture the main universal properties of Gromov-Witten theory. Via Givental-Teleman theory, these two concepts (semi-simple conformal Frobenius manifolds and semi-simple homogeneous cohomological field theories) are essentially equivalent. 

The theory of Landau-Ginzburg superpotentials associates to a Riemann surface (or a family of Riemann surfaces) equipped with a meromorphic function and a meromorphic differential 1-form (or a meromorphic function whose differential is this 1-form) 
 structure that is essentially equivalent to the concept of a semi-simple Frobenius manifold, after work of Dubrovin \cite{Dub2dTFT}. It is part of a more general theory of Landau-Ginzburg models that exists in any dimension, not necessarily on a curve.

The theory of spectral curve topological recursion, initially developed for computation of the correlation differentials of matrix models, uses a very similar input: a Riemann surface (or a family of Riemann surfaces) equipped with a meromorphic function, a meromorphic differential 1-form (or a meromorphic function, whose differential is this 1-form), and a symmetric bi-differential. It produces a system of symmetric differentials on the cartesian powers of the underlying Riemann surface. Under some extra conditions these symmetric differentials can be expressed in terms of the correlators of a cohomological field theory.

To summarize, we have the following system of relations:
\begin{equation}\label{eq:table-of-correspondences}
\begin{array}{ccc}
\text{semi-simple conformal} & \leftrightarrow & \text{Landau-Ginzburg} \\
\text{Frobenius manifolds (FM)} & & \text{superpotentials (LG)}\\
\Updownarrow & & \\
\text{semi-simple homogeneous} & & \text{spectral curve} \\
\text{cohomological field theories (CohFT)} & \leftrightarrow & \text{topological recursion (TR)}
\end{array}
\end{equation}
We give precise definitions of all geometric structures involved in this diagram and explain the precise statements about their relations in Section~\ref{sec:Recollection}.  In all cases the rigorous formulation of these correspondences requires extra conditions and is not a one-to-one correspondence or an equivalence of categories. It is more like a dictionary that allows one to translate from one language to another under various extra assumptions.

The theory of Landau-Ginzburg superpotentials and spectral curve topological recursion use almost the same input data, namely a Riemann surface equipped with a meromorphic function and a meromorphic differential 1-form.  This input data is used in a completely different way in these two theories, nevertheless the natural question is whether one can add a vertical arrow so that the diagram commutes.  More explicitly,
if a Landau-Ginzburg superpotential and spectral curve topological recursion produce the same Frobenius manifold/CohFT structure on the left hand side of this diagram, do we expect that the input data for the LG model and TR to be the same? 

This paper is devoted to an affirmative answer to this question. As in the case of all other correspondences in this diagram, it is not an equivalence of categories or one-to-one correspondence, but rather a system of general statements that allows one to connect the input data of LG and TR in a large class of examples. 

\subsection{Description of the main results}

A Landau-Ginzburg superpotential determines a structure of a semi-simple conformal Frobenius manifold. However, a particular semi-simple Frobenius structure can have several different superpotentials, and it is not at all clear that it always has a superpotential. The latter problem was addressed by Dubrovin in~\cite{Dub98}, where he proposed a general construction that under some mild extra assumptions associates a superpotential to a Frobenius manifold. This construction is called \emph{Dubrovin's superpotential} in this paper. It is a family of curves $\mathcal{D}_\tau=\mathcal{D}(\tau)$ parametrized by the semi-simple points $\tau$ of the underlying Frobenius manifold, equipped with two meromorphic functions, $\lambda_\tau$ and $p_\tau$ (in fact, the differential form $dp_\tau$ would be sufficient for the definition). This construction depends on some choices, which are an important part of the LG-TR correspondence presented in this paper. 

For the spectral curve topological recursion we need a Riemann surface $\Sigma$ with two meromorphic functions, $x$ and $y$ (in fact, the differential forms $dx$ and $dy$ are sufficient for the construction), and a symmetric bi-differential $B$ on $\Sigma\times\Sigma$ with double pole with bi-residue 1 on the diagonal~\cite{EO,EO09,Ey}. 

To every point of the Frobenius manifold we associate a cohomological field theory $\alpha_\tau$, using the Givental-Teleman theory (see~\cite{GiventalMain,Teleman,Shadrin09,DSS13,PandPixZvo}). Under some conditions one can also associate a CohFT to the spectral curve topological recursion~\cite{DOSS12} (see also~\cite{MilanovAncestor}, where an approach using singularity theory is given, and~\cite{EynardIntersection}, where a more general framework for this correspondence is discussed). 

The mains results of this paper, Theorems~\ref{thm:genus-0-identification} and \ref{thm:any-genus-identification}, are devoted to proving that topological recursion applied to 
$$
\Sigma=\mathcal{D}_\tau, \quad x=\lambda_\tau, \quad y=p_\tau,
$$
and some choice of $B$ gives, under the correspondence from~\cite{DOSS12}, exactly the CohFT $\alpha_\tau$.  

One main tool in the proofs of Theorems~\ref{thm:genus-0-identification} and \ref{thm:any-genus-identification} is the result of \cite{DOSS12} which associates a CohFT to topological recursion applied to a spectral curve satisfying a set of conditions.  These conditions give a close relation between $x$, $y$ and $B$.  Our main task is to show that $\lambda_\tau$, $p_\tau$ and an appropriately chosen $B$ satisfy these relations.  Here we give a brief outline of the proof.

The identification of $x$ and $\lambda$ up to some topological properties is the starting point since the CohFT is based on a vector space formally spanned by the zeros of $dx$, respectively the zeros of $d\lambda$.  On the side of topological recursion there is one requirement that we need, namely we have to assume that there is exactly one critical point  on $\mathcal{D}_\tau$ over each critical value of $x=\lambda_\tau$\footnote{We release this constraint in section \ref{sec:general-theory}.}. This gives a restriction on the possible choices of analytic continuation in Dubrovin's superpotential.

The relation of $y$ with structure constants in the Frobenius manifold required in \cite{DOSS12} leads to an identification of $y=p_\tau$. This theorem (Theorem~\ref{thm:right-function-y}) is heavily based on the computations done by Dubrovin in~\cite{Dub98}.  Next we need to find a good choice of $B$ that will make either theorem work.  In genus zero we find that the unique possible Bergman kernel $B$ satisfies the conditions required by \cite{DOSS12} which we present in a form that can be checked (or used as a condition) for the superpotentials. This is Theorem~\ref{thm:CompatibilityTest} and its corollaries.  It allows us to conclude that topological recursion applied to the superpotential produces a CohFT and it remains to prove that this CohFT is the one associated to the Frobenius manifold defined by the superpotential.  We show that in fact it is sufficient to know that we get \textit{homogeneous} CohFT from the Bergman kernel -- then the correct CohFT $\alpha_\tau$ is reproduced automatically. This leads to a general theorem on the LG-TR correspondence in genus 0 (Theorem~\ref{thm:genus-0-identification}). This theorem is key to several important examples that we discuss in this paper as well (we mention these examples in the list of applications in Section~\ref{sec:Applications}).

In higher genera, the Bergman kernel is not canonical and we need to choose the correct one.  In order to have a suitable shape of the Laplace transform of the Bergman kernel (required for correspondence with Givental graphs), we have to use the Bergman kernel normalized on a basis of $\mathcal{A}$-cycles for some Torelli marking, using results of Eynard~\cite{EynardIntersection}. We show, using the Rauch variational formulae, that the homogeneity property is also satisfied in this case, and this allows us to make a general statement for the LG-TR correspondence in any genus (Theorem~\ref{thm:any-genus-identification}). This is a conditional statement requiring Theorem~\ref{thm:CompatibilityTest} that needs to be checked in particular examples. Still there are interesting examples, in particular, we work out an elliptic example in detail (Theorem~\ref{th:elliptic}).

Finally, we develop a theory for the case when the extra assumptions on the choice of analytic continuation of Dubrovin's superpotential are dropped. In this case we have to generalize the set-up of topological recursion in order to take into account the action of the reflection group associated with Frobenius manifold. The correspondence that we obtain in this case (Theorem~\ref{thm:general-set-up}) is parallel to the ideas of Milanov~\cite{MilanovGlobal}.

\subsection{Contributions to the theory of topological recursion}\label{sec:Applications}

In order to establish a correspondence with the Landau-Ginzburg theory and to work out several basic examples, we obtain a number of results that are of independent interest for the theory of topological recursion, and here we collect them all.

\subsubsection{Global spectral curve for the CohFT-TR correspondence}
One way to present our main result is the following. The correspondence between CohFT and TR obtained in~\cite{DOSS12} uses a local version of topological recursion, that is when the spectral curve  is just a union of disks. An important open question is whether we can glue all these open disks into a global spectral curve. This would allow one to use a variety of analytical methods developed in the theory of topological recursion that are applicable only in the case of a global curve~\cite{EO,EO09}. The main result of our paper is an affirmative answer to this question, that is, for a large class of CohFTs we can indeed claim the existence of a global spectral curve. In this form this question was also considered by Milanov for singularity theory~\cite{MilanovGlobal}.

\subsubsection{Bouchard-Eynard recursion locally}
Topological recursion requires the spectral curve to have simple critical points.
There is an extension of the theory of topological recursion for the curves with higher order critical points, due to Bouchard and Eynard~\cite{BouchardEynard}.  A fundamental question is to identify the correlation functions of their generalized recursion in the elementary case of one point of order $r+1$. Bouchard and Eynard have announced~\cite{BouchardEynardThm} a theorem that in this case the correlators are expanded in terms of the string tau-function of the $r$-Gelfand-Dickey hierarchy (or, equivalently, in terms of the intersection theory of the Witten top Chern class on the moduli space of $r$-spin structures,~\cite{WittenConj,FSZ}).  

An application of the main theorem of this paper, i.e. where topological recursion applied to Dubrovin's construction of a superpotential produces the same CohFT is the case of the $A_n$ singularity. Careful analysis of this example in its limit at the zero point implies immediately the theorem of Bouchard and Eynard. 

\subsubsection{Enumeration of hypermaps}

Each time a particular combinatorial problem is solved in terms of topological recursion, there occurs a natural question whether this leads to an interesting CohFT inside this combinatorial problem, and, as a consequence, to an interesting ELSV-type formula for it. This logic is explained in detail in~\cite[Introduction]{DLPS15}. In particular, the topological recursion was proved in~\cite{DOPS14} for the enumeration of \emph{hypermaps}, see also~\cite{DoManescu}. 

In the case of hypermaps the correspondence between LG and TR gives us immediately a full description of the Frobenius manifold structure behind this combinatorial problem; it is a particular simple example of a so-called Hurwitz Frobenius manifold. In the simplest case one can say that the Frobenius manifold with the prepotential $t_1^2t_2/2+t_2^2\log t_2$ resolves, via its associated CohFT and the ELSV-type formula, the combinatorial problem known, in different versions, as generalized Catalan numbers, discrete volumes of moduli spaces, or discrete surfaces~\cite{AlexMirMor,DMSS12,EO09,NorburyLattice}. This explains, in a conceptual way, some observations already made in~\cite{ACNP,FangLiuZong}.

\subsubsection{Bergman kernel and Torelli marking}  \label{sec:bergman}
Another important application of this paper is to prove a form of independence of the output of topological recursion from the choice of the bidifferential $B$ for a global spectral curve.  Topological recursion depends on $B$ and there are many ways to normalize $B$ depending on a choice of Torelli marking on the Riemann surface.  We show that for a global spectral curve satisfying a compatibility condition, topological recursion gives rise to a so-called homogeneous CohFT with flat identity {\em independent} of the choice of normalisation of $B$.

\subsection{Guide to the paper}

In Section~\ref{sec:Recollection} we give a full description of all concepts mentioned in Diagram~\eqref{eq:table-of-correspondences} and explain the known relations between them. 

In Section~\ref{sec:ycoinc} we prove that Dubrovin's superpotential
always gives the right $y$-function for the topological recursion. 
Then in Section~\ref{sec:Rcompat} we revisit in geometric terms the necessary compatibility conditions between $y$ and $B$ on the spectral curve from~\cite{DOSS12}. This allows us to prove the two main theorems of this paper.
Namely, in Section~\ref{sec:genus0} we prove the LG-TR correspondence in the genus 0 case, and in Section~\ref{sec:highergenus} we generalize this result to higher genera.

Then we discuss several important series of examples, where Dubrovin's superpotential can be computed explicitly. In Section~\ref{sec:Ansing} we discuss $A_n$ singularities, with an application to the Bouchard-Eynard generalisation of topological recursion. In Section~\ref{sec:hypermaps} we present in detail a computation for a special class of Hurwitz Frobenius manifolds, corresponding to the case of meromorphic functions on the Riemann sphere with two poles, one of which is of order 1. In this case the corresponding topological recursion resolves enumeration of hypermaps. 
In Section~\ref{sec:elliptic} we describe a higher genera case, namely, we consider the case of elliptic curve, where the superpotential is given by the Weierstrass function.

Section~\ref{sec:general-theory} is devoted to a general theory where we use a universal construction of analytic continuation instead of the rather particular constructions of Sections~\ref{sec:Ansing}, \ref{sec:hypermaps}, and~\ref{sec:elliptic}. This essentially reproduces, in our context, the main ideas of the work of Milanov~\cite{MilanovGlobal} initially applied by him to the case of simple singularities.

In the appendix we explicitly construct global spectral curves for two rank 2 CohFTs.  We need to vary the construction slightly due to degeneracy of the Gauss-Manin system.  These examples satisfy the conditions of Theorem~\ref{thm:any-genus-identification} and hence topological recursion produces the CohFT associated to the Frobenius manifold.

\subsection*{Acknowledgements}

We thank Todor Milanov for useful discussions and Boris Dubrovin for suggesting the connection between superpotentials and topological recursion.

A.~P. and S.~S. were supported by a Vici grant of the Netherlands Organization for Scientific research.  P.~D.-B. was partially supported by RFBR grants 16-31-60044-mol-a-dk, 15-01-05990, 15-31-20832-mol-a-ved and 15-52-50041-YaF-a and RFBR-India grant 14-01-92691-Ind-a; A.~P. was partially supported by RFBR grants 13-02-00457, 14-01-31492-mol-a, 15-52-50041-YaF-a and 15-31-20832-mol-a-ved. P.~D.-B. was also partially supported by the Government of the Russian Federation within the framework of a subsidy granted to the National Research University Higher School of Economics for the implementation of the Global Competitiveness Program and by an MPIM Bonn fellowship.

\section{Recollection of basic facts}
\label{sec:Recollection}

The purpose of this Section is to recall all necessary definitions and facts on Frobenius manifold, moduli spaces of curves, cohomological field theories, Dubrovin's universal construction of Landau-Ginburg superpotentials, and topological recursion. 

\subsection{Frobenius manifolds} 
\label{sec:Recollection-FrobeniusManifolds}
In this Section we recall, following~\cite{Dub2dTFT,Dub98}, the definition of Frobenius manifold and recollect some basic facts about its structures.

Consider a function $F(t^1,\dots,t^n)$ defined on a ball $B\subset \mathbb{C}^n$ and a constant inner product $\eta^{\alpha\beta}$ such that the triple derivatives of $F$ with one shifted index,
\begin{equation}
C_{\alpha\beta}^\gamma:=\frac{\partial^3 F}{\partial t^\alpha \partial t^\beta \partial t^\lambda} \eta^{\lambda\gamma},
\end{equation}
are the structure constants of a commutative associative Frobenius algebra with the scalar product given by $\eta_{\alpha\beta}$. We can think about this structure as defined on the tangent bundle of $B\subset \mathbb{C}^n$ (and we denote the corresponding multiplication of vector field by $\cdot$), and we require that $\partial_{t^1}$ is the unit of the algebra in each fiber. 

Consider a vector field $E:=\sum_{\alpha=1}^n ((1-q_\alpha)t^\alpha+r_\alpha) \partial_{t^\alpha}$, here $q_\alpha$ and $r_\alpha$ are some constants, $\alpha=1,\dots,n$. We require that $q_1=0$ and $r_\alpha\not=0$ only in the case $1-q_\alpha=0$. We require that there exists a constant $d$ such that $E.F-(3-d)F$ is a polynomial of order at most $2$ in $t^1,\dots,t^n$.  

The triple $(F,\eta,E)$ that satisfies all conditions above gives us the structure of a (conformal) Frobenius manifold of rank $n$ and conformal dimension $d$. The function $F$ is called the prepotential; the vector field $E$ is called the Euler vector field. Of course, there are coordinate-free descriptions of this structure as well, we refer to~\cite{Dub2dTFT,Dub98} for details.

Two important structures associated to Frobenius manifolds are the second metric $\eta'$ on $TB$ and the extended flat connection $\tilde\nabla$ on $B\times \mathbb{C}$. The second metric $\eta'$ on $TB$ is defined in the following way. The first metric $\eta$ can be considered as an isomorphism between $\eta\colon TB\to T^*B$. For any two vector fields $\partial'$ and $\partial''$ we define $\eta'(\partial',\partial'')$ to be $E\vdash \eta(\partial'\cdot \partial'')$. The extended connection $\tilde{\nabla}$ is defined as 
\begin{align}
\tilde\nabla_{\partial'}\partial'' &:= \nabla^{\eta}_{\partial'}\partial'' + z \partial'\cdot \partial''; \\
\tilde\nabla_{\partial'}\partial_z &:= 0;\\
\tilde\nabla_{\partial_z}\partial_z &:= 0;\\ \label{eq:1stTimeMu}
\tilde\nabla_{\partial_z}\partial' &:= \partial_z(\partial')+E\cdot \partial' - \frac{1}{z}\mu\partial',
\end{align}
where $\nabla^{\eta}$ is the Levi-Civita connection of $\eta$, and the endomorphism $\mu:TB\to TB$ is defined by 
\begin{equation}  \label{mu}
\mu(v):=(1-d/2)v-\nabla^{\eta}_vE.
\end{equation}  
In the flat basis, $\mu=\diag(\mu_1,\dots,\mu_n)$ for constants $\mu_\alpha=q_\alpha-d/2$. 

In this paper we only consider semi-simple Frobenius manifolds, that is, we require that the algebra structure on an open subset $B^{ss}\subset B$ is semi-simple. In a neighborhood of a semi-simple point we have a system of canonical coordinates $u_1,\dots,u_n$, defined up to permutations, such that the vector fields $\partial_{u_i}$, $i=1,\dots,n$, are the idempotents of the algebra product, and the Euler vector field has the form $E=\sum_{i=1}^n u_i\partial_{u_i}$. 

The geometric structure that is equivalent to the notion of conformal Frobenius manifolds can be described in canonical coordinates~\cite{Dub2dTFT}. 
The canonical coordinate vector fields $\partial_{u_i}$ are orthogonal but not orthonormal. We can normalize them to produce a so-called normalized canonical frame in each tangent space, that is, if $\Delta_i^{-1}=\eta(\partial_{u_i},\partial_{u_i})$, then the orthonormal basis is given by $\Delta_i^{1/2} \partial_{u_i}$, $i=1,\dots,n$. By $\Psi$ we denote the transition matrix from the flat basis to the normalized canonical one.  Hence the columns of $\Psi$ are given by the coordinates of the flat vectors $\partial_{t_{\alpha}}$ in the basis $\Delta_i^{1/2} \partial_{u_i}$, with first column $\Psi_{i1}=\Delta_i^{-1/2}$ representing the unit vector.  We have the relation
$$ E\cdot\Psi=\Psi\mu
$$
where $E\cdot$ is differentiation with respect to $E$.


Define the matrix $V$ to be the endomorphism $\mu$ with respect to the normalized canonical basis, hence $V=\Psi\cdot\diag(\mu_1,...,\mu_n)\cdot\Psi^{-1}$ and $V+V^T=0$. Covariant constancy of $\mu$ implies that $V$ satisfies
$$ dV=[V,d\Psi\cdot\Psi^{-1}].
$$
Define $V_i=\partial_{u_i}\Psi\cdot\Psi^{-1}$ so $\sum_i u_iV_i=V$.
\begin{remark}
Note that Givental \cite{GiventalMain} (and \cite{DOSS12}) uses a different convention for matrices than what is used here.  Givental's convention uses a right action of matrices on vectors which is the transpose of the convention we use here.
\end{remark}

\subsection{Superpotential}
A convenient way to describe a Frobenius structure is in terms of a so-called Landau-Ginzburg superpotential. We recall the definition from~\cite{Dub2dTFT,Dub98}.  A {\em superpotential} is a function $\lambda(p,u_1,\dots,u_n)$ of a variable $p\in\mathcal{D}$ in some domain $\mathcal{D}$ that depends on points $(u_1,\dots,u_n)\in B_0\subset B^{ss}$ in a ball in the semisimple part of the Frobenius manifold, and satisfies the following properties:
\begin{enumerate}
	\item The critical values of $\lambda$ as a function on $\mathcal{D}$ are $u_1,\dots,u_n$. 
	\item The critical points are non-degenerate.
	\item If there are several critical points in the inverse image $\lambda^{-1}(u_i)$, then the Hessians of $\lambda$ at these points must coincide.
	\item 
For any choice $p_1,\dots,p_n\in\mathcal{D}$ of the critical preimages of $u_1,\dots,u_n$ (that is, $\lambda(p_i,u_1,\dots,u_n)=u_i$) and for any choice of the vector fields $\partial'$, $\partial''$,
	 and $\partial'''$ on $B_0$ we have:
\begin{align}
\eta(\partial',\partial'')&=-\sum_{i=1}^n \Res_{p\to p_i} \frac{\partial'(\lambda dp)\partial''(\lambda dp)}{d_p \lambda}; \label{spmet}\\
\eta'(\partial',\partial'')&=-\sum_{i=1}^n \Res_{p\to p_i} \frac{\partial'(\log\lambda dp)\partial''(\log\lambda dp)}{d_p \log\lambda}; \label{spint}\\
\eta(\partial'\cdot \partial'',\partial''') & = -\sum_{i=1}^n \Res_{p\to p_i} \frac{\partial'(\lambda dp)\partial''(\lambda dp)\partial'''(\lambda dp)}{dp\, d_p \lambda} \label{sp3pt}
\end{align}
where $\partial'(\lambda dp)$ gives the action of the vector field by derivation in the parameters $u_i$.
In particular, the map $\partial'\mapsto\partial'(\lambda dp)$ from vector fields on $B$ to meromorphic differentials on $\mathcal{D}$ quotiented out by $d_p\lambda$ is injective.
\item There exist some cycles $Z_1,\dots,Z_n$ in $\mathcal{D}$ such that the integrals
\begin{equation}
\frac{1}{\sqrt z} \int_{Z_\alpha} e^{z\lambda} dp, \qquad \alpha=1,\dots,n
\end{equation}
converge and give a non-degenerate system of flat coordinates for $\tilde{\nabla}$.
\end{enumerate}
In these terms, the identity vector field $\partial_0$ of the  Frobenius manifold is represented by $dp$, i.e. $\partial_0(\lambda dp)=dp$. Indeed, since 
$\eta(\partial_0\cdot \partial,\partial')=\eta(\partial,\partial')$ for all vector fields $\partial,\partial'$, then non-degeneracy of $\eta$ implies that $\partial_0\cdot \partial=\partial$ for all $\partial$.  The Euler vector field is represented in these terms by $\lambda dp$, i.e. $E(\lambda dp)=\lambda dp$.

\subsection{Cohomological field theories} 
In this Section we recall all basic definitions that are necessary to introduce the concept of a cohomological field theory. It is an algebraic structure on a given vector space that captures the main properties of Gromov-Witten theories, and there is a natural group action on these structures, due to Givental. The main sources for this Section are~\cite{KontsevichManin,Teleman,GiventalMain,Shadrin09,PandPixZvo}.

A stable curve of genus $g$ with $k$ marked points is a possible reducible curve with nodal singularities, of arithmetic genus $g$ and $k$ non-singular marked points, such that the group of its automorphisms is finite. 
By $\overline{\mathcal{M}}_{g,k}$ we denote the moduli space of stable curves of genus $g$ with $k$ ordered marked points. There are natural line bundles $L_i\to \overline{\mathcal{M}}_{g,k}$, $i=1,\dots,k$, whose fiber of the point $[(C_g,x_1,\dots,x_k)]\in \overline{\mathcal{M}}_{g,k}$ represented by the curve $C_g$ with the marked points $x_1,\dots,x_k\in C_g$ is given by $T^*_{x_i}C_g$. The first Chern class of $L_i$ is denoted by $\psi_i\in H^2(\overline{\mathcal{M}}_{g,k},\mathbb{C})$. 

There are a number of natural maps between the moduli spaces. By $\pi\colon \oM_{g,k+1}\to\oM_{g,k}$ we denote the map that forgets the last marked point and stabilizes the curve. By $\sigma\colon\oM_{g_1,k_1+1}\times \oM_{g_2,k_2+1}\to \oM_{g,k}$ we denote the map that sews the last marked points on the source curves into a node on the target curve, $g=g_1+g_2$, $k=k_1+k_2$. By $\rho\colon \oM_{g-1,k+2}\to\oM_{g,k}$ we denote the map that sews the two last marked points on the source curve into a node on the target curve. 

Consider a vector space $V=\mathbb{C}\langle e_1,\dots,e_n \rangle$ with a scalar product $\eta$. A cohomological field theory with the target $(V,\eta)$ is a system of cohomology classes $\alpha_{g,k}\colon V^{\otimes k}\to H^*(\overline{\mathcal{M}}_{g,k},\mathbb{C})$ satisfying the following conditions:
\begin{enumerate}
	\item The form $\alpha_{g,k}$, $g\geq 0$, $k\geq 0$, $2g-2+k>0$, is invariant under the action of $S_k$ that simultaneously reshuffle $V^{\otimes k}$ and relabel the marked points on the curves in $\overline{\mathcal{M}}_{g,k}$.
	\item We have:
	\begin{align}
	\pi^*\alpha_{g,k}& =e_1\vdash \alpha_{g,k+1}; \label{flatvac}\\
	\sigma^*\alpha_{g,k+1} &= \eta^{\alpha\beta}e_\alpha\otimes e_\beta \vdash \alpha_{g_1,k_1+1} \alpha_{g_2,k_2+1}; \\
	\rho^*\alpha_{g,k} &= \eta^{\alpha\beta}e_\alpha\otimes e_\beta \vdash \alpha_{g-1,k+2}.
	\end{align}
	Here by $\vdash$ we denote the substitution of the vector $e_1$ at the $(k+1)$-st argument in the first equation, and the substitution of the bivector corresponding to the scalar product at the marked points that are sewed into the nodes under the maps $\sigma$ and $\rho$. 
\end{enumerate}

Note that if all classes $\{\alpha_{g,k}\}$ are of degree $0$, then the structure that we get is called a topological field theory (TFT), and it is equivalent to a Frobenius algebra structure on $(V,\eta)$. 

Correlators, or ancestor invariants, of the CohFT are defined by: 
\beq \label{tau} 
\int_{\overline{\mathcal{M}}_{g,k}}\alpha_{g,k}(e_{\nu_1},...,e_{\nu_k})\cdot\prod_{j=1}^k\psi_j^{m_j}
\eeq
for $m_i\in\N$, $\{e_{\nu},_{\ \nu=1,...,N}\}\subset H$.

There is a group action on CohFTs with a fixed target space $(V,\eta)$. The group is the group of matrices $R(z)\in \mathrm{End}(V)\otimes \mathbb{C}[[z]]$ such that $R=\Id+O(z)$ and $R(z)R^*(-z)=\Id$. The action is defined as follows. The classes $\{\alpha'_{g,k}\}=R.\{\alpha_{g,k}\}$ are defined as the sums over so-called stable graphs. 

A stable graph is a graph with a set of vertices $V$, a set of edge $E$, and a set of unbounded edges (leaves) $L\sqcup D$. The vertices are labeled by non-negative integers, that is, we have a map $V\to \mathbb{Z}_{\geq 0}$, $v\mapsto g(v)$. The stability condition means that for each vertex $v$ of valency $k(v)$
 we require $2g(v)-2+k(v)>0$. We say that the stable graph $\Gamma$ has genus $g$ and $k$ leaves if $b_1(\Gamma)+\sum_{v\in V}g(v)=g$ and $|L|=k$. So, we allow an arbitrary number of unbounded leaves in $D$ (these leaves are called dilaton leaves), that is, the set of stable graphs of genus $g$ with $k$ leaves is infinite. The leaves in $L$ are labeled from $1$ to $k$. 

A stable graph $\Gamma$ gives us a map $f_\Gamma$ from the Cartesian product of the spaces $\oM_{g(v),k(v)}$, $v\in V$, to $\oM_{g,k}$. Namely, we associate to each vertex $v$ a curve of genus $g(v)$, and to all attached half-edges we associate the marked points on the curve. Then we first apply the maps $\pi$ on each space $\oM_{g(v),k(v)}$, $v\in V$, in order to forget all marked points corresponding to the dilaton leaves, and then we apply a sequence of maps $\sigma$ and $\rho$, indexed by the edges $E$ of the graph, such that each edge determines the sewing of the corresponding curves. 

We associate to a stable graph $\Gamma$ a map from $V^{\otimes k}$ to $\otimes_{v\in V} H^*(\oM_{g(v),k(v)},\mathbb{C})$. That is, a map from $e_{\alpha_1}\otimes\cdots\otimes e_{\alpha_k}$ to the following class. We decorate by $R^{-1}(\psi) e_{\alpha_i}$ the leaf labeled by $i$. We decorate each dilaton leaf by $-\psi(\Id-R^{-1}(\psi))e_1$. We decorate each edge by 
\begin{equation}
\left(\frac{\Id\otimes\Id - R^{-1}(\psi')\otimes R^{-1}(\psi'')}{\psi'+\psi''}\right)\eta^{\alpha\beta}e_\alpha\otimes e_\beta,
\end{equation}
where by $\psi'$ and $\psi''$ we denote the $\psi$-classes associated with the marked points that correspond to the ends of the edge. Each vertex $v$ is decorated by $\alpha_{g(v),k(v)}$ considered as an element of $(V^{*})^{\otimes k(v)}\otimes H^*(\oM_{g(v),k(v)},\mathbb{C})$. We contract the tensor product of the vectors corresponding to edges and leaves with the tensor product of covectors corresponding to the vertices according to the graph. This gives us a class $\alpha_\Gamma$ in $\otimes_{v\in V} H^*(\oM_{g(v),k(v)},\mathbb{C})$.

By definition, the class $\alpha'_{g,k}$ is given by $\sum_\Gamma (f_\Gamma)_* \alpha_\Gamma$, where the sum is taken over all stable graphs of genus $g$ with $k$ leaves. Though there is an infinite number of graphs like that, one can check that only a finite number of them can contribute to this sum for dimensional reasons. It is indeed a group action on CohFTs, see e.~g.~\cite{PandPixZvo}. 

There is a canonical way to associate a CohFT to a semi-simple point of a Frobenius manifold. Namely, we associate to a point $b\in B^{ss}$ of a Frobenius manifold the topological field theory $\{\alpha_{g,k}\}$ with values in $(T_bB, \eta|_b)$. The equations for the flat sections of the connection $\tilde{\nabla}$ has essential singularity at $z=\infty$. The asymp\-totic fundamental solution near $z=\infty$ can be represented in a neighborhood of $b$ as $\Psi^{-1} R(z^{-1}) e^{zU}$, where all involved matrices are functions on $B^{ss}$, and the matrix $R$ satisfies all properties required in the definition of the group action. We can construct a CohFT applying the group element $R(z)|_b$ to the topological field theory on $(T_bB, \eta|_b)$.

\subsection{Dubrovin's superpotential}
\label{sec:DubrovinsSuperpotential}

In this Section we recall a construction of a particular Landau-Ginzburg superpotential due to Dubrovin~\cite{Dub98}.  

Given a manifold $M$ equipped with a flat metric, a locally defined function $t$ is a {\em flat coordinate} at $p\in M$, if 
\begin{itemize}
\item[(i)] $dt(p)\neq 0$ and 
\item[(ii)] $dt$ is covariantly constant with respect to to the Levi-Civita connection.
\end{itemize}
Condition (i) guarantees that $t$ is a local coordinate, i.e. we can find a coordinate system $(t^1,...,t^n)$ with $t^1=t$ and an open neighbourhood $B\subset M$ of $p$ such that $(t^1,...,t^n):B\to B_0\subset\R^n$ is a homeomorphism onto an open set $B_0$ of $\R^n$.  Condition (ii), which uses the induced connection on the cotangent bundle, guarantees that $(t^1,...,t^n)$ can be chosen so that the metric is represented by a constant matrix with respect to $(t^1,...,t^n)$.

We now consider a flat coordinate $\rho(\lambda,u)$ with respect to the pencil of metrics $\eta'-\lambda\eta$.  We study covariant constancy of $d\rho$ 
via its gradient vector field $\phi(\lambda,u)=\nabla \rho(\lambda,u)$ defined by
$$ \left(\eta'-\lambda\eta\right) (\phi,\cdot)=d\rho.
$$
The Levi-Civita connection of $\eta'$ with respect to flat coordinates (for $\eta$) is given in \cite[Equation~(5.5)]{Dub98}).  This leads to the following system of equations for vector fields $\phi$ expressed in canonical coordinates on a Frobenius manifold (the extended Gauss-Manin system~\cite[Equations~(5.31) and~(5.32)]{Dub98}):
\begin{equation}\label{eq:Gauss-Manin-system-5-31-32}d\phi = -(U-\lambda)^{-1} d(U-\lambda) \left(\frac{1}{2} +V\right) \phi + d\Psi\cdot\Psi^{-1} \phi.\end{equation}
Here $d=d_\lambda+d_u$ is the total de Rham differential; $U=\diag(u_1,...,u_n)$ 
and $V$ and $\Psi$ are naturally associated to a Frobenius manifold as defined in Section~\ref{sec:Recollection-FrobeniusManifolds}.  Abusing notation, we use $\lambda$ for the matrix of multiplication by $\lambda$.  So \eqref{eq:Gauss-Manin-system-5-31-32} encodes the system of PDEs giving covariant constancy of $\phi(\lambda,u)=\nabla \rho(\lambda,u)$ in directions $\partial/\partial\lambda,\partial/\partial u_i$.


One can retrieve $\rho$ from its gradient vector field via
\beq \label{gradtocoord}
\rho(\lambda,u)=\frac{\sqrt 2}{1-d} \phi^T (U-\lambda) \Psi \un.
\eeq
This is proved in~\cite[Section 2]{DubAlmost}. 



This equation has poles at $\lambda=u_1,\dots,u_n$ on the $\lambda$-plane, so we choose parallel cuts $L_1,\dots,L_n$ from the points $u_i$ to infinity (we assume that $u_j\not\in L_i$ for $i\not= j$). On $\C\setminus \cup_{i=1}^n L_i$ we choose branches of functions $\sqrt{u_i-\lambda}$, $i=1,\dots,n$. We denote by 
$\refl_i$ the
monodromy of the space of solutions of Equation~\eqref{eq:Gauss-Manin-system-5-31-32} corresponding to following a small loop around $u_1$.

Dubrovin proves that there exist a unique system of solutions  $\phi^{(1)},\dots,\phi^{(n)}$ to equation \eqref{eq:Gauss-Manin-system-5-31-32} satisfying the following properties:
\begin{align} \label{eq:DubrovinProp-1}
& \refl_j\phi^{(j)}=-\phi^{(j)}, & j=1,\dots, n; \\
& \phi^{(j)}_j = \frac{1}{\sqrt{u_j-\lambda}} + O(\sqrt{u_j-\lambda})\ \text{for}\ \lambda\to u_j, & j=1,\dots, n; \\
& \phi^{(j)}_a = \sqrt{u_j-\lambda}\cdot O(1) \ \text{for}\ \lambda\to u_j, & a\not=j;\; a,j=1,\dots, n;\\
& \refl_j\phi^{(i)}=\phi^{(i)}-2G^{ij} \phi^{(j)}, & i,j=1,\dots,n;
\label{eq:DubrovinProp-4}
\end{align}
where $G^{ij} := (\phi^{(i)})^T (U-\lambda) \phi^{(j)}$ is a bilinear form that doesn't depend on $\lambda$ and $u_1,\dots,u_n$.

Assume that $G^{ij}$ is non-degenerate and denote by $G_{ij}$ the inverse matrix.  Note that non-degeneracy of $G^{ij}$ is a property of the Frobenius manifold $M$  which holds generically.  In fact the proof of Theorem~\ref{thm:right-function-y} does not require the non-degeneracy of $G^{ij}$---see Remark~\ref{rem:more-general-phi}.  Consider a special solution of Equation~\eqref{eq:Gauss-Manin-system-5-31-32} given by $\phi:=\sum_{i,j=1}^n G_{ij} \phi^{(j)}$. The main property of this solution is that $\phi$ has the local behavior 
\begin{align} \label{eq:phi-locally-eq1}
& \phi_j = \frac{1}{\sqrt{u_j-\lambda}} + O(1)\ \text{for}\ \lambda\to u_j, & j=1,\dots,n;\\
\label{eq:phi-locally-eq2}
& \phi_a = \sqrt{u_j-\lambda}\cdot O(1)\ \text{for}\ \lambda\to u_j, & a\not=j; a,j=1,\dots,n.
\end{align}

We consider the function $p=p(\lambda,u)$ given by the formula 
\begin{equation}\label{eq:Superpotential-Formula}
p(\lambda,u) := \frac{\sqrt 2}{1-d} \phi^T (U-\lambda) \Psi \un .
\end{equation}
This function is analytic in $\C\setminus \cup_{i=1}^n L_i$, with a regular singularity at infinity, and its local behavior for $\lambda\to u_i$ is given by 
\begin{equation}\label{eq:p-lambda-locally}
p(\lambda,u) = p(u_i,u)+ \Psi_{i,\un} \sqrt{2(u_i-\lambda)}+O(u_i-\lambda),\qquad i=1,\dots,n.
\end{equation}

The 1-form $d_\lambda p$ has at most a finite number of zeros. We denote them by $r_1,\dots,r_N$ and we assume that they do not belong to the cuts $L_i$, $i=1,\dots,n$. Let $D$ be the image of $\C\setminus \cup_{i=1}^n L_i$ under the map $p(\lambda,u)$. This domain has a boundary given by the unfolding of the cuts $L_i$, $i=1,\dots,n$. The inverse function $\lambda=\lambda(p,u)$ is a multivalued function on $D$. Consider the points $p(r_c,u)$, $c=1,\dots,N$. We glue a finite number of copies of $D$ along the cuts from the points $p(r_c,u)$ to infinity, $c=1,\dots,N$. In this way we obtain a domain $\hat D$, where the function $\lambda$ is single-valued. 

We analytically continue the function $\lambda$ on $\hat D$ beyond the boundary. This procedure is not unique; for instance, we can glue several copies of $\hat D$ along the boundaries that are the images of the same cuts on the $\lambda$-plane. In any case, we can perform this construction uniformly over a small ball in the space of parameters $u_1,\dots,u_n$. This way we obtain a (not necessarily compact) Riemann surface $\mathcal{D}$, with a function $\lambda=\lambda(\tilde p, u)\colon \mathcal{D}\to\mathbb{C}$ (by $\tilde p$ we denote some local coordinate on $\mathcal{D}$).

Dubrovin proves in~\cite{Dub98} that the family of functions $\lambda(\tilde p, u)$ defined this way is a superpotential of the Frobenius manifold which was the input of this construction.

\subsection{Spectral curve topological recursion} \label{sec:TR} In this Section, we recall the basic set-up of the topological recursion procedure, which originated in the computation of the correlation functions of matrix models~\cite{EO,Ey}.

Consider a Riemann surface $\Sigma$ with meromorphic functions $x,y\colon \Sigma\to \mathbb{C}$ such that $x$ has a finite number of critical points, $c_1,\dots,c_n$, and $y$ is holomorphic near these points with a non-vanishing derivative. Let $B$ be a symmetric bi-differential on $\Sigma\times\Sigma$, with a double pole on the diagonal, the double residue equal to $1$, and no further singularities. 

We define a sequence of symmetric $n$-forms $\omega_{g,k}(z_1,\dots,z_k)$ on $\Sigma^{\times k}$, known as {\em correlation differentials} for the spectral curve, by the following recursion:
\begin{align}
& \omega_{0,1}(z):=y(z) dx(z); \\
& \omega_{0,2}(z_1,z_2):=B(z_1,z_2);\\ \label{eq:TopologicalRecursion-1stTime}
& \omega_{g,k+1}(z_0,z_1,\dots,z_k) :=  \\ \notag
& \sum_{i=1}^n \Res_{z\to c_i} 
\frac{\int_z^{\sigma_i(z)}\omega_{0,2}(\bullet,z_0)}{2(\omega_{0,1}(\sigma_i(z))-\omega_{0,1}(z))}
\tilde\omega_{g,2|k}(z,\sigma_i(z)|z_1,\dots,z_k),
\end{align}
where $\sigma_i$ is the deck transformation for the function $x$ near the point $c_i$, $i=1,\dots,n$, and $\tilde\omega_{g,2|k}$ is defined by the following formula:
\begin{align}
\tilde\omega_{g,2|k}(z',z''|z_1,\dots,z_k):=& \omega_{g-1,n+2}(z',z'',z_1,\dots,z_k)+
\\ \notag &
\sum_{\substack{g_1+g_2=g\\ I_1\sqcup I_2 = \{1,\dots,k\} \\ 2g_1-1+|I_1|\geq 0 \\ 2g_2-1+|I_2|\geq 0}}
\omega_{g_1,|I_1|+1}(z',z_{I_1})\omega_{g_2,|I_2|+1}(z'',z_{I_2}).
\end{align}
Here we denote by $z_I$ the sequence $z_{i_1},\dots,z_{i_{|I|}}$ for $I=\{i_1,\dots,i_{|I|}\}$.

\begin{remark}  \label{y=prim}
In the global recursion we also allow $y$ to be the (multivalued) primitive of a differential $\omega$ on $\Sigma$.   The ambiguity in $y$ consists of periods and residues of $\omega$ and hence the ambiguity is locally constant.  Since $y$ appears in the recursion only via $y(\sigma_i(z))-y(z)$ (and there are no poles of $\omega$ at the zeros of $dx$) the locally constant ambiguity disappears and the recursion is well-defined.
\end{remark}

\begin{remark}
A local version of the recursion was  defined in \cite{EynardIntersection} as follows.  Consider some small neighborhoods $U_i\subset \Sigma$ of the points $c_i$. If we look at just the restrictions of $\omega_{g,k}$ to the products of these disks, $U_{i_1}\times \cdots\times U_{i_k}$, we can still proceed by topological recursion, using as an input the restrictions of $\omega_{0,1}$ to $U_i$, $i=1,\dots,n$, and $\omega_{0,2}$ to $U_i\times U_j$, $i,j=1,\dots,n$. Indeed, Equation~\eqref{eq:TopologicalRecursion-1stTime} uses only local data for the recursion. 
\end{remark}

\begin{remark}
	There is a variation of the usual (global) topological recursion that will also be important in this paper, especially in Section~\ref{sec:general-theory}. Namely, we can assume that there is more than one critical point in the fiber of the function $x$ over a critical value $u_i$. Then we require that the local behavior of the function $x$ near these points is the same (that is, the Hessians are the same), and in this case it is still possible to define a version of topological recursion, see Section~\ref{sec:general-theory}. Note that this more general critical behavior of the function $x$ is exactly the one that is allowed for the function $\lambda$ in the definition of the Landau-Ginzburg superpotential of a Frobenius manifold in Section~\ref{sec:Recollection-FrobeniusManifolds}.
\end{remark}

\subsection{Spectral curve topological recursion via CohFTs}\label{sec:TRCohFT} In this Section we recall a relation of the (local version of) spectral curve topological recursion to the Givental formulae for cohomological field theories obtained in~\cite{DOSS12}. A more convenient exposition is given in~\cite{LPSZ}, so we follow the presentation given there.

We choose the local coordinates $w_i$ in the domains $U_i$ such that $x|_{U_i}=-w_i^2/2+x(c_i)$, $i=1,\dots,n$.  The identification with the data of a CohFT then goes as follows:
\begin{align} \label{eq:Identification-ConstantsDelta}
&\Delta_i^{-\frac 12} = \frac{dy}{dw_i}(0); \\
\label{eq:Identification-MatrixR}
& R^{-1}(\zeta^{-1})^j_i = -\frac{1}{\sqrt{2\pi\zeta}}\int_{-\infty}^{\infty} \left. \frac{B(w_i,w_j)}{dw_i} \right|_{w_i=0} \cdot e^{(x(w_j)-x(c_j))\zeta}; \\
\label{eq:Identification-Dilaton}
& \sum_{k=1}^n (R^{-1}(\zeta^{-1}))^i_k \Delta_k^{-\frac 12} = \frac{\sqrt\zeta}{\sqrt{2\pi}}\int_{-\infty}^\infty dy(w_i)\cdot e^{(x(w_i)-x(c_i))\zeta}.
\end{align}
Note that Equation~\eqref{eq:Identification-ConstantsDelta} is in fact a consequence of Equation~\eqref{eq:Identification-Dilaton}.

There is an extra condition on the bi-differential $B$ that can be formulated as a requirement on decomposition of its Laplace transform as 
\begin{align} \label{eq:decomposition-doubleLaplace-B}
 & \frac{\sqrt{\zeta_1\zeta_2}}{{2\pi}}\int_{-\infty}^{\infty} \int_{-\infty}^{\infty} {B(w_i,w_j)} e^{(x(w_i)-x(c_i))\zeta_1+(x(w_j)-x(c_j))\zeta_2}
 \\ \notag
 & =
 \frac{\sum_{k=1}^n R^{-1}(\zeta_1^{-1})^i_k
 	R^{-1}(\zeta_2^{-1})^j_k}{\zeta_1^{-1}+\zeta_2^{-1}} .
\end{align}
This assumption is always satisfied if the curve is compact and the differential $dx$ is meromorphic.   This uses a general finite decomposition for $B(p,q)$ proven by Eynard in Appendix B of \cite{EynardIntersection} together with \eqref{eq:Identification-MatrixR}.


This data (the constants $\Delta_i^{-\frac 12}$ and the matrix $R^{-1}(\zeta^{-1})^j_i$) determine for us a semi-simple CohFT $\{\alpha_{g,k}\}$ with an $n$-dimensional space of primary fields $V:=\langle e_1,\dots,e_n \rangle$. The differentials $\omega_{g,k}$ can be written in terms of the auxiliary functions
\begin{equation}
\xi^i(z):=\int^z \left. \frac{B(w_i,\bullet)}{dw_i}\right|_{w_i=0}
\end{equation}
as
\begin{equation}\label{eq:CohFT-SpectralCurve-match}
\omega_{g,k}=\sum_{\substack{i_1,\dots,i_k \\ d_1,\dots,d_k}} \int_{\overline{\mathcal{M}}_{g,k}} \alpha_{g,k}(e_{i_1},\dots,e_{i_k}) \prod_{j=1}^k
\psi_j^{d_j} d\left(\left(\frac{d}{dx}\right)^{d_j} \xi^{i_j}\right).
\end{equation}
(These kind of formulas are typically of ELSV-type, see~\cite{DLPS15} for explanation.)
In terms of the underlying Frobenius manifold structure, the basis $e_1,\dots,e_n$ corresponds to the normalized canonical basis.



\section{Superpotential and function $y$} \label{sec:ycoinc}

The goal of this Section is to prove that Dubrovin's superpotential provides us with a Riemann surface with two functions, $x:=\lambda$ and $y:=p$, such that the local expansion of $y$ near the critical points of $x$ reproduces the unit vector at the point $(u_1,\dots,u_n)$ of the underlying Frobenius manifold as well as the value of the matrix $R^{-1}$ on the unit vector.  These two local properties of $y$ are precisely equivalent to the equations~\eqref{eq:Identification-ConstantsDelta} and~\eqref{eq:Identification-Dilaton}.

Consider Dubrovin's construction of a superpotential on the Riemann surface $\mathcal{D}$ described in Section~\ref{sec:DubrovinsSuperpotential}. It is associated to a Frobenius manifold with given constants $\Delta_i^{-\frac 12}$ and the matrix $R^{-1}(\zeta^{-1})^j_i$ at the point with canonical coordinates $u_1,\dots,u_n$. Consider the points $c_i=p(u_i,u)\in\mathcal{D}$. These points are the critical points of the function $x:=\lambda$. 

\begin{theorem}\label{thm:right-function-y}
Given a semi-simple Frobenius manifold $M$, and Dubrovin's construction of a superpotential $\mathcal{D}$ for $M$, define spectral curve data by $\Sigma=\mathcal{D}$, $x:=\lambda$, $y:=p$ (with $B$ yet to be defined).  Then equations~\eqref{eq:Identification-ConstantsDelta} and~\eqref{eq:Identification-Dilaton} are satisfied for the 
	constants $\Delta_i^{-\frac 12}$ and the matrix $R^{-1}(\zeta^{-1})^j_i$ associated to $M$.
\end{theorem}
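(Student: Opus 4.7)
The plan is to verify the two equations separately. For \eqref{eq:Identification-ConstantsDelta}, I would introduce the standard local coordinate $w_i$ on $\mathcal{D}$ near $c_i$ by $\lambda = u_i - w_i^2/2$, choosing the branch so that $w_i = \sqrt{2(u_i - \lambda)}$. Substituting into the local expansion \eqref{eq:p-lambda-locally} immediately gives $p(w_i) = p(c_i) + \Psi_{i,\un}\, w_i + O(w_i^2)$, whence $(dy/dw_i)(0) = \Psi_{i,\un} = \Psi_{i,1} = \Delta_i^{-1/2}$ by the conventions of Section~\ref{sec:Recollection-FrobeniusManifolds}, which is exactly \eqref{eq:Identification-ConstantsDelta}.

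For \eqref{eq:Identification-Dilaton}, in the same coordinate $w_i$ we have $x(w_i)-x(c_i) = -w_i^2/2$, so the right-hand side becomes the Gaussian integral $\tfrac{\sqrt\zeta}{\sqrt{2\pi}}\int_{-\infty}^\infty dp(w_i)\,e^{-w_i^2\zeta/2}$. Expanding $p$ as a power series in $w_i$ near $c_i$ (which exists thanks to the analyticity of $p$ in $\sqrt{u_i - \lambda}$) and integrating termwise, using the Gaussian moments $\int w^{2k}e^{-w^2\zeta/2}\,dw = \sqrt{2\pi/\zeta}\,(2k-1)!!\,\zeta^{-k}$, produces a formal series in $\zeta^{-1}$ whose coefficients are the even Taylor coefficients of $dp(w_i)$ weighted by double factorials. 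It then suffices to identify this series with the $i$-th component of $R^{-1}(\zeta^{-1})\Psi\un$.

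This identification I would obtain by appealing to Dubrovin's paper~\cite{Dub98}. By \eqref{eq:Superpotential-Formula} the Taylor coefficients of $p$ at $c_i$ are encoded in the local expansion of the vector-valued function $\phi = \sum_{i,j}G_{ij}\phi^{(j)}$, which solves the extended Gauss-Manin system \eqref{eq:Gauss-Manin-system-5-31-32}. Dubrovin computes the formal Laplace transforms of the $\phi^{(j)}$'s at the critical values $u_i$ and shows that they reproduce the asymptotic fundamental solution $\Psi^{-1}R(z^{-1})e^{zU}$ of $\tilde\nabla$ at $z = \infty$. The specific combination $\sum G_{ij}\phi^{(j)}$ is precisely the one that singles out the unit vector on the Frobenius manifold, so extracting the $i$-th component of the Laplace transform of $p$ yields exactly $\sum_k (R^{-1}(\zeta^{-1}))^i_k \Delta_k^{-1/2}$.

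I expect the main obstacle to be the careful bookkeeping in this Laplace transform. The components $\phi^{(j)}_a$ have asymmetric behavior at $\lambda = u_j$ (a $1/\sqrt{u_j-\lambda}$ singularity for $a = j$ and $\sqrt{u_j-\lambda}$ vanishing for $a \neq j$), and tracking how the factor $(U-\lambda)\Psi\un$ in the definition of $p$, the overall normalization $\sqrt 2/(1-d)$, and the bilinear form $G_{ij}$ combine to produce precisely the correct entries of $R^{-1}$ requires the explicit formulas of~\cite{Dub98}. No algebraic input beyond Dubrovin's construction should be needed to complete the proof.
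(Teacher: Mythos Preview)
Your proposal is correct and follows essentially the same route as the paper. The one place where the paper is sharper than your sketch is the ``bookkeeping'' you anticipate: rather than tracking Taylor coefficients of $p$ directly, the paper differentiates \eqref{eq:Superpotential-Formula} in $\lambda$, using the Gauss--Manin equation \eqref{eq:Gauss-Manin-system-5-31-32} together with $\mu\un = -(d/2)\un$, to obtain the clean identity $dp/d\lambda = -\tfrac{1}{\sqrt 2}\,\phi^T\Psi\un$ (this is exactly where the normalization $\sqrt 2/(1-d)$ cancels). The Laplace transform of $dp$ then reduces immediately to Dubrovin's computation of the Laplace transform of $\phi^{(i)}_k$, with no further manipulation. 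One small correction to your narrative: it is not the combination $\sum_{j} G_{ij}\phi^{(j)}$ that ``singles out the unit vector'' --- the unit vector enters through the factor $\Psi\un$ in \eqref{eq:Superpotential-Formula}. The role of the $G_{ij}$ combination is only to guarantee the local behavior \eqref{eq:phi-locally-eq1}--\eqref{eq:phi-locally-eq2}, so that near $u_i$ one has $\phi = \phi^{(i)} + (\text{holomorphic in }u_i-\lambda)$ and the holomorphic piece drops out of the Laplace integral by parity. This is precisely the content of Remark~\ref{rem:more-general-phi}.
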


\begin{proof} Let us prove the first statement, namely, Equation~\eqref{eq:Identification-ConstantsDelta} (though it is a corollary of Equation~\eqref{eq:Identification-Dilaton}, it is convenient to check it directly). Indeed, Equation~\eqref{eq:p-lambda-locally} states that near the points $c_i$ the function $p$ looks like 
$$
p=c_i+\Psi_{i,\un}(u) \sqrt{2(u_j-\lambda)} + O(u_j-\lambda).
$$
Therefore, the derivative of $p$ with respect to the local coordinate $w_i=\sqrt{2(u_i-\lambda)}$ at the point $c_i$  is equal to $\Psi_{i,\un}(u)=\Delta_i^{-\frac 12}$.  

Now we prove Equation~\eqref{eq:Identification-Dilaton}. We can assume that the contour of integration on the right hand side in Equation~\eqref{eq:Identification-Dilaton} is the image of $L_i$ under the map $p$. Then, 
\begin{equation}
\frac{\sqrt\zeta}{\sqrt{2\pi}}\int_{p(L_i)} dp\cdot e^{(\lambda-u_i)\zeta} = \frac{\sqrt\zeta}{\sqrt{2\pi}}\int_{p(L_i)} \frac{dp}{d\lambda}\cdot e^{(\lambda-u_i)\zeta}d\lambda.
\end{equation}
Here we treat $dp$ and $d\lambda$ as 1-forms defined on the surface $\mathcal{D}$.

Observe that from equation \eqref{eq:Gauss-Manin-system-5-31-32} we have
\begin{equation}
\frac{d\phi^T}{d\lambda} = \phi^T \left(\frac 12 - V\right)(U-\lambda)^{-1} .
\end{equation}
Therefore, using definition \eqref{eq:Superpotential-Formula}, we get
\begin{align}
\label{eq:dpdlambda}
\frac{dp}{d\lambda} & = \frac{d}{d\lambda} \frac{\sqrt 2}{1-d} \phi^T (U-\lambda) \Psi \un =  \frac{\sqrt 2}{1-d} \phi^T \left(\frac 12 - V\right) \Psi\un -
\frac{\sqrt 2}{1-d} \phi^T \Psi\un \\ \notag
& = \frac{\sqrt 2}{1-d} \phi^T \Psi \Psi^{-1} \left(-\frac 12 - V\right) \Psi\un = \frac{\sqrt 2}{1-d} \phi^T \Psi \left(-\frac 12 - \mu\right) \un 
\\ \notag 
& =
 \frac{\sqrt 2}{1-d} \phi^T \Psi \left(-\frac 12 + \frac d2 \right) \un = -\frac 1{\sqrt{2}}\phi^T\Psi\un.
\end{align}
(In this computation we used 
the fact that $\mu\un=(-d/2)\un$.)

Equations~\eqref{eq:phi-locally-eq1} and~\eqref{eq:phi-locally-eq2} imply that on the contour $p(L_i)$ the vector $\phi$ is equal to $\phi^{(i)}+E_i$, where $E_i$ is some holomorphic function of $(u_i-\lambda)$. Recall also that $(\Psi\un)_k=\Delta_k^{-\frac 12}$. Therefore,
\begin{equation}\label{eq:Thm-y-laststep}
\frac{\sqrt\zeta}{\sqrt{2\pi}}\int_{p(L_i)} \frac{dp}{d\lambda}\cdot e^{(\lambda-u_i)\zeta}d\lambda = \sum_{k=1}^n \Delta_k^{-\frac 12} \cdot \frac{-\sqrt\zeta}{2\sqrt{\pi}}\int_{p(L_i)} \phi^{(i)}_k\cdot e^{(\lambda-u_i)\zeta}d\lambda.
\end{equation}
Dubrovin shows in \cite[Proof of Lemma 5.4]{Dub98} that the second factor in this expression is $(R^{-1}(\zeta^{-1}))^i_k$. Thus the right hand side of Equation~\eqref{eq:Thm-y-laststep} coincides with the left hand side of Equation~\eqref{eq:Identification-Dilaton}. This completes the proof of the Theorem.
\end{proof}

\begin{remark}\label{rem:more-general-phi}
	Note that we have not used the specific formula for $\phi$ in the proof. We used only Equation~\eqref{eq:Gauss-Manin-system-5-31-32} and the fact that the local expansion of $\phi$ for $\lambda\to u_i$ coincides with the local expansion of $\phi^{(i)}$ up to some holomorphic non-branching term. Thus, if we have a solution for \eqref{eq:Gauss-Manin-system-5-31-32} satisfying this property, we can use it directly in the formula for the superpotential~\eqref{eq:Superpotential-Formula}, bypassing the requirement for $G^{ij}$ to be non-degenerate. This will be important below in certain applications.
\end{remark}

\begin{remark} {\em Flat identity.}
Topological recursion satisfies the string equation.
\begin{equation}  \label{string}
\sum_{i=1}^n \res_{p=c_i}y(p)\omega_{g,k+1}(p,p_1,...,p_k)=-\sum_{j=1}^k d_{p_j}{\partial z_j}\left(\frac{\omega_{g,k}(p_1,...,p_k)}{dx(p_j)}\right)
\end{equation}
where the sum is over the zeros $dx(c_i)=0$ and $d_{p_j}$ is exterior derivative in the variables $p_j$.  The operator $\omega\mapsto\sum_i\mathrm{Res}_{p=c_i}y(p)\omega(p)$ acts on differentials $\omega$.  It is non-zero (and evaluates to 1) on the auxiliary differential $\sum_ja_j d \xi^j$ corresponding to the flat identity and annihilates all others.  In particular
$$
\sum_i\res_{p=c_i}d\left(\left(\frac{d}{dx}\right)^{d_j} \xi^{i_j}\right)=0,\quad d_j>0.
$$
This corresponds to insertion/removal of the identity vector in ancestor invariants.
\end{remark}



\section{Compatibility between $B$ and $y$} \label{sec:Rcompat}

In this section we discuss a necessary condition on a spectral curve to be able to apply the inverse construction of \cite{DOSS12}, i.e. so that a CohFT can be reconstructed from this spectral curve.

More precisely, for a given data of a spectral curve $(\Sigma,x,y,B)$ (maybe, local) Equations~\eqref{eq:Identification-MatrixR} and~\eqref{eq:Identification-Dilaton}, \eqref{eq:Identification-ConstantsDelta} imply some relation for $x$, $y$, and $B$, and we want to state this relation in a direct geometric way rather than in terms of the Laplace transform.

	The compatibility condition below is equivalent to differentiation of the potential of a CohFT by $t_1$ producing the string equation.  In the language of \cite{EO}, $\delta(ydx)=\int (dy/dx)(p')\, B(p,p')= d(dy/dx)$ gives rise to variations of $\omega_{g,k}$ corresponding to the string equation \eqref{string}.

Recall that $x$ defines a local involution $\sigma_i$ near each zero $c_i$ of $dx$, $i=1,\dots,n$.

\begin{theorem}
	\label{thm:CompatibilityTest} 
If a CohFT can be reconstructed from a spectral curve $(\Sigma,x,y,B)$ via the inverse construction of \cite{DOSS12} described in Section~\ref{sec:TRCohFT}, then the 1-form on $\Sigma$
\begin{equation}  \label{dosstest}
\eta(z)=d\left(\frac{dy}{dx}(z)\right)+\sum_{i=1}^n \res_{z'=c_i}\frac{dy}{dx}(z')B(z,z').
\end{equation}
is invariant under each local involution $\sigma_i$, $i=1,\dots,n$.  \end{theorem}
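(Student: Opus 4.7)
The plan is to reduce the problem to a local statement at each critical point $c_i$ and verify it via the CohFT reconstruction relations \eqref{eq:Identification-MatrixR} and \eqref{eq:Identification-Dilaton}. Choose the local coordinate $w_i$ on $\Sigma$ near $c_i$ with $x=-w_i^2/2+x(c_i)$, so that $\sigma_i$ acts by $w_i\mapsto -w_i$ and invariance of a $1$-form $f\,dw_i$ under $\sigma_i$ is equivalent to $f$ being an odd function of $w_i$; thus it suffices to show that every even Taylor coefficient of $\eta/dw_i$ at $w_i=0$ vanishes. Writing $y(w_i)=y(c_i)+\sum_{k\geq 1}y_{i,k}w_i^k$, one has $dy/dx=-y_{i,1}/w_i-2y_{i,2}-3y_{i,3}w_i-\cdots$, so $dy/dx$ has simple poles only at the $c_j$ with residues $-y_{j,1}=-\Delta_j^{-1/2}$, and the residue sum collapses to $-\sum_j \Delta_j^{-1/2}\tilde B_0^j(z)$ where $\tilde B_0^j(z):=B(z,w_j')/dw_j'|_{w_j'=0}$. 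A comparison of polar parts shows that the double pole $y_{i,1}/w_i^2\,dw_i$ of $d(dy/dx)$ is exactly cancelled by $-y_{i,1}/w_i^2\,dw_i$ coming from the diagonal singularity of $B$ inside the $j=i$ term, so $\eta$ is holomorphic at every $c_i$.

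Denote by $\tilde b^{ij}(w_i,w_j)$ the Taylor function of the regular part of $B$ in the local coordinates around $c_i\times c_j$ (i.e.\ $B$ minus $dw_i\,dw_j/(w_i-w_j)^2$ when $i=j$), with Taylor coefficients $\tilde b^{ij}_{k,l}$; the symmetry of $B$ gives $\tilde b^{ij}_{2l,0}=\tilde b^{ji}_{0,2l}$. A routine Taylor expansion of $\eta/dw_i$ near $c_i$ yields
\begin{equation*}
[w_i^{2l}](\eta/dw_i)=-(2l+1)(2l+3)\,y_{i,2l+3}-\sum_j y_{j,1}\,\tilde b^{ij}_{2l,0},\qquad l\geq 0.
\end{equation*}
Evaluating the Gaussian integrals in \eqref{eq:Identification-MatrixR} and \eqref{eq:Identification-Dilaton} via $\int w^{2k}e^{-w^2\zeta/2}\,dw=(2k-1)!!\sqrt{2\pi/\zeta}\,\zeta^{-k}$, together with the analytic-continuation value $\int w^{-2}e^{-w^2\zeta/2}\,dw=-\sqrt{2\pi\zeta}$ responsible for the identity piece of $R^{-1}(0)$, one extracts the asymptotic expansions
\begin{equation*}
R^{-1}(\zeta^{-1})^j_i=\delta^j_i-\sum_{l\geq 0}\tilde b^{ij}_{0,2l}(2l-1)!!\,\zeta^{-l-1},\qquad \sum_k R^{-1}(\zeta^{-1})^i_k\,\Delta_k^{-1/2}=\sum_{l\geq 0}(2l+1)!!\,y_{i,2l+1}\,\zeta^{-l}.
\end{equation*}
Extracting the coefficient of $\zeta^{-l-1}$ in the second identity and substituting the first then yields $\sum_j y_{j,1}\tilde b^{ij}_{2l,0}=-(2l+1)(2l+3)y_{i,2l+3}$, which exactly cancels the first summand of $[w_i^{2l}](\eta/dw_i)$.

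The only delicate point is the treatment of the $1/w^2$ singularity in the Laplace transform, which is precisely the mechanism that produces the identity piece $\delta^j_i$ in the leading term of $R^{-1}$; this is a standard feature of the Laplace-transform formalism of \cite{DOSS12}. Note that the argument uses only \eqref{eq:Identification-MatrixR} and \eqref{eq:Identification-Dilaton} (together with the consequence \eqref{eq:Identification-ConstantsDelta}); the double-Laplace decomposition hypothesis \eqref{eq:decomposition-doubleLaplace-B} plays no role in this invariance statement.
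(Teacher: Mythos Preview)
Your argument is correct and follows essentially the same route as the paper's proof: both show that $\sigma_i$-invariance of $\eta$ is equivalent to the compatibility of \eqref{eq:Identification-MatrixR} and \eqref{eq:Identification-Dilaton} (with \eqref{eq:Identification-ConstantsDelta} as a consequence). The only difference is packaging: the paper observes that $\sigma_i$-invariance is equivalent to the vanishing of the Laplace transform $\int\eta(w_i)e^{(x(w_i)-x(c_i))\zeta}$, integrates $d(dy/dx)$ by parts, and directly recognizes the two resulting pieces as the two sides of \eqref{eq:Identification-Dilaton} via \eqref{eq:Identification-MatrixR}; you instead expand everything in Taylor series around $c_i$, compute the Gaussian moments explicitly, and match coefficients of $\zeta^{-l-1}$, which is the same identity read term by term. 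The paper's version is slightly more streamlined in that the $1/w_i^2$ regularization you flag never needs to be isolated (it is absorbed into $-\zeta\int dy\,e^{(x-x(c_i))\zeta}$ after integration by parts), but the content is identical.
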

	

\begin{proof} The construction of \cite{DOSS12} requires equations~\eqref{eq:Identification-ConstantsDelta}, \eqref{eq:Identification-MatrixR}, and~\eqref{eq:Identification-Dilaton} to hold.  We will prove that the 1-form \eqref{dosstest}  is invariant under each local involution $\sigma_i$, $i=1,\dots,n$ if and only if equations~\eqref{eq:Identification-ConstantsDelta}, \eqref{eq:Identification-MatrixR}, and~\eqref{eq:Identification-Dilaton} are compatible (as equations for the unknown variables $R^{-1}$ and $\Delta_i^{-\frac 12}$, $i=1,\dots,n$).

Recall that $x=x(c_i)-w_i^2/2$ in a neighborhood of $c_i$. Note that 
\begin{align}
& \res_{w_i=c_i}\frac{dy}{dx}(w_i)B(z,w_i) =
\res_{w_i=c_i}\frac{dy}{dw_i}(w_i)\cdot \frac{dw_i}{dx}\cdot  B(z,w_i) \\ \notag
& = -\res_{w_i=c_i} \frac{dy}{dw_i}(w_i)\cdot \frac{dw_i}{w_i}\cdot  \frac{B(z,w_i)}{dw_i}
= \frac{dy}{dw_i}(0) \cdot \left. \frac{B(z,w_i)}{dw_i} \right|_{w_i=0} .
\end{align}	
	An equivalent way to say that $\eta$ is $\sigma_i$-invariant is to say that the following Laplace transform of $\eta$ is equal to zero:
\begin{equation}\label{eq:invariance-eta}
\int_{-\infty}^\infty \eta(w_i) e^{(x(w_i)-x(c_i))\zeta} = 0 .
\end{equation}	
On the other hand,
\begin{align}
 \int_{-\infty}^\infty \eta(w_i) e^{(x(w_i)-x(c_i))\zeta} 
 = 
& -\zeta \int_{-\infty}^\infty \frac{dy}{dx}(w_i)
 e^{(x(w_i)-x(c_i))\zeta} dx
 \\ \notag
 &
- \sum_{j=1}^n \frac{dy}{dw_j}(0) \int_{-\infty}^\infty \left. \frac{B(w_i,w_j)}{dw_j} \right|_{w_j=0} e^{(x(w_i)-x(c_i))\zeta}.
\end{align}	
Thus, Equation~\eqref{eq:invariance-eta} is satisfied if and only if 
\begin{align}
&\frac{\sqrt\zeta}{\sqrt {2\pi}} \int_{-\infty}^\infty {dy}(w_i)
e^{(x(w_i)-x(c_i))\zeta}
\\ \notag
&= \sum_{j=1}^n \frac{dy}{dw_j}(0) \cdot \frac{-1}{\sqrt {2\pi\zeta}} \int_{-\infty}^\infty \left. \frac{B(w_i,w_j)}{dw_j} \right|_{w_j=0} e^{(x(w_i)-x(c_i))\zeta},
\end{align}	
which is precisely the compatibility condition for Equations~\eqref{eq:Identification-ConstantsDelta}, \eqref{eq:Identification-MatrixR}, and \eqref{eq:Identification-Dilaton}.
\end{proof}

We can state \eqref{dosstest} in simpler terms when the spectral curve is connected.
\begin{corollary}\label{cor:dosstestpullback}
For a connected spectral curve, equations~\eqref{eq:Identification-ConstantsDelta}, \eqref{eq:Identification-MatrixR}, and~\eqref{eq:Identification-Dilaton} are compatible if and only if the 1-form defined in \eqref{dosstest} is a pull-back of a 1-form downstairs, i.e. $\eta(z)=x^*\omega$. 
\end{corollary}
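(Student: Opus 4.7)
The plan is to treat the two directions of the equivalence separately. The ``if'' direction is essentially immediate: if $\eta = x^*\omega$, then using the defining property $x\circ\sigma_i = x$ of each local involution, I would compute $\sigma_i^*\eta = (x\circ\sigma_i)^*\omega = x^*\omega = \eta$, so $\eta$ is $\sigma_i$-invariant for every $i$, and Theorem~\ref{thm:CompatibilityTest} supplies the compatibility conclusion.

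For the ``only if'' direction, the first step is to apply Theorem~\ref{thm:CompatibilityTest} to deduce $\sigma_i$-invariance of $\eta$ for every $i$. The key reduction is to pass to the meromorphic function $f := \eta/dx$ on $\Sigma$: since $\sigma_i^* dx = dx$, the invariance transfers from $\eta$ to $f$. If I can show that $f$ descends through $x$, that is, $f = F\circ x$ for some meromorphic function $F$ on the target $T$ of $x$, then setting $\omega := F(t)\,dt$ yields $x^*\omega = (F\circ x)\,dx = f\,dx = \eta$, which is the desired form.

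The central task is therefore to show that $f$ is constant on fibers of $x$. Here the plan is to use monodromy on the unbranched cover $x : x^{-1}(T^*) \to T^*$, where $T^*$ is the complement of the finite set of critical values of $x$. Connectedness of $\Sigma$ (preserved after removing the finite preimage of the critical values) forces the monodromy action of $\pi_1(T^*)$ on each fiber to be transitive. Since $\pi_1(T^*)$ is generated by small loops around the critical values, and the monodromy around $x(c_i)$ acts on the scalar $f$ in exactly the same way as $\sigma_i$ does (conjugation by any transport path to the base point is irrelevant for a scalar-valued $f$), the $\sigma_i$-invariance of $f$ for each $i$ produces invariance under the full monodromy group. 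Transitivity then yields $f(p) = f(p')$ whenever $x(p) = x(p')$, so $F(t):=f(p)$ for any $p\in x^{-1}(t)$ is unambiguous and meromorphic on $T^*$. Finally, to extend $F$ across a critical value $x(c_i)$, I would use the local coordinate $w_i$ with $x = x(c_i) - w_i^2/2$: invariance under $w_i\mapsto -w_i$ makes $f$ an even meromorphic function of $w_i$, hence a meromorphic function of $w_i^2$, i.e.\ of $x$, giving a meromorphic extension of $F$ at $x(c_i)$.

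The main obstacle is the monodromy-propagation step: the invariances $\sigma_i^*\eta = \eta$ are purely local around the critical points, whereas what is required is global constancy of $f$ on fibers of $x$. The resolution uses two ingredients together---transitivity of the monodromy, which comes from connectedness of the spectral curve, and the fact that $f$ is a scalar, so that the ``local involution versus its conjugate by a path to the base point'' ambiguity is invisible to values of $f$. Modulo this step, the remainder of the argument (the reduction to $f = \eta/dx$, the choice of $\omega$, and the meromorphic extension at critical values) is entirely formal.
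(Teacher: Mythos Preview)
Your proof is correct and follows essentially the same approach as the paper; the reduction from the differential $\eta$ to the function $f=\eta/dx$ is a harmless cosmetic variant, since $dx$ is itself $\sigma_i$-invariant.

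One small point: your stated resolution of the propagation step---``conjugation by any transport path to the base point is irrelevant for a scalar-valued $f$''---is not quite the mechanism that does the work. Local $\sigma_i$-invariance only gives $f(q)=f(\sigma_i(q))$ for $q$ \emph{near} $c_i$; to conclude $f(p_j)=f(p_k)$ for the two points swapped by the monodromy in a distant fiber $x^{-1}(t_0)$ one must analytically continue $\sigma_i$ along the lifted path to an involution $\tilde\sigma_i$ of a tubular neighbourhood, and then invoke the identity theorem (the meromorphic functions $f$ and $f\circ\tilde\sigma_i$ agree on an open set near $c_i$, hence agree on the whole tube). The paper makes this continuation step explicit; once it is supplied, your monodromy argument goes through exactly as written.
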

\begin{proof}
If $\eta(z)=x^*\omega$ for $\omega$ a differential downstairs then it is invariant under local involutions hence Theorem~\ref{thm:CompatibilityTest} applies.  On a comnnected spectral curve $\Sigma$ the converse is also true.  This follows from the more general fact that any $\eta(z)$ which is invariant under local involutions defined around simple ramification points of $x:\Sigma\to\C$ is the pull-back of a differential downstairs.  Take any regular point of $x$ $p\in\Sigma$ and a path $\gamma$ from $p$ to a zero $b$ of $dx$.  Then $x(\gamma)$ is covered by a path $\tilde{\gamma}\subset\Sigma$ that contains $p$ and $p'$ where $x(p)=x(p')$.  The local involution defined by $x$ in a neighbourhood of $b$ can be analytically continued along $\tilde{\gamma}$.  Since $\eta(z)$ is invariant under the local involution at $b$, it is invariant under the continued involution above a neighbourhood of $x(\gamma)$.  So $\eta(z)$ agrees (via identification of cotangent bundles using $x$ ) around $p$ and $p'$.  Connectedness of $\Sigma$ guarantees that the monodromy of the cover defined by $x$ is transitive and generated by local involutions.  Hence we can find paths $\gamma_i$ that can be used to show that $\eta(z)$ agrees around $p$ and any point in the fibre over $x(p)$.  Hence $\eta(z)=x^*\omega$ locally and this pieces together to give the global result. The result isn't true on disconnected curves, in particular local curves, because monodromy is not transitive.
\end{proof}





Let us show how this compatibility test can be used. 

\begin{proposition}  \label{th:yprim}
The differential $\eta\equiv 0$, hence Equation~\eqref{eq:invariance-eta} is satisfied, when $\Sigma$ is a global curve equipped with a canonical bidifferential $B$ normalized so that $\int_{p'\in\alpha_i}B(p,p')=0$ for a choice of $A$-cycles $\alpha_i$, and one of the following holds:
\begin{enumerate}
\item $\Sigma$ is rational with global coordinate $z$ chosen so that $x(z=\infty)=\infty$; 
\item $dy$ is a meromorphic differential such that $\frac{dy}{dx}$ has poles only at the zeros of $dx$, for example $dy$ is a holomorphic differential.  \label{dyhol}
\end{enumerate}
\end{proposition}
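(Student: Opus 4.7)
The plan is to reduce $\eta(z)=0$ to the residue theorem applied to the meromorphic $1$-form (in $z'$, with $z$ treated as a parameter)
\[
\omega_z(z') := \frac{dy}{dx}(z')\, B(z,z').
\]
Since $B$ has a double pole with biresidue $1$ on the diagonal and is holomorphic elsewhere, the poles of $\omega_z$ in $z'$ are the diagonal point $z'=z$ together with the poles of the meromorphic function $f := dy/dx$ on $\Sigma$. A local Taylor expansion $f(z') = f(z) + f'(z)(z'-z)+O((z'-z)^2)$ against $B(z,z') = dz\,dz'/(z-z')^2 + (\text{holomorphic in }z')$ yields $\res_{z'=z}\omega_z = df(z)$. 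The residue theorem on the compact curve $\Sigma$ then gives the key identity
\[
df(z) = -\sum_{p\ \text{pole of}\ f} \res_{z'=p} f(z')\, B(z,z').
\]
This identity is in fact insensitive to the $A$-cycle normalization of $B$: any holomorphic bidifferential added to $B$ contributes a sum of residues (in $z'$, at the poles of $f$) of a meromorphic $1$-form in $z'$, which is itself zero by the residue theorem.

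Substituting $f = dy/dx$ and comparing with the definition \eqref{dosstest} of $\eta$ gives
\[
\eta(z) = -\sum_{\substack{p\ \text{pole of}\ dy/dx \\ p \notin \{c_1,\ldots,c_n\}}} \res_{z'=p}\, \frac{dy}{dx}(z')\, B(z,z').
\]
The proposition therefore reduces to showing that, in either case, every pole of $dy/dx$ on $\Sigma$ is located among the critical points $c_i$ of $x$.

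In case \eqref{dyhol} this is exactly the hypothesis, so $\eta\equiv 0$ is immediate. In case~(1), the only candidate pole of $dy/dx$ outside $\{c_1,\ldots,c_n\}\subset\mathbb{C}$ is $z=\infty$: the assumption $x(z=\infty)=\infty$ forces $dx$ to have a pole rather than a zero at $\infty$, so $\infty$ is not itself among the $c_i$, and the canonical bidifferential on $\mathbb{P}^1$ is $B(z,z')=dz\,dz'/(z-z')^2$, which is regular at $z'=\infty$ in the chart $w=1/z'$. Together with the fact that, in the relevant setup, the pole of $dx$ at infinity dominates the growth of $dy$ there, $dy/dx$ is regular at $\infty$ and contributes no extra residue.

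The main subtlety is the $z'=\infty$ contribution in case~(1), which requires controlling the behaviour of $dy/dx$ at infinity rather than just that of $dx$; case~\eqref{dyhol} sidesteps this point by encoding the required regularity directly in the hypothesis. Otherwise the whole proof is the one-line identity coming from the residue theorem.
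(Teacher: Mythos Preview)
Your approach is the same as the paper's: both hinge on the single identity $\res_{z'=z}f(z')B(z,z')=df(z)$ together with the residue theorem on the compact curve, applied to $f=dy/dx$. For case~(2) this gives $\eta\equiv 0$ immediately, and your write-up is in fact more careful than the paper's (you spell out the local expansion at the diagonal and note that the identity is insensitive to the $A$-cycle normalization, since adding a holomorphic bidifferential contributes a full sum of residues of a meromorphic $1$-form, hence zero).

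For case~(1) there is a genuine gap. Your claim that ``the only candidate pole of $dy/dx$ outside $\{c_1,\dots,c_n\}$ is $z=\infty$'' does not follow from the hypothesis: nothing in (1) forbids $y$ from having a finite pole at a point where $dx$ is regular, and such a pole would contribute an extra residue. Even granting that, the contribution at $\infty$ need not vanish: your own argument shows $B$ is regular at $z'=\infty$, so a pole of $dy/dx$ there \emph{does} produce a nonzero residue in general. You patch this with ``in the relevant setup, the pole of $dx$ at infinity dominates the growth of $dy$,'' but that is an assumption not contained in (1). The paper's proof has the same lacuna---its argument literally only uses the hypothesis of case~(2) (``since $dy/dx$ has poles only at the zeros of $dx$''), and the example immediately after the proposition ($x=z+1/z$, $y$ a polynomial of degree $\geq 2$) satisfies (1) yet has $\eta\not\equiv 0$. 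The intended use of (1) is the Dubrovin superpotential in genus~$0$, where $y=p$ \emph{is} the global coordinate~$z$; then $dy/dx=1/x'(z)$ has poles only at the zeros of $x'$ (and vanishes at every pole of $x$, including $\infty$), so (1) reduces to (2). You might say this explicitly rather than leaving it as ``the relevant setup.''
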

Note that in case \eqref{dyhol} above, we take $y$ to be the (multiply-defined) primitive of a differential which is sufficient for the purposes of topological recursion---see Remark~\eqref{y=prim}.
\begin{proof}
Recall the property that for any function $f$ on $\Sigma$, ${\rm Res}_{p'=p}f(p')B(p,p')=df(p)$ (independent of the choice of $A$-cycles along which $B$ is normalized).  For example, in the rational case $B=\frac{dzdz'}{(z-z')^2}$  and this property is the Cauchy integral formula.  Since $\frac{dy}{dx}$ has poles only at the zeros of $dx$
$$\sum_{i=1}^n \res_{p'=c_i}\frac{dy}{dx}(p')B(p,p')=-\res_{p'=p}\frac{dy}{dx}(p')B(p,p')=-d\left(\frac{dy}{dx}(p)\right)$$
hence $\eta\equiv 0$.
\end{proof}

\begin{example}
Consider $x=z+1/z$, $y=p(z)$ a polynomial.  Then $\frac{dy}{dx}=\frac{z^2p'(z)}{z^2-1}$ has poles at $z=\pm1$ and possibly $z=\infty$.  Hence $\eta(z)=dq(z)$ where $q(z)$ is a polynomial given by the principal part of $dy/dx$ at $z=\infty$.  A non-trivial polynomial has poles only at $z=\infty$ so if $\eta\neq 0$ it cannot be the pull-back of a differential form downstairs since it would necessarily require poles at $x^{-1}(\infty)=\{0,\infty\}$.  Hence this fails the compatibility test, unless $\eta(z)\equiv 0$ i.e. $\deg p(z)\leq 1$.  If $\deg p(z)=1$ then Equation~\eqref{eq:invariance-eta} is satisfied.
\end{example}

\begin{example}
Consider $x=z+1/z$, $y=\ln{z}$.  Then $\frac{dy}{dx}=\frac{z}{z^2-1}$ has poles only at $z=\pm1$ so Equation~\eqref{eq:invariance-eta} is satisfied.
\end{example}

\begin{example}
Since the compatibility test is a linear condition in $y$, $x=z+1/z$, $y=\ln z+cz$ also satisfies the compatibility test and leads to a CohFT with a flat unit. This was also observed in~\cite{FangLiuZong}.
\end{example}



\section{Superpotential as a global spectral curve in genus $0$ case}

\label{sec:genus0}

In this Section we discuss a special case of Dubrovin's superpotential defined in Section~\ref{sec:DubrovinsSuperpotential} and show that it indeed gives a proper spectral curve for the correponding cohomological field theory.

More precisely, we start with a homogeneous cohomological field theory. Its genus zero part without descendants defines a Frobenius manifold 
that we assume to be semi-simple. 
%
%
Consider Dubrovin's construction in Section~\ref{sec:DubrovinsSuperpotential}. Assume that this construction goes through in such a way that 
\begin{enumerate}
	\item The form $d_\lambda p$ has no zeros in $\mathbb{C}\setminus \cup_{i=1}^n L_i$ \label{cond1}; 
\item $\lambda(p=\infty)=\infty$;
	\item The resulting curve $\mathcal{D}$ is a compact curve of genus $0$ and $p$ is a global coordinate on it;
\item
There is exactly one critical point in each singular fiber of function $\lambda$. \label{cond4} 
\end{enumerate}


\begin{theorem} 
	\label{thm:genus-0-identification}
	Under the conditions \eqref{cond1}-\eqref{cond4} above, 
	the correlators of the CohFT are related by Equation~\eqref{eq:CohFT-SpectralCurve-match} to the correlator differentials obtained through spectral curve topological recursion on a curve $\mathcal{D}$ with $x=\la$, $y=p$ and $B(p_1,p_2)=dp_1dp_2/(p_1-p_2)^2$.
\end{theorem}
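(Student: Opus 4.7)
My plan is to verify the hypotheses of the CohFT reconstruction of \cite{DOSS12} for the spectral curve $(\mathcal{D},\lambda,p,B)$ and then identify the resulting CohFT with the given $\alpha$ via a uniqueness argument for homogeneous semi-simple CohFTs.

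First, conditions (2) and (3) imply $\mathcal{D}\cong\mathbb{P}^1$ with global coordinate $p$, so that $B = dp_1\,dp_2/(p_1-p_2)^2$ is the unique rational Bergman kernel. Conditions (1) and (4) ensure that $dp/d\lambda = 1/\lambda'(p)$, viewed as a meromorphic function on $\mathcal{D}$, has poles only at the critical points $c_i$ of $\lambda$, since $\lambda'(p)$ vanishes precisely there. Hence Proposition~\ref{th:yprim} applies (with $dy=dp$ meromorphic) and yields $\eta\equiv 0$; by Theorem~\ref{thm:CompatibilityTest} the compatibility of equations \eqref{eq:Identification-ConstantsDelta}, \eqref{eq:Identification-MatrixR} and \eqref{eq:Identification-Dilaton} holds, and the decomposition \eqref{eq:decomposition-doubleLaplace-B} is automatic because $\mathcal{D}$ is compact and $d\lambda$ is meromorphic. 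The reconstruction of \cite{DOSS12} therefore produces a CohFT $\alpha'$ whose correlators match the topological recursion differentials $\omega_{g,k}$ via \eqref{eq:CohFT-SpectralCurve-match}.

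Next, I would compare $\alpha'$ with $\alpha$. The canonical constants $\Delta_i^{-1/2}$ and the image of the unit vector under $R^{-1}(\zeta^{-1})$, read off the spectral curve via \eqref{eq:Identification-ConstantsDelta} and \eqref{eq:Identification-Dilaton}, coincide by Theorem~\ref{thm:right-function-y} with the corresponding data of $M$ at the chosen semisimple point. In particular the TFT part of $\alpha'$ is the Frobenius algebra of $M$ at that point. To conclude $\alpha=\alpha'$ it then suffices to verify that $\alpha'$ is a \emph{homogeneous} CohFT, because Teleman's reconstruction theorem asserts that a semi-simple homogeneous CohFT is uniquely determined by its underlying conformal Frobenius manifold: homogeneity pins down the Givental $R$-matrix via a first-order ODE in $z$ with the initial condition $R(0)=\mathrm{Id}$.

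The hardest step will be establishing homogeneity of $\alpha'$. My approach is to exploit the scaling property of Dubrovin's construction, namely $E\cdot(\lambda\,dp)=\lambda\,dp$, and to differentiate the Laplace integral \eqref{eq:Identification-MatrixR} along the Euler flow in canonical coordinates. After accounting for the variation of the contours and of $B$ under changes of the $u_i$ (plausibly via Rauch-type variational formulae specialised to the rational Bergman kernel, where the variation of $B$ is particularly explicit), this should reduce to exactly the commutation relation with $\mu$ and $U$ that characterises the Givental $R$-matrix of a conformal Frobenius manifold. I expect this to be the most delicate part of the argument. Once homogeneity is in hand, Teleman's uniqueness gives $\alpha'=\alpha$, and then \eqref{eq:CohFT-SpectralCurve-match} is precisely the statement of the theorem.
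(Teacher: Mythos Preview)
Your overall strategy is exactly the paper's: verify compatibility via Proposition~\ref{th:yprim}(1) and Theorem~\ref{thm:CompatibilityTest}, invoke \cite{DOSS12} to get a CohFT $\alpha'$, use Theorem~\ref{thm:right-function-y} to match the underlying TFT (hence the genus-zero part), and then appeal to Teleman once homogeneity of $\alpha'$ is established.

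The one substantive difference is in how homogeneity is proved. You propose to vary $B$ along the Euler flow via Rauch-type formulae; that is in fact what the paper does in the \emph{higher-genus} case (Section~\ref{sec:highergenus}), where no explicit form of $B$ is available. In genus zero the paper instead works directly with the explicit kernel $B=dp_1\,dp_2/(p_1-p_2)^2$ and a sharper scaling statement than $E\cdot(\lambda\,dp)=\lambda\,dp$: from the Gauss--Manin system one derives
\[
\Big(\lambda\,\frac{\partial}{\partial\lambda}+\sum_i u_i\,\frac{\partial}{\partial u_i}\Big)\,p(\lambda,u)=\frac{1-d}{2}\,p(\lambda,u),
\]
and then a straightforward integration by parts in the double Laplace transform of $B$ shows the Euler equation for $\check B$ (Proposition~\ref{prop:homogeneirty}). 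So your plan would certainly work, but the paper's route is more elementary here because all $u$-dependence of $B$ enters through $p$, and the single scaling identity above makes the cancellation immediate.
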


In other words, in this case the ancestor potential of CohFT is reproduced by global topological recursion related to Dubrovin's superpotential. Note that this identification happens over an open ball in the underlying Frobenius manifold.

\begin{proof}
First of all, note that since $p$ is a global coordinate and $\lambda(p=\infty)=\infty$, this spectral curve satisfies the compatibility condition of Theorem~\ref{thm:CompatibilityTest}, which means that one can reconstruct a CohFT 
such that Equation~\eqref{eq:CohFT-SpectralCurve-match} is satisfied. We only need to prove that this CohFT is the same as the original one. 

Theorem~\ref{thm:right-function-y} implies that we have the right function $y$, so, in particular, the functions $\Delta_i^{-\frac 12}(u)$ are correctly reproduced on an open ball in the space of parameters $u_1,\dots,u_n$. Note that these functions determine completely the structure of Frobenius multiplication, so we can conclude that the CohFT reconstructed from the spectral curve data coincides with the original one in genus zero. 

Higher genera correlators of a semi-simple CohFT are determined uniquely by genus $0$ data in homogeneous cases~\cite{Teleman}. Therefore, it is sufficient to prove that the CohFT reconstructed from the spectral curve data is homogeneous. 
%
%
We do this by proving the Euler equation for the corresponding $R$-matrix. Namely, a CohFT with an $R$-matrix $R(\xi)$ is homogeneous if and only if the $R$-matrix satisfies the Euler equation~\cite{GiventalMain}:
\beq
\br{\xi \dif{}{\xi} + \sum_{i=1}^{n}u_i\pdif{}{u_i}} R(\xi,u) = 0 
\eeq
(or, equivalently, we can consider the same equation for $R^{-1}(\xi,u)=R(-\xi,u)^T$). Using  Equation~\eqref{eq:decomposition-doubleLaplace-B}, the Euler equation for the $R$-matrix can be rewritten as 
\beq \label{eq:homogeneity-B-check}
		\br{1 + \xi_1\pdif{}{\xi_1} + \xi_2\pdif{}{\xi_2} + \sum_{i=1}^n u_i\pdif{}{u_i}}\Bc=0
\eeq
for $\Bc=\Bc^{ij}(\xi_1,\xi_2)$ given by
\beq
\frac{e^{-\frac{u_i}{\xi_1}-
		\frac{u_j}{\xi_2}}}{{2\pi\sqrt{\xi_1\xi_2}}}
	\int_{p(L_i)} \int_{p(L_j)} B\cdot 
e^{\frac
		{\lambda_1}{\xi_1}+\frac{\lambda_2}{\xi_2}} .
\eeq

Recall that we consider the case when $d_\lambda p$ does not have zeros in $\mathbb{C}\setminus \cup_{i=1}^n L_i$, and the Riemann surface $\mathcal{D}$ that we get through Dubrovin's construction has genus 0. The Bergman kernel $B(p_1,p_2)$ has the form
$dp_1dp_2/(p_1-p_2)^2$.

\begin{proposition} \label{prop:homogeneirty}
	Under these conditions Equation~\eqref{eq:homogeneity-B-check} is satisfied.
\end{proposition}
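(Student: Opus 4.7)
The plan is to show directly that $\check B^{ij}$ is homogeneous of degree $-1$ under the simultaneous rescaling $(\xi_1,\xi_2;u) \mapsto (c\xi_1,c\xi_2;cu)$, after which Equation~\eqref{eq:homogeneity-B-check} follows by differentiating in $c$ at $c=1$.

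My first step will be to determine the Euler weights of the ingredients of Dubrovin's superpotential. Using $E\cdot\Psi = \Psi\mu$ and constancy of $\mu$, one checks that both $V=\Psi\mu\Psi^{-1}$ and the connection 1-form $d\Psi\cdot\Psi^{-1}$ are Euler-invariant, while $(U-\lambda)^{-1}d(U-\lambda)$ is invariant under the simultaneous rescaling $(\lambda,u)\mapsto(c\lambda,cu)$. Hence the Gauss--Manin system~\eqref{eq:Gauss-Manin-system-5-31-32} is preserved by this $\mathbb{C}^*$-action. The local normalization $\phi^{(j)}_j\sim 1/\sqrt{u_j-\lambda}$ has weight $-1/2$, and the monodromy conditions~\eqref{eq:DubrovinProp-1}--\eqref{eq:DubrovinProp-4} are themselves scale-invariant, so the uniqueness of $\phi^{(j)}$ forces $\phi^{(j)}(c\lambda,cu)=c^{-1/2}\phi^{(j)}(\lambda,u)$. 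The bilinear form $G^{ij}=(\phi^{(i)})^T(U-\lambda)\phi^{(j)}$ has total weight $-\tfrac12+1-\tfrac12=0$, so $\phi=\sum G_{ij}\phi^{(j)}$ also carries weight $-1/2$. Since the first column $\Psi\un$ has weight $\mu_1=-d/2$, the formula~\eqref{eq:Superpotential-Formula} gives
\begin{equation}\label{eq:p-scaling}
p(c\lambda,cu)=c^{s}\,p(\lambda,u), \qquad s:=\tfrac{1-d}{2}.
\end{equation}
Equivalently, if we identify the fibres $\mathcal{D}_u$ through the $p$-coordinate, the family of superpotentials transforms as $\lambda_{cu}(c^sp)=c\,\lambda_u(p)$; in particular the critical points are at $c^s c_i$ and the contours satisfy $p(L_i(cu))=c^s\cdot p(L_i(u))$ in the $p$-plane.

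The second step is a direct change of variables in the defining integral. Substituting $(c\xi_1,c\xi_2;cu)$ produces an overall prefactor $1/(c\sqrt{\xi_1\xi_2})$, and the boundary exponential $e^{-cu_i/(c\xi_1)-cu_j/(c\xi_2)}$ is unchanged. Performing $p_k\mapsto c^s p_k$ inside the integral, three things happen: the Bergman kernel $dp_1dp_2/(p_1-p_2)^2$ is invariant under any linear rescaling of $p$; the exponent transforms as $\lambda_{cu}(c^sp_k)/(c\xi_k)=\lambda_u(p_k)/\xi_k$; and the contours $c^s p(L_i(u))$ revert to $p(L_i(u))$. Hence the transformed integral coincides with the one defining $\check B^{ij}(\xi_1,\xi_2;u)$, and we obtain $\check B^{ij}(c\xi_1,c\xi_2;cu)=c^{-1}\check B^{ij}(\xi_1,\xi_2;u)$. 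Differentiating at $c=1$ then yields~\eqref{eq:homogeneity-B-check}.

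The main technical point is the uniform Euler weight of the basis $\{\phi^{(j)}\}$: while invariance of the Gauss--Manin system is a formal computation, one must check that the combined normalization/monodromy data of~\eqref{eq:DubrovinProp-1}--\eqref{eq:DubrovinProp-4} is compatible with the single weight $-1/2$. This follows from the fact that the dominant and subdominant local expansions near each $\lambda=u_j$ scale uniformly by $c^{-1/2}$, and that the reflections $\refl_k$ involve only $u$- and $\lambda$-independent coefficients $G^{ij}$, themselves Euler-invariant. Once this scaling is in hand, the remainder of the argument is a bookkeeping exercise with the prefactor, exponentials, and contours.
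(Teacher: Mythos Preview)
Your argument is correct and reaches the goal by a route that is equivalent in spirit to the paper's but organized differently.

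Both proofs rest on the same key fact, the Euler homogeneity
\[
\left(\lambda\,\partial_\lambda + \sum_i u_i\,\partial_{u_i}\right) p(\lambda,u) \;=\; \tfrac{1-d}{2}\,p(\lambda,u),
\]
which the paper states as a separate Lemma (Equation~\eqref{eq:phomo}). The paper proves this lemma by direct computation: it writes out $d_\lambda p$ (already obtained in~\eqref{eq:dpdlambda}) and $d_u p$ via the Gauss--Manin system and simply adds them up. You instead obtain the integrated version $p(c\lambda,cu)=c^{(1-d)/2}p(\lambda,u)$ by checking scale-invariance of~\eqref{eq:Gauss-Manin-system-5-31-32} and invoking the uniqueness of the basis $\phi^{(j)}$ under Dubrovin's normalization/monodromy conditions. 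Your route is more conceptual and makes the origin of the weight transparent; the paper's route is more hands-on and bypasses any discussion of how the cuts $L_i$ and the uniqueness statement behave under rescaling.

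For the passage from this lemma to~\eqref{eq:homogeneity-B-check}, the paper applies the Euler operator to $\check B$ directly, integrates the terms $-\lambda_k/\xi_k$ by parts, and shows that the resulting integrand $Y$ vanishes identically once the lemma is plugged in. You instead perform the substitution $p_k\mapsto c^{s}p_k$ in the integral for $\check B^{ij}(c\xi_1,c\xi_2;cu)$ and read off the degree $-1$ homogeneity; differentiating at $c=1$ recovers~\eqref{eq:homogeneity-B-check}. These two computations are formally equivalent (one is the finite rescaling, the other its infinitesimal form), but the paper's version avoids having to argue that the contours $p(L_i;cu)$ are the $c^{s}$-rescaled contours $p(L_i;u)$, while yours makes the homogeneity of $\check B$ manifest without any integration by parts.
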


We prove this proposition below. It implies that the $R$-matrix associated to the Bergman kernel in this case satsifies the Euler equation, and, therefore, the corresponding CohFT is homogeneous. This proposition completes the proof of Theorem~\ref{thm:genus-0-identification}. 
\end{proof}

For the proof of Proposition~\ref{prop:homogeneirty} we need the following technical lemma:
\begin{lemma} We have:
	\beq\label{eq:phomo}
	\lb \lambda\dif{}{\lambda} +\sum_{i=1}^{n} u_i \pdif{}{u_i} \rb p(\la,u) = \dfrac{1-d}{2} p(\la,u).
	\eeq
\end{lemma}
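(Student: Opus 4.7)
The strategy is to apply the Euler operator $E_{\text{tot}} := \lambda\frac{d}{d\lambda} + \sum_{i=1}^n u_i\frac{\partial}{\partial u_i}$ directly to the explicit formula
\[
p(\lambda,u) = \frac{\sqrt 2}{1-d}\,\phi^T(U-\lambda)\Psi\un
\]
from \eqref{eq:Superpotential-Formula} and use the Leibniz rule, reducing the lemma to three homogeneity statements for the three factors $\phi$, $(U-\lambda)$, and $\Psi\un$. The scaling degrees we expect are $-\tfrac12$, $1$, and $-\tfrac{d}{2}$ respectively, which add to $\tfrac{1-d}{2}$ as desired.

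The central computation is to show that any solution $\phi$ of the extended Gauss--Manin system~\eqref{eq:Gauss-Manin-system-5-31-32} satisfies $E_{\text{tot}}(\phi) = -\tfrac12\phi$. I would contract \eqref{eq:Gauss-Manin-system-5-31-32} with the vector field $E_{\text{tot}}$ on the total parameter space (coordinates $(\lambda,u_1,\dots,u_n)$) via interior product $\iota_{E_{\text{tot}}}$. Since $U-\lambda I$ is linear in $(\lambda,u)$, a direct calculation gives
\[
\iota_{E_{\text{tot}}}\!\bigl[(U-\lambda)^{-1}d(U-\lambda)\bigr] = (U-\lambda)^{-1}(U-\lambda) = I.
\]
Since $\Psi$ depends only on $u$ (not on $\lambda$), and $E = \sum_i u_i\partial_{u_i}$ in canonical coordinates, we have
\[
\iota_{E_{\text{tot}}}(d\Psi\cdot\Psi^{-1}) = (E\cdot\Psi)\Psi^{-1} = \Psi\mu\Psi^{-1} = V.
\]
Contracting \eqref{eq:Gauss-Manin-system-5-31-32} and using $\iota_{E_{\text{tot}}}d\phi = E_{\text{tot}}(\phi)$ (for a vector of 0-forms) yields
\[
E_{\text{tot}}(\phi) = \bigl(-\tfrac12 - V + V\bigr)\phi = -\tfrac12\phi.
\]
This argument applies verbatim to Dubrovin's $\phi = \sum_{i,j}G_{ij}\phi^{(j)}$ since $G_{ij}$ is a constant matrix, so $\phi$ again solves the Gauss--Manin system.

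For the remaining factors, $E_{\text{tot}}(U-\lambda) = U-\lambda$ is immediate from the linearity of $U-\lambda$ in $(\lambda,u)$. For $\Psi\un$, using that $\Psi$ is independent of $\lambda$ and $E\cdot\Psi = \Psi\mu$ together with $\mu\un = -\tfrac{d}{2}\un$ (as recalled in the proof of Theorem~\ref{thm:right-function-y}), one gets
\[
E_{\text{tot}}(\Psi\un) = \Psi\mu\un = -\tfrac{d}{2}\Psi\un.
\]
Assembling everything via the Leibniz rule,
\[
E_{\text{tot}}(p) = \frac{\sqrt 2}{1-d}\Bigl(-\tfrac12 + 1 - \tfrac{d}{2}\Bigr)\phi^T(U-\lambda)\Psi\un = \frac{1-d}{2}\,p,
\]
which is \eqref{eq:phomo}.

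I do not anticipate a serious obstacle: the main conceptual point is recognizing that the Gauss--Manin system is itself invariant under the joint scaling of $(\lambda,u)$, which forces every solution to be homogeneous of degree $-\tfrac12$. The only mild care needed is to verify that the interior product identities hold for the specific $\phi$ used by Dubrovin, and that the constant matrix $G_{ij}$ does not interfere with this---but since $G_{ij}$ commutes with any differentiation, this is automatic.
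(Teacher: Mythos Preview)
Your proof is correct and uses the same ingredients as the paper's---the Gauss--Manin system~\eqref{eq:Gauss-Manin-system-5-31-32}, the identity $E\cdot\Psi=\Psi\mu$, and $\mu\un=-\tfrac d2\un$---but organises them differently. The paper first derives the explicit formulas $d_\lambda p=-\tfrac1{\sqrt2}\phi^T\Psi\un\,d\lambda$ and $d_u p=\tfrac1{\sqrt2}\phi^T\,dU\,\Psi\un$ (the first being Equation~\eqref{eq:dpdlambda}, the second its $u$-analogue), then contracts with the Euler field to get $E_{\text{tot}}(p)=\tfrac1{\sqrt2}\phi^T(U-\lambda)\Psi\un$, which is $\tfrac{1-d}{2}p$ by the very definition of $p$. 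Your route is more conceptual: you contract the Gauss--Manin system directly with $E_{\text{tot}}$ to exhibit the homogeneity degree $-\tfrac12$ of $\phi$, then add the degrees $1$ and $-\tfrac d2$ of the other two factors. This is slightly cleaner and isolates the key fact that every solution of~\eqref{eq:Gauss-Manin-system-5-31-32} is automatically homogeneous of degree $-\tfrac12$ under the joint scaling. The paper's version, on the other hand, yields the explicit expressions for $d_\lambda p$ and $d_u p$ as byproducts, which are used elsewhere (e.g.\ in the proof of Theorem~\ref{thm:right-function-y}).
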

\begin{proof}
	Recall Equation~\eqref{eq:dpdlambda}:
	\beq
	d_\la p(\la,u) = -\dfrac{1}{\sqrt{2}} \phi^T d\lambda \Psi\un .
	\eeq
	In the same way we prove that 
	\beq
	d_u p(\la,u) = \dfrac{1}{\sqrt{2}}\phi^T dU \Psi \un 
	\eeq
	(this is \cite[equation (5.66)]{Dub98}; note that there is a misprint in this equation in~\cite{Dub98}). 
	Combining these equations, we get
	\beq
	\lb \lambda\dif{}{\lambda} +\sum_{i=1}^{n} u_i \pdif{}{u_i} \rb p(\la,u) = \dfrac{1}{\sqrt{2}}\phi^T (U-\lambda) \Psi \un = \frac{1-d}{2} p(\lambda,u).
	\eeq
\end{proof}

\begin{proof}[Proof of Proposition~\ref{prop:homogeneirty}]
We have:
\begin{align} \label{eq:expression-Euler-Bcheck}
&
	\br{1 + \xi_1\pdif{}{\xi_1} + \xi_2\pdif{}{\xi_2} + \sum_{i=1}^n u_i\pdif{}{u_i}}\Bc
	\\ \notag & 
	=\dfrac{e^{-\frac{u_i}{\xi_1}- \frac{u_j}{\xi_2}}}{2\pi\sqrt{\xi_1\xi_2}}\iint	
	 \dfrac{d\la_1d\la_2}{(p(\la_1)-p(\la_2))^2} \dif{p}{\la}\br{\la_1}\dif{p}{\la}\br{\la_2} e^{\frac{\la_1}{\xi_1}+\frac{\la_2}{\xi_2}}  X, 
\end{align}
where
\begin{align*}
	X = & -\frac{\la_1}{\xi_1}-\frac{\la_2}{\xi_2}
	-2\cdot \dfrac{\left(\sum\limits_{i=1}^n u_i\pdif{}{u_i}\right)\br{p(\la_1)-p(\la_2)}}{p(\la_1)-p(\la_2)} \\
	&  +\dfrac{\left(\sum\limits_{i=1}^n u_i\pdif{}{u_i}\right)\dif{p}{\la}\br{\la_1}}{\dif{p}{\la}\br{\la_1}}
	+\frac{\left(\sum\limits_{i=1}^n u_i\pdif{}{u_i}\right)\dif{p}{\la}\br{\la_2}}{\dif{p}{\la}\br{\la_2}} .
\end{align*}
Applying the integration by parts to the terms $-\lambda_1/\xi_1$ and $-\lambda_2/\xi_2$, we can rewrite the right hand side of Equation~\eqref{eq:expression-Euler-Bcheck} as 
\begin{equation}
\dfrac{e^{-\frac{u_i}{\xi_1}- \frac{u_j}{\xi_2}}}{2\pi\sqrt{\xi_1\xi_2}}\iint	
\dfrac{d\la_1d\la_2}{(p(\la_1)-p(\la_2))^2} \dif{p}{\la}\br{\la_1}\dif{p}{\la}\br{\la_2} e^{\frac{\la_1}{\xi_1}+\frac{\la_2}{\xi_2}}  Y,
\end{equation}
where
\begin{align*}
Y =\ & 2+
\dfrac{\br{\la_1\dif{}{\la_1}+\sum\limits_{i=1}^n u_i\pdif{}{u_i}}\dif{p}{\la}\br{\la_1}}{\dif{p}{\la}\br{\la_1}}
	+\dfrac{\br{\la_2\dif{}{\la_2}+\sum\limits_{i=1}^n u_i\pdif{}{u_i}}\dif{p}{\la}\br{\la_2}}{\dif{p}{\la}\br{\la_2}}-\\
	&-2\cdot \dfrac{\br{\la_1\dif{}{\la_1}+\sum\limits_{i=1}^n u_i\pdif{}{u_i}} p(\la_1)-\br{\la_2\dif{}{\la_2}+\sum\limits_{i=1}^n u_i\pdif{}{u_i}}p(\la_2)}{p(\la_1)-p(\la_2)} .
\end{align*}
Using Equation~\eqref{eq:phomo}, we rewrite $Y$ as
\begin{align*}
Y =\ & 2+	\dfrac{\br{-1+\dfrac{1-d}{2}}\dif{p}{\la}\br{\la_1}}{\dif{p}{\la}\br{\la_1}}
	+\dfrac{\br{-1+\dfrac{1-d}{2}}\dif{p}{\la}\br{\la_2}}{\dif{p}{\la}\br{\la_2}}-\\
	&-2\cdot \dfrac{\dfrac{1-d}{2} p(\la_1)-\dfrac{1-d}{2}p(\la_2)}{p(\la_1)-p(\la_2)}\\
=\	& 2+\left(-1+\dfrac{1-d}{2}\right)+\left(-1+\dfrac{1-d}{2}\right)-2\cdot \dfrac{1-d}{2}=0,
\end{align*}
which proves the proposition.
\end{proof}



\section{Superpotential as a global spectral curve for arbitrary genus}

\label{sec:highergenus}

In this Section we extend the result of the previous section to the case of a compact global curve of arbitrary genus.



\begin{theorem} 
	\label{thm:any-genus-identification}
Given a conformal Frobenius manifold, construct a superpotential $p(\lambda;u)$ which defines the Riemann surface ${\cal D}$ according to Dubrovin's construction of Section~\ref{sec:DubrovinsSuperpotential}.  
 Assume the following:
\begin{itemize}
\item  ${\cal D}$ is a compact curve of genus $g$;
\item there is exactly one critical point in each singular fiber of $\lambda:{\cal D}\to\mathbb{C}$.
\end{itemize}
Fix a symplectic basis $\left({\cal A}_i,{\cal B}_i\right)_{i=1}^g$ of $H_1({ \cal D}, \mathbb{Z})$ and define $B(p_1,p_2)$ as the only Bergman kernel on ${\cal D}$ normalized by 
\beq  \label{normBerg} 
\forall i=1,\dots,g \, , \; \oint_{p_1 \in {\cal A}_i} B(p_1,p_2) = 0.
\eeq	
Further assume that:
\begin{itemize}
\item the pair $(p, B(p_1,p_2))$ passes the compatibility test of Section~\ref{sec:Rcompat} in any of its possible forms (given by Theorem~\ref{thm:CompatibilityTest}, Corollary~\ref{cor:dosstestpullback}, or Proposition~\ref{th:yprim}).
\end{itemize}
Then the correlators of the CohFT associated to the Frobenius manifold are related by Equation~\eqref{eq:CohFT-SpectralCurve-match} to the correlator differentials obtained through spectral curve topological recursion on the Riemann surface $\mathcal{D}$ with $x=\la$, $y=p$ and $B(p_1,p_2)$. 
\end{theorem}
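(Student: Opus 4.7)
The plan is to follow the structure of the genus-zero proof of Theorem~\ref{thm:genus-0-identification}, with the homogeneity of the Bergman kernel now established by a variational (Rauch) argument instead of an explicit computation. By the assumed compatibility condition, the pair $(p,B)$ meets the hypotheses of the reconstruction procedure of~\cite{DOSS12}, which associates to $(\mathcal{D},\lambda,p,B)$ a semi-simple CohFT $\{\tilde\alpha_{g,k}\}$ whose correlation differentials coincide with the output of topological recursion through~\eqref{eq:CohFT-SpectralCurve-match}. The proof of Theorem~\ref{thm:right-function-y} uses only the local behaviour of $p$ near the critical points $c_i=p(u_i,u)$ together with the extended Gauss--Manin system, neither of which depends on the genus of $\mathcal{D}$. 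Applying it here identifies the functions $\Delta_i^{-1/2}(u)$ of $\{\tilde\alpha_{g,k}\}$ with those of the original CohFT $\{\alpha_{g,k}\}$ on an open ball in $(u_1,\dots,u_n)$. Since $\Delta_i^{-1/2}(u)$ determines the Frobenius multiplication, the underlying Frobenius manifolds, and hence the genus-zero parts of $\tilde\alpha$ and $\alpha$, agree.

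To conclude it suffices, by Teleman's reconstruction theorem, to show that $\{\tilde\alpha_{g,k}\}$ is homogeneous: a semi-simple homogeneous CohFT is uniquely determined by its genus-zero part. Homogeneity of $\tilde\alpha$ is equivalent to the Euler equation for its $R$-matrix, which via the double Laplace decomposition~\eqref{eq:decomposition-doubleLaplace-B} reduces, exactly as in the proof of Proposition~\ref{prop:homogeneirty}, to the identity
\[
\left(1+\xi_1\frac{\partial}{\partial \xi_1}+\xi_2\frac{\partial}{\partial \xi_2}+\sum_{k=1}^n u_k\frac{\partial}{\partial u_k}\right)\check{B}^{ij}(\xi_1,\xi_2)=0,
\]
where $\check{B}^{ij}$ is the rescaled double Laplace transform of $B$ along $p(L_i)\times p(L_j)$. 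The homogeneity relation~\eqref{eq:phomo} of $p$ and the computation of $dp/d\lambda$ in Section~\ref{sec:ycoinc} are intrinsic statements about Dubrovin's superpotential and continue to hold verbatim.

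The main new ingredient, and the main obstacle, is that the higher-genus Bergman kernel is no longer manifestly homogeneous in $u$: unlike $dp_1 dp_2/(p_1-p_2)^2$ used in Section~\ref{sec:genus0}, it depends on the complex structure of $\mathcal{D}$ through the positions of the branch points $c_k=p(u_k,u)$. The correct substitute is the Rauch variational formula
\[
\frac{\partial B(p_1,p_2)}{\partial u_k}=\Res_{z=c_k}\frac{B(p_1,z)\,B(z,p_2)}{d\lambda(z)},
\]
valid precisely because the normalization~\eqref{normBerg} allows the symplectic basis $(\mathcal{A}_i,\mathcal{B}_i)$ to be transported locally constantly under small variations of $u$. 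Inserting this into the $u$-derivative appearing in the display above and integrating by parts in $\xi_1,\xi_2$, the Rauch residues at the $c_k$ combine with the Laplace transforms of $B(\cdot,c_k)$; by Equation~\eqref{eq:Identification-MatrixR} these are exactly the matrix elements of $R^{-1}$ already governing $\check{B}$, and by~\eqref{eq:phomo} the contribution of $\sum u_k\partial_{u_k}$ acting on $p$ absorbs the remaining pieces in the same four-term cancellation pattern as at the end of the proof of Proposition~\ref{prop:homogeneirty}. The delicate point, which I expect to require the most care, is the bookkeeping of residues at the $c_k$ against boundary contributions along the (non-closed) contours $p(L_i),p(L_j)$ whose endpoints move with $u$, together with the verification that no additional $u$-dependence slips in through the $\mathcal{A}$-normalization; any other normalization of $B$ would spoil homogeneity by producing extra $u$-dependent periods, which is precisely why the statement of the theorem fixes the $\mathcal{A}$-normalized Bergman kernel.
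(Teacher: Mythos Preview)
Your overall architecture matches the paper's: reconstruct a CohFT from the spectral curve via the compatibility hypothesis, identify its genus-zero part with the original Frobenius manifold via Theorem~\ref{thm:right-function-y}, and then invoke Teleman once homogeneity of the reconstructed CohFT is established. You also correctly identify the Rauch variational formula as the new ingredient needed in higher genus, and your remark that the $\mathcal{A}$-normalization is what makes Rauch applicable (and would fail otherwise) is exactly right.

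Where your plan diverges from the paper is in the execution of the homogeneity step. You propose to stay at the level of the Laplace transform $\check B^{ij}$, combining the Rauch contribution with the $p$-homogeneity~\eqref{eq:phomo} and then tracking a four-term cancellation analogous to Proposition~\ref{prop:homogeneirty}. The paper takes a shorter route: it first \emph{inverts} the Laplace transform, reducing the Euler equation for $\check B$ to the identity
\[
d_1\!\left(\frac{\lambda_1\,B(p_1,p_2)}{d\lambda_1}\right)+d_2\!\left(\frac{\lambda_2\,B(p_1,p_2)}{d\lambda_2}\right)+\sum_{i=1}^n u_i\,\frac{\partial B(p_1,p_2)}{\partial u_i}=0,
\]
which involves only $B$ and $\lambda$ and is entirely independent of $p$. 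Rauch then gives $\sum_i u_i\,\partial_{u_i}B=\sum_i\Res_{r\to a_i}\lambda(r)B(p_1,r)B(p_2,r)/d\lambda(r)$; moving the integration contours (legitimate precisely because the $\mathcal{A}$-periods of $B$ vanish) picks up only the diagonal poles at $r=p_1,p_2$, and the Cauchy-type property of $B$ yields exactly the two $d_j(\lambda_j B/d\lambda_j)$ terms with opposite sign. This bypasses completely the ``delicate point'' you flag---the moving endpoints of $p(L_i)$ and the interaction with~\eqref{eq:phomo}---because $p$ simply never enters the argument. Your approach may well go through, but it carries bookkeeping that the paper's inversion trick eliminates; the paper even remarks that its homogeneity proof uses only Rauch and works for \emph{any} compact branched cover with simple branch points, with no reference to the superpotential.
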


\begin{remark}

This result extends Theorem~\ref{thm:genus-0-identification} to an arbitrary compact curve. The new feature is that one needs to normalize the Bergman kernel on an arbitrary basis of cycles.  In particular, for each basis, we recover a total ancestor potential for the same CohFT. 


\end{remark}

\begin{proof}
The proof is very similar to the proof of the genus 0 case presented in the preceding section. However, it is important to remark that this proof only relies on Rauch's variational formula, i.e. it is valid for any compact curve presented as a ramified cover of the Riemann sphere with simple branch points. It does not require any knowledge about an auxiliary meromorphic form such as the super-potential.

Let us first show that the $(0,3)$ correlators are independent of choice of normalisation cycles for $B$.  $\omega_{0,3}$ depends on these choices, but when decomposed into linear combinations of auxiliary differentials $d\xi^j=B/ds_j$ (for $s_j$ defined by $x=(1/2)s_j^2+a_j$) the coefficients are independent of $A$-cycles. 
	By reconstruction, as in the proof of Theorem~\ref{thm:genus-0-identification}, this means that all correlators are the same. 
The formula 
	\begin{align*}
	\omega_{0,3}(z_1,z_2,z_3) &= \sum_{i=1}^n \Res_{p=c_i}  B(p,z_1)B(p,z_2)B(p,z_3)/dx(p)dy(p)\\
	&=\sum_{i=1}^n B(a_i,z_1)B(a_i,z_2)B(a_i,z_3)/x''(a_i)y'(a_i)\\
	&=\sum_{i=1}^n \langle....\rangle d\xi^i(z_1)d\xi^i(z_2)d\xi^i(z_3)
	\end{align*}
	shows the independence of the coefficients $\langle ...\rangle$ on the choice of $B$.

For the rest of the proof, the only part differing from the genus 0 case is the proof of the homogeneity of the CohFT, i.e. the fact that the $R$-matrix satisfies the Euler equation.

The first step consists in proving that there exist a $R$-matrix. This is due to a lemma of Eynard~\cite{EynardIntersection}:

\bl
If $d\lambda$ is a meromorphic form on ${\cal D}$ and $B$ the Bergman kernel normalized on a basis of ${\cal A}$-cycles as above, then the Laplace transform of the Bergman kernel satisfies Equation~\eqref{eq:decomposition-doubleLaplace-B} .

\el

The Euler equation for the $R$-matrix is then equivalent to the following equation for the Laplace transform of $B$:
\beq \label{eq:homogeneity-B-check-higher-g}
		\br{1 + \xi_1\pdif{}{\xi_1} + \xi_2\pdif{}{\xi_2} + \sum_{i=1}^n u_i\pdif{}{u_i}}\Bc=0
\eeq
for $\Bc=\Bc^{ij}(\xi_1,\xi_2)$ given by
\beq
\frac{e^{-\frac{u_i}{\xi_1}-
		\frac{u_j}{\xi_2}}}{{2\pi\sqrt{\xi_1\xi_2}}}
	\int_{p(L_i)} \int_{p(L_j)} B\cdot 
e^{\frac
		{\lambda_1}{\xi_1}+\frac{\lambda_2}{\xi_2}} .
\eeq
By inverting the Laplace transform and integration by part, this is equivalent to
\beq\label{eqhomogeneityB}
d_1\left({\lambda_1 \, B(p_1,p_2) \over d\lambda_1}\right) + d_2\left({\lambda_2 \, B(p_1,p_2) \over d\lambda_2}\right) + \sum_{i=1}^n u_i {\partial \over \partial u_i} B(p_1,p_2).
\eeq

In order to prove this equation, we remind Rauch's variational formula which expresses the variations of the Bergman kernel under deformation of the spectral curve. In particular
\beq\label{RauchB}
{\partial B(p_1,p_2) \over \partial u_i} = \Res_{r \to a_i} {B(p_1,r) \, B(p_2,r) \over d\lambda(r)}
\eeq
which implies that
\beq
\sum_{i =1}^n u_i {\partial \over \partial u_i} B(p_1,p_2)  = \sum_{i=1}^n \Res_{r \to a_i} {\lambda(r) B(p_1,r) \, B(p_2,r) \over d\lambda(r)} .
\eeq
Moving the integration contours around the other poles of the integrands and reminding that the ${\cal A}$-periods of $B(p,r)$ are vanishing, this reads
\beq
\sum_{i =1}^n u_i {\partial \over \partial u_i} B(p_1,p_2)  =   - \Res_{r \to p_1,p_2} {\lambda(r) B(p_1,r) \, B(p_2,r) \over d\lambda(r)}   = - \left( {d \over d \lambda_1} + {d \over d \lambda_2} \right) B(p_1,p_2) 
\eeq
proving Equation~\ref{eqhomogeneityB}.

\end{proof}





\section{Global curves for $A_n$ singularities}
\label{sec:Ansing}

In this Section we apply the results of Sections~\ref{sec:ycoinc}, \ref{sec:Rcompat}, and \ref{sec:genus0} in order to construct the spectral curve for the ancestor potential of $A_n$-singularities, $n=1,2,\dots$. The structure of this Frobenius manifolds is described in terms of Saito's theory on the space of polynomials 
\begin{equation}\label{eq:polynomial}
f(p,\tau)=p^{n+1}+\tau_1 p^{n-1}+\cdots \tau_n.
\end{equation}
 We refer to~\cite{Dub2dTFT,GiventalAn} for the detailed description of the structure of this Frobenius manifold. In particular, it is enough to say that $\lambda=f(p,\tau)$ is a superpotential of this Frobenius manifold.
 
The corresponding CohFT is well-studied. It was a subject of Witten's conjecture~\cite{WittenConj} proved in~\cite{FSZ}. We refer to~\cite{PandPixZvo} for an exposition of this CohFT that includes an overview of its constructions; the CohFT whose correlators give the ancestor potential at the point $\tau$ of this Frobenius manifold is called there the shifted Witten class of $A_n$ singularity. 
 
 \begin{theorem}\label{thm:an-singularity}
 	The correlation differentials of the global spectral curve data $\Sigma:=\mathbb{CP}^1$, $y:=p$ (the global coordinate), $x:=f(p,\tau)$, $B:=dp_1dp_2/(p_1-p_2)^2$ are expressed via Equation~\eqref{eq:CohFT-SpectralCurve-match} in terms of the shifted Witten class of $A_n$ singularity. 
 \end{theorem}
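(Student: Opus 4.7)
The plan is to apply Theorem~\ref{thm:genus-0-identification} directly. To do so, I must verify two items: (i) that the spectral curve data $(\mathbb{CP}^1, \lambda=f(p,\tau), y=p)$ is exactly the output of Dubrovin's construction of Section~\ref{sec:DubrovinsSuperpotential} applied to the $A_n$ Frobenius manifold at a semisimple point $\tau$; and (ii) that the four topological conditions \eqref{cond1}--\eqref{cond4} preceding Theorem~\ref{thm:genus-0-identification} hold.

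For (i), I would appeal to the classical analysis of Saito and Dubrovin~\cite{Dub98,Dub2dTFT}: the Frobenius structure on the space of polynomials \eqref{eq:polynomial} defined by the residue formulas \eqref{spmet}--\eqref{sp3pt} applied to $\lambda=f(p,\tau)$ (with primitive form $dp$) is the $A_n$ Frobenius manifold, and one checks that $\lambda=f(p,\tau)$ and the one-form $dp$ do arise from the universal construction of Section~\ref{sec:DubrovinsSuperpotential}. Concretely, the gradient vector field $\phi(\lambda,u)$ of the superpotential solves the extended Gauss-Manin system~\eqref{eq:Gauss-Manin-system-5-31-32} with the local behavior~\eqref{eq:phi-locally-eq1}--\eqref{eq:phi-locally-eq2} at the critical values of $f$; then \eqref{eq:Superpotential-Formula} evaluates to $p$ by direct inspection, since $p$ itself is, up to normalization, a flat coordinate for the intersection form pencil on this family of polynomials.

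For (ii), the checks are elementary at any semisimple $\tau$ (i.e.\ any $\tau$ for which the $n$ critical values of $f(\cdot,\tau)$ are pairwise distinct). Condition \eqref{cond1} holds because $dp$ is nowhere vanishing on $\mathbb{C}\subset\mathbb{CP}^1$ (it is the differential of the affine coordinate). Condition (2) is immediate from $\deg_p f = n+1 > 0$. Condition (3) is tautological: $\mathcal{D}=\mathbb{CP}^1$ has genus zero and $p$ is a global coordinate. Condition \eqref{cond4} follows from semisimplicity: since the $n$ critical values of $f$ are distinct, every singular fibre $\lambda^{-1}(u_i)$ contains exactly one critical point.

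Given (i) and (ii), Theorem~\ref{thm:genus-0-identification} asserts that the correlation differentials produced by topological recursion on the stated spectral curve are related by~\eqref{eq:CohFT-SpectralCurve-match} to the CohFT $\alpha_\tau$ obtained from the $A_n$ Frobenius manifold at $\tau$ via the Givental-Teleman reconstruction. By Teleman's uniqueness theorem for semisimple homogeneous CohFTs~\cite{Teleman}, this CohFT coincides with the shifted Witten class of the $A_n$ singularity as described in~\cite{PandPixZvo} and computed in~\cite{FSZ}, completing the proof. The only non-mechanical step in the plan is the identification in (i) of Dubrovin's universal superpotential with $f(p,\tau)$, and this is the point where I expect to spend the most care; however, it is essentially a direct check against the formulas of~\cite{Dub98}, so no genuinely new technical input is required.
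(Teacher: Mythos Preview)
Your strategy is exactly the paper's: reduce to Theorem~\ref{thm:genus-0-identification} by verifying that Dubrovin's universal construction, applied to the $A_n$ Frobenius manifold, outputs precisely the polynomial superpotential $\lambda=f(p,\tau)$ with $p$ a global coordinate on $\mathbb{CP}^1$, and then check conditions \eqref{cond1}--\eqref{cond4}. The checks in your item (ii) are fine.

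The gap is in (i). You call it ``essentially a direct check'' and say that \eqref{eq:Superpotential-Formula} ``evaluates to $p$ by direct inspection, since $p$ itself is, up to normalization, a flat coordinate for the intersection form pencil''. That sentence hides the entire content of the proof. Dubrovin's construction does not take an arbitrary flat coordinate for the pencil; it takes the \emph{specific} solution $\phi=\sum_{i,j}G_{ij}\phi^{(j)}$ built from the distinguished basis $\phi^{(1)},\dots,\phi^{(n)}$ characterized by \eqref{eq:DubrovinProp-1}--\eqref{eq:DubrovinProp-4}, and then applies \eqref{eq:Superpotential-Formula}. To see that the output is literally $p$ (the branch $p_0$ of $f^{-1}$) one has to: realize the $\phi^{(i)}$ as vanishing-cycle integrals $\phi^{\beta_i}$ for $\beta_i=p_0-p_i$; compute the Gram matrix $G^{ij}$ (here $G^{ii}=1$, $G^{ij}=1/2$ for $i\neq j$) and invert it; observe that $\sum_{i,j}G_{ij}\beta_j=2p_0-\frac{2}{n+1}\sum_{i=0}^n p_i$; and finally evaluate \eqref{eq:Superpotential-Formula} using $Ef=f-\frac{p}{n+1}f'$ together with the vanishing of $\sum_i p_i(\lambda,\tau)$ (no $p^n$ term in $f$). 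None of these steps is hard, but together they are the proof; ``direct inspection'' does not cover them, and the vague appeal to $p$ being a flat coordinate does not pin down the correct normalization or the correct branch.
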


\begin{proof} As we have already mentioned, the function $\lambda=f(p,\tau)$ is known to be a superpotential of the corresponding Frobenius manifold. We have to show that this superpotential can be obtained by Dubrovin's construction in Section~\ref{sec:DubrovinsSuperpotential}. Then it is easy to see that all conditions of Theorem~\ref{thm:genus-0-identification} are satisfied, which implies this theorem. 

We construct solutions of Equation~\eqref{eq:Gauss-Manin-system-5-31-32} in terms of the integrals over the vanishing cycles.  Namely, consider the tangent bundle over the space of polynomials parametrized by $\tau\in\mathcal{T}$. It is identified with the space $\mathbb{C}[p]/(d_pf(p,\tau)/dp)$ by the map $v\mapsto (d_\tau f(p,\tau))(v)$ and equipped with a flat metric given by 
\begin{equation}
(v_1,v_2):=\res_{p=\infty} \frac{d_\tau f(v_1) d_\tau f(v_2)}{d_pf(p,\tau)/dp} dp.
\end{equation}

For a cycle $\beta\in H_0(f^{-1}(\lambda),\mathbb{C})$ we denote by $I_\beta(\lambda,\tau)$ the section of the tangent bundle specified by the following formula:
\begin{equation}
(I_\beta(\lambda,\tau),v) := \int_\beta d_\tau f(v) \cdot \frac{dp}{d_pf}.
\end{equation} 
In normalized canonical coordinates $I_\beta(\lambda,\tau)$ is represented by the vector $\phi^{\beta}(\lambda,\tau)$ with components given by
\beq
\phi^\beta_i(\lambda,\tau):=\left(I_\beta(\lambda,\tau),\frac{\Delta_i^{\frac 12}}{\sqrt{2}} \frac{\partial}{\partial u_i}\right)
\eeq
is a solution of Equation~\eqref{eq:Gauss-Manin-system-5-31-32} (see \cite{GiventalAn}). Let us discuss the singularities of this solution, depending on $\beta$. 

Consider the $\lambda$-plane as the image of the map $\lambda=f(p,\tau)$. Let $u_1,\dots,u_n$ be the critical values of $f(p,\tau)$. We can alway choose a system of cuts $L_i$, $i=1,\dots,n$, from $u_i$ to infinity such that the preimage $f^{-1}(\mathbb{C}\setminus \cup_{i=1}^n L_i)$ is a union of $n+1$ disks, $D_0,D_1,\dots,D_n$, glued along the boundary cuts in the following way:
\begin{itemize}
	\item[--] $D_0$ is glued to $D_i$ along the boundary that is a double cover of $L_i$; in particular, their common boundary contains the critical preimage of $u_i$;
	\item[--] All other lifts of the cut $L_i$ are just cuts inside $D_j$, $j\not=0,i$; the endpoints of these cuts are non-critical preimages of $u_i$. 
\end{itemize}
For $n=1,2,3$ we give the corresponding pictures for a real orientable blow-up at  infinity (that is, the boundary circle on the picture corresponds to the infinity point of the source sphere). The domain $D_0$ is shadowed there.

\begin{figure}[h]
\begin{subfigure}[t]{.3\textwidth}
  \begin{tikzpicture}
  [x=\picsize,y=\picsize,
  dott/.style={circle,draw=black,fill=black,inner sep=0pt,minimum size=3pt},
  dottblue/.style={circle,draw=blue,fill=blue,inner sep=0pt,minimum size=3pt}
  ]
  \useasboundingbox (-\rectsize,-\rectsize) rectangle (\rectsize,\rectsize);
  \draw[pattern=north east lines, pattern color=lightgray, dashed]
  (1,0) arc (0:180:1) -- cycle;
  \draw (0,0) circle (1);
  \node (circ1) [dott] at (0 : 1) {};
  \node (circ2) [dott] at (180 : 1) {};
  \node (center) [dottblue,label=below:$\textcolor{blue}{u_1}$] at (0,0) {};
  \draw [thick,blue] (circ1) -- (center) -- (circ2);
  \end{tikzpicture}
  \subcaption{$A_1$}
  \label{fig:sfig1}
\end{subfigure}
\begin{subfigure}[t]{.3\textwidth}
  \begin{tikzpicture}
  [x=\picsize,y=\picsize,
  dott/.style={circle,draw=black,fill=black,inner sep=0pt,minimum size=3pt},
  dottblue/.style={circle,draw=blue,fill=blue,inner sep=0pt,minimum size=3pt},
  dottgreen/.style={circle,draw=green!50!black,fill=green!50!black,inner sep=0pt,minimum size=3pt}
  ]
  \useasboundingbox (-\rectsize,-\rectsize) rectangle (\rectsize,\rectsize);
  \draw (0,0) circle (1);
  \node (circ1) [dott] at (30 : 1) {};
  \node (circ2) [dott] at (150 : 1) {};
  \node (circ3) [dott] at (270 : 1) {};
  \node (u1l) [dottblue,label=left:$\textcolor{blue}{u_1}$] at (-0.3,0) {};
  \node (u1r) [dottblue,label=below:$\textcolor{blue}{u_1}$] at (0.8,0) {};
  \node (u2l) [dottgreen,label=below:$\textcolor{green!50!black}{u_2}$] at (-0.8,0) {};
  \node (u2r) [dottgreen,label=right:$\textcolor{green!50!black}{u_2}$] at (0.3,0) {};
  \draw [thick,blue] (circ2) -- (u1l) -- (circ3);
  \draw [thick,green!50!black] (circ3) -- (u2r) -- (circ1);
  \draw [thick,green!50!black] (circ2) -- (u2l);
  \draw [thick,blue] (circ1) -- (u1r);
  \begin{scope}[on background layer]
  \draw[pattern=north east lines, pattern color=lightgray, dashed]
  (150:1) -- (-0.3,0) -- (270:1) -- (0.3,0) -- (30:1) arc (30:150:1);
  \end{scope}
  \end{tikzpicture}  
  \subcaption{$A_2$}
  \label{fig:sfig2}
\end{subfigure}
\begin{subfigure}[t]{.36\textwidth}
\begin{tikzpicture}
      [x=\picsize, y=\picsize,
      dott/.style={circle,draw=black,fill=black,inner sep=0pt,minimum size=3pt},
      dottblue/.style={circle,draw=blue,fill=blue,inner sep=0pt,minimum size=3pt},
      dottgreen/.style={circle,draw=green!50!black,fill=green!50!black,inner sep=0pt,minimum size=3pt},
      dottred/.style={circle,draw=red,fill=red,inner sep=0pt,minimum size=3pt}
      ]
      \useasboundingbox (-\rectsize,-\rectsize) rectangle (\rectsize,\rectsize);
      \draw (0,0) circle (1);
      
      \node (circ0) [dott] at (45 : 1) {};
      \node (circ90) [dott] at (135 : 1) {};
      \node (circ180) [dott] at (225 : 1) {};
      \node (circ270) [dott] at (315 : 1) {};
      
      \node (u1main) [dottblue,label=left:$\textcolor{blue}{u_1}$] at (150:0.3) {};
      \node (u2main) [dottgreen,label=below:$\textcolor{green!50!black}{u_2}$] at (270:0.3) {};
      \node (u3main) [dottred,label=right:$\textcolor{red}{u_3}$] at (30:0.3) {};
      
      \begin{scope}[shift={(circ90)}]
        \node (u3aux1) [dottred,label=below:$\textcolor{red}{u_3}$] at (250:0.5) {};
        \node (u2aux1) [dottgreen,label=below:$\textcolor{green!50!black}{u_2}$] at (275:0.5) {};  
      \end{scope}
      
      \begin{scope}[shift={(circ0)}]
        \node (u1aux3) [dottblue,label=below:$\textcolor{blue}{u_1}$] at (-70:0.5) {};
        \node (u2aux3) [dottgreen,label=below:$\textcolor{green!50!black}{u_2}$] at (-95:0.5) {};  
      \end{scope}
      
      \begin{scope}[shift={(circ180)}, rotate=10]
        \node (u3aux12) [dottred,label=below:$\textcolor{red}{u_3}$] at (0:0.5) {};
      \end{scope}
      
      \begin{scope}[shift={(circ270)}, rotate=170]
        \node (u1aux23) [dottblue,label=below:$\textcolor{blue}{u_1}$] at (0:0.5) {};
      \end{scope}
      
      \draw [thick,blue] (circ90) -- (u1main) -- (circ180);
      \draw [thick,green!50!black] (circ180) -- (u2main) -- (circ270);
      \draw [thick,red] (circ270) -- (u3main) -- (circ0);
      
      \draw [thick,green!50!black] (circ90) -- (u2aux1);
      \draw [thick,red] (circ90) -- (u3aux1);
      
      \draw [thick,blue] (circ0) -- (u1aux3);
      \draw [thick,green!50!black] (circ0) -- (u2aux3);
      
      \draw [thick,red] (circ180) -- (u3aux12);
      \draw [thick,blue] (circ270) -- (u1aux23);

      
      \begin{scope}[on background layer]
        \path [pattern=north east lines, pattern color=lightgray, dashed]
        (0,0) circle (1);
        \path [fill = white, opacity=1] (u3main) -- (circ270) arc (-45:45:1) -- (u3main);
        \path [fill = white, opacity=1] (u1main) -- (circ90) arc (135:225:1) -- (u1main);
        \path [fill = white, opacity=1] (u2main) -- (circ180) arc (225:315:1) -- (u2main);
      \end{scope}
      \end{tikzpicture}
  \subcaption{$A_3$}
  \label{fig:sfig3}
\end{subfigure}
\caption{$A_n$-singularities, $1\leq n \leq 3$}
\label{fig:fig}
\end{figure}
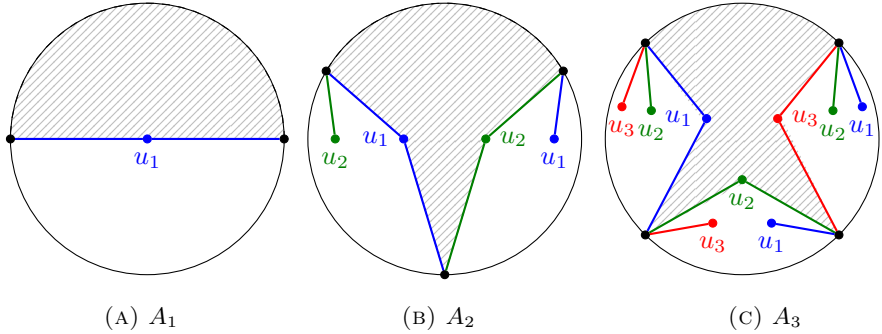

Consider the vanishing cycles $\beta_i\in H_0(f^{-1}(\lambda),\mathbb{C})$ given by $\beta_i:=p_0-p_i$, where $\lambda=f(p_0,\tau)=f(p_i,\tau)$, and $p_0\in D_0$, $p_i\in D_i$. Then the system of solutions of Equation~\eqref{eq:Gauss-Manin-system-5-31-32}
given by $\phi^{(i)}(\lambda,\tau):=\phi^{\beta_i}(\lambda,\tau)$ satisfies the properties given by Equations~\eqref{eq:DubrovinProp-1}--\eqref{eq:DubrovinProp-4}. In particular, $G^{ij}=1/2$ for $i\not=j$ and $G^{ii}=1$. The inverse matrix is given by $G_{ii}=2n/(n+1)$ and $G_{ij}=-2/(n+1)$ for $i\not=j$. Therefore, Dubrovin's solution $\phi=\sum_{i,j=1}^n G_{ij}\phi^{(j)}$ is equal to $\phi^{\beta_0}$ for 
\begin{equation}
\beta_0=\sum_{i=1}^n\left(\frac{2n}{n+1}-(n-1)\frac{2}{n+1}\right)\beta_i = 2p_0-\frac{2}{n+1}\sum_{i=0}^n p_i.
\end{equation}

Recall that for the Frobenius structure $A_n$, $d=(n-1)/(n+1)$. Also we recall that for the Euler vector field $E=\sum_{i=1}^n u_i \frac{\partial}{\partial u_i}$ and the unit vector field $e=\sum_{i=1}^n \frac{\partial}{\partial u_i}$ so that we have:
\begin{align}
Ef(p,\tau)& = f(p,\tau)-\frac{p}{n+1} \frac{d_pf(p,\tau)}{dp}, \\
ef(p,\tau)& = 1.
\end{align}
The formula $\phi^T (U-\lambda) \Psi \un$ can be written as 
$\left(I_{\beta_0}(\lambda,\tau),(E-\lambda e)/\sqrt{2}\right)$. Therefore,
\begin{align}
& \frac{\sqrt{2}}{1-d}\phi^T (U-\lambda) \Psi \un  = \frac{n+1}{2}\int_{\beta_0} (E-\lambda e) f(p,\tau) \cdot \frac{dp}{d_pf(p,\tau)} \\ \notag
& = \frac{n+1}{2} 
\int_{\beta_0} \left(f(p,\tau)-\frac{p}{n+1} \frac{d_pf(p,\tau)}{dp} -\lambda\right) \cdot \frac{dp}{d_pf(p,\tau)} .
\end{align}
Since the cycle $\beta_0$ lies in $f^{-1}(\lambda)$, then  $(f(p,\tau)-\lambda)|_{\beta_0}=0$. Therefore, the last integral can be rewritten as 
\begin{equation}
\frac{n+1}{2} 
\int_{\beta_0} \frac{p}{n+1} = p_0(\lambda,\tau)-\frac{1}{n+1} \sum_{i=0}^{n} p_i(\lambda,\tau).
\end{equation}
Since $\sum_{i=0}^{n} p_i(\lambda,\tau)=0$ (recall the form of the polynomial $f(p,\lambda)$), we conclude  that the function $\frac{\sqrt{2}}{1-d}\phi^T (U-\lambda) \Psi \un$ is equal to the branch $D_0$ of $p=f^{-1}(\lambda,\tau)$. 

So, $p(\mathbb{C}\setminus \cup_{i=1}^n L_i)=D_0$. It is obvious that $d_\lambda p$ has no zeros in $\mathbb{C}\setminus \cup_{i=1}^n L_i$, so $\hat D=D$, and one of the possible analytic continuation of the function $\lambda=f(p,\tau)|_{D_0}$ is its extension to the polynomial $f(p,\tau)$ defined on $\mathbb{CP}^1$. All condition of Theorem~\ref{thm:genus-0-identification} are satisfied, so we apply it here to complete the proof.
\end{proof}

\begin{remark}[Relation to Milanov's spectral curve]
	The global spectral curve that we constructed differs from the one constructed by Milanov in~\cite{MilanovGlobal}. Milanov gets a spectral curve with the same local behavior as $x=f(y,\tau)$ near the critical points, but, in our terms, he chooses a different analytic continuation of $\lambda|_D$. He constructs an analytic continuation using the action of the Weyl group (we revisit his construction in our terms in Section~\ref{sec:general-theory}), and obtains a curve where all preimages of the critical points in the $x$-plane are critical. In our terms, this can be achieved by gluing $n!$ copies of the curve $x=f(y,\tau)$ along the cuts connecting the non-critical preimages of the points $u_i$, $i=1,\dots,n$ such that each point belongs to exactly one cut. This makes all preimages of $u_1,\dots,u_n$ critical and will produce a curve of genus $1+\frac{n!}{2}(\frac{n^2}{2}-\frac n2 -2)$ where the function $x$ has degree $(n+1)!$, and it has $n!$ poles of degree $(n+1)$ each (cf. computation in~\cite{MilanovGlobal} and further explanation in Section~\ref{sec:general-theory}). 
\end{remark}

\subsection{Bouchard-Eynard recursion}

In this Section we discuss an application of Theorem~\ref{thm:an-singularity}. There is a more general formulation of topological recursion that works for functions $x$ with higher order singular points~\cite{BouchardEynard}. Locally, a higher order singularity is given by $x=y^{n+1}$, $B=dy_1dy_2/(y_1-y_2)^2$. Bouchard and Eynard announced a theorem~\cite{BouchardEynardThm} that identifies the coefficients of the local expansion in $y$ at $y=0$ of the correlation differentials of this spectral curve with the coefficients of the string solution of the $(r+1)$-Gelfand-Dickey hierarchy, also known as the total descendant potential of the $A_r$ singularity. The proof of Bouchard and Eynard goes through analysis of matrix models. Here we give a new proof of their theorem, namely, we derive it directly from Theorem~\ref{thm:an-singularity}. 

\begin{theorem} \cite{BouchardEynardThm} \label{th:Bouchard-Eynard}
	The Bouchard-Eynard recursion applied to $x=p^{n+1}$, $y=p$, $B=dp_1dp_2/(p_1-p_2)^2$ produces differentials $\omega_{g,k}$, whose expansions near infinity are given by 
	\begin{align}\label{eq:Bouchard-Eynard}
	& \omega_{g,k}(p_1,\dots,p_k) = \sum_{\substack{0\leq a_1,\dots,a_k \leq n-1\\ d_1,\dots,d_k}}
	\langle \tau_{d_1a_1}\cdots \tau_{d_ka_k} \rangle_{g,k} \times
	\\ \notag
	& 
	\prod_{j=1}^k
	\left(\frac{(a_j+1)(a_j+1+(n+1))\cdots (a_j+1+d_j(n+1))}{(-1)^{d_j} (n+1)^{d_j+1}} \frac{dp_j}{p_j^{(n+1)d_j+a_j+2}}\right),
	\end{align}
	where $\langle\tau_{d_1a_1}\cdots \tau_{d_ka_k} \rangle_{g,k}$ are the coefficients of the string solution of the $(n+1)$-Gelfand-Dickii hierarchy~\cite{WittenConj,GiventalAn,FSZ}.
\end{theorem}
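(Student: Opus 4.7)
The plan is to derive Theorem~\ref{th:Bouchard-Eynard} from Theorem~\ref{thm:an-singularity} by degenerating a family of simple-critical-point spectral curves to $x=p^{n+1}$ as the parameter $\tau$ approaches the origin of the $A_n$ Frobenius manifold. For every $\tau$ in a small punctured neighbourhood of the origin where the critical points of $f(p,\tau)$ are simple and distinct, Theorem~\ref{thm:an-singularity} identifies the TR correlation differentials $\omega^\tau_{g,k}$ of the spectral curve $(\mathbb{CP}^1, x=f(p,\tau), y=p, B=dp_1 dp_2/(p_1-p_2)^2)$ with the generating series of ancestor correlators $\langle \tau_{d_1 a_1}\cdots \tau_{d_k a_k}\rangle^\tau_{g,k}$ of the shifted Witten class of $A_n$ at $\tau$. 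These ancestor correlators, written in the flat basis of the Frobenius algebra (given by the classes $1,p,\dots,p^{n-1}$ in the Jacobi ring of $f$), are polynomial in $\tau_1,\dots,\tau_n$ and thus extend smoothly to $\tau=0$; at $\tau=0$ the shifted Witten class specializes to the unshifted Witten class $W_{n+1}$ on the moduli of $(n+1)$-spin structures, whose correlators are the descendant invariants in the string solution of the $(n+1)$-Gelfand-Dickey hierarchy by the Faber-Shadrin-Zvonkine theorem.

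On the TR side, as $\tau\to 0$ the $n$ simple critical points of $f(p,\tau)$ coalesce to the single ramification point of order $n+1$ at $p=0$, and the sum-of-residues formula defining simple TR degenerates into the Bouchard-Eynard residue formula at this higher-order point, yielding $\lim_{\tau\to 0}\omega^\tau_{g,k}=\omega^{\rm BE}_{g,k}$. Combined with the specialization on the CohFT side, this identifies the Bouchard-Eynard differentials with the descendant potential of $A_n$ expressed through the auxiliary differentials at $p=0$.

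To obtain the explicit coefficients in \eqref{eq:Bouchard-Eynard}, one performs the Laurent expansion at $p_j=\infty$ of those auxiliary differentials. With $x=p^{n+1}$ we have $d/dx=\frac{1}{(n+1)p^n}\frac{d}{dp}$, and the flat-basis auxiliary differentials behave asymptotically as $d\xi^a \sim \frac{a+1}{n+1}\,\frac{dp}{p^{a+2}}$ near $p=\infty$ (for $a=0,\dots,n-1$). A direct recursive application of $d\circ(d/dx)^{d_j}$ shifts the pole order at infinity by $n+1$ at each step and produces the coefficient
\begin{equation*}
\frac{(a_j+1)(a_j+1+(n+1))\cdots(a_j+1+d_j(n+1))}{(-1)^{d_j}(n+1)^{d_j+1}}\,\frac{dp_j}{p_j^{(n+1)d_j+a_j+2}},
\end{equation*}
matching the factor in \eqref{eq:Bouchard-Eynard}.

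The main obstacle is the degeneration step: justifying that the sum of simple residues, taken over the $n$ coalescing critical points, converges to the single higher-order residue of Bouchard-Eynard. This is a delicate local residue computation in a shrinking neighbourhood of the coalescence point, amounting to a careful expansion of the recursion kernel and the Bergman kernel in powers of the separations between the critical points; it is essentially the rationale behind the Bouchard-Eynard generalization of topological recursion, and one expects a direct verification by expanding the integrands uniformly near the coalescence.
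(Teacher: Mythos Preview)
Your proposal is essentially the same strategy as the paper's proof: specialize Theorem~\ref{thm:an-singularity} along a path $\tau\to 0$ and use that the Bouchard--Eynard differentials arise as the limits of the ordinary TR differentials when simple critical points coalesce. A few points of comparison are worth noting.

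First, the paper does \emph{not} attempt to justify the degeneration step you flag as ``the main obstacle''. It explicitly takes as input the one property of the Bouchard--Eynard recursion that it is compatible with limits of simple-critical-point curves (citing~\cite{BouchardEynard}), and works along the single one-parameter family $f(p,\tau)=p^{n+1}+\epsilon p$ rather than a general approach to $\tau=0$. So your last paragraph is an accurate description of what would be needed for a self-contained argument, but the paper's proof simply invokes this compatibility.

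Second, the technical heart of the paper's argument is an explicit passage from the normalized-canonical auxiliary differentials $\xi_i$ to flat-basis ones, using the Jacobian $\partial u_i/\partial t_a = c_i^a + O(\epsilon)$ to show that $\sum_i \xi_i(p)\,\Delta_i^{1/2}\partial_{u_i} = -\sum_{a=0}^{n-1} p^{-a-1}\partial_{t_a} + O(\epsilon)$ near $p=\infty$. This is the step that produces the seed $-1/p^{a+1}$ for the flat direction $\partial_{t_a}$, from which repeated application of $d\circ(d/dx)^{d_j}$ yields the factor in~\eqref{eq:Bouchard-Eynard}. Your sketch jumps over this change of frame; the claim ``$d\xi^a\sim\frac{a+1}{n+1}\frac{dp}{p^{a+2}}$'' conflates the normalized-canonical $\xi^i$ with the flat-basis differentials and should be replaced by this explicit computation.

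Finally, the identification of ancestor correlators at $\tau$ with descendant ones at $0$ up to $O(\epsilon)$ is handled in the paper via the $S$-matrix relation (ancestors $=$ descendants $+O(\epsilon)$ along this family), exactly as you indicate.
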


Note that we don't recall and don't use the definition of the Bouchard-Eynard recursion. The only property that we are using here is that it is compatible with the usual recursion on the curves with simple singularities and the limits~\cite{BouchardEynard}. In this case, we know that in the neighborhood of infinity the correlation differentials of the Bouchard-Eynard recursion are the limits for $\epsilon\to 0$ of the correlation differentials of the usual recursion applied to $x=y^{n+1}+\epsilon y$. 

Let us now prove theorem~\ref{th:Bouchard-Eynard}.

\begin{proof}
	The flat coordinates $t_0=t_{\un},t_1,\dots,t_{n-1}$ are given on the space of polynomials $f(p,\tau)$ defined in Equation~\eqref{eq:polynomial} by the following formula:
\begin{equation}
f(p,\tau)^{\frac{1}{n+1}}=p+\frac{1}{n+1} \left( \frac{t_{n-1}}{p}+\frac{t_{n-2}}{p^2}+\cdots+\frac{t_0}{p^n} \right) + O\left(\frac{1}{p^{n+1}}\right). 
\end{equation}
Recall that the canonical coordinates are the critical values $u_1,\dots,u_n$ of $f(p,\tau)$ and it is obvious that $\partial u_i/\partial t_{\un}=1$. We denote by $c_1,\dots,c_n$ the positions of the critical points of function $f(p,\tau)$; so $u_i=f(c_i,\tau)$, $i=1,\dots,n$.  

We perform all computations only on a special curve in the space of polynomials, namely, $f(p,\tau)=p^{r+1}+\epsilon p$, and we are interested in all results only up $O(\epsilon)$ for $\epsilon\to 0$. In particular, we note that $t_a=O(\epsilon)$, $a=0,\dots,n-1$.

The full Jacobian of the change from the canonical to flat coordinates is then given by the following computation:
\begin{equation} \label{eq:full-jacobian}
\frac{\partial u_i}{\partial t_a} = \frac{\partial f(c_i,\tau)}{\partial t_a} =  f(c_i,\tau)^{\frac{n}{n+1}}\left(\frac{1}{c_i^{n-a}} + O(\epsilon)\right) = 
c_i^a \frac{\partial u_i}{\partial t_0} + O(\epsilon) = c_i^a +O(\epsilon). 
\end{equation}

The correlation differentials, written in terms of a CohFT considered in normalized canonical frame in Equation~\eqref{eq:CohFT-SpectralCurve-match}, can be rewritten in terms of the correlators of the ancestor potential of Givental~\cite{GiventalAn} $\cal A_t(\{t_{d,\alpha}\})$ considered at the point $t$ in flat coordinates $t_{d,\alpha}$, $d=0,1,2,\dots,$, $\alpha=0,\dots,n-1$ in the following way:
\begin{align}
\omega_{g,k}& =\sum_{\substack{i_1,\dots,i_k \\ d_1,\dots,d_k}} \int_{\overline{\mathcal{M}}_{g,k}} \alpha_{g,k}\left(\Delta_{i_1}^{\frac{1}{2}} \frac{\partial}{\partial u_{i_1}},\dots,\Delta_{i_k}^{\frac{1}{2}} \frac{\partial}{\partial u_{i_k}}\right) \prod_{j=1}^k
\psi_j^{d_j} d\left(\left(\frac{d}{dx}\right)^{d_j} \xi_{i_j}\right). 
\\ \notag
&=\sum_{\substack{\alpha_1,\dots,\alpha_k \\ d_1,\dots,d_k}}
\langle \tau_{d_1\alpha_1}\cdots \tau_{d_k\alpha_k} \rangle_{g,k} (t) \prod_{j=1}^k
d\left(\left(\frac{d}{dx}\right)^{d_j} \frac{-1}{p^{\alpha_j+1}}\right) +O(\epsilon). 
\end{align}
(by $\langle \tau_{d_1\alpha_1}\cdots \tau_{d_k\alpha_k} \rangle_{g,k} (t)$ we denote the coefficients of the expansion of $\log \mathcal{A}_t$).
Indeed, let us expand the vector $\sum_{i=1}^n  \xi_i(p) \Delta_{i}^{\frac{1}{2}} \frac{\partial}{\partial u_{i}}$ near $p=\infty$.  Recall that we denote by $c_1,\dots,c_n$ the positions of the critical points of function $f(p,\tau)$. We have:
\begin{align}
& \sum_{i=1}^n \xi_i(p)  \Delta_{i}^{\frac{1}{2}} \frac{\partial}{\partial u_{i}}  = \sum_{i=1}^n \left.\left(\frac{1}{z-p} \frac{dz}{d\sqrt{f(p,\tau)-u_i}} \right) \right|_{z=c_i} \Delta_{i}^{\frac{1}{2}} \frac{\partial}{\partial u_{i}} 
\\ \notag 
& = - \sum_{k=0}^\infty \frac{1}{p^{k+1}} \sum_{i=1}^n c_i^k \frac{\partial}{\partial u_{i}}  = -\sum_{k=0}^{n-1} \frac{1}{p^{k+1}} \frac{\partial}{\partial t_k} +O(\epsilon).
\end{align}
(we use Equation~\eqref{eq:full-jacobian} and the fact that $c_i^n=O(\epsilon)$ for the last equality).

Recall~\cite{PandPixZvo} that the correlators of the ancestor potential $\mathcal{A}_t$  are represented in terms of the correlators of the descendant potential (which is exactly the string solution of the $(n+1)$-Gelfand-Dickey hierarchy) as
\begin{equation}
\langle \tau_{d_1\alpha_1}\cdots \tau_{d_k\alpha_k} \rangle_{g,k} (t) = \langle \tau_{d_1\alpha_1}\cdots \tau_{d_k\alpha_k} \rangle_{g,k} +O(\epsilon). 
\end{equation}
Thus we see that 
\begin{align}
\omega_{g,k}& =\sum_{\substack{\alpha_1,\dots,\alpha_k \\ d_1,\dots,d_k}}
\langle \tau_{d_1\alpha_1}\cdots \tau_{d_k\alpha_k} \rangle_{g,k} \prod_{j=1}^k
d\left(\left(\frac{d}{dx}\right)^{d_j} \frac{-1}{p^{\alpha_j+1}}\right) +O(\epsilon). 
\end{align}
In the limit $\epsilon\to 0$ we get exactly Equation~\eqref{eq:Bouchard-Eynard}.
\end{proof}


\section{Frobenius manifolds for hypermaps}
\label{sec:hypermaps}

In this Section we construct a global spectral curve for the Frobenius manifold given by the superpotential $\lambda=f(p,a)$, where $a=(a_0,\dots,a_{n+1})$, $n\geq 1$, and
\begin{equation}\label{eq:hypermaps-superpotential}
f(p,a)=p^n+a_2 p^{n-2}+a_3 p^{n-3} +\cdots + a_n + \frac{a_{n+1}}{p-a_1}.
\end{equation}
This superpotential defines a semi-simple Frobenius manifold (this Frobenius manifold is studied in~\cite[Section 5]{Dub2dTFT}).  Furthermore the spectral curve $$(\Sigma,x,y,B)=\left(\mathbb{CP}^1,f(p,a),p,\frac{dp_1dp_2}{(p_1-p_2)^2}\right)$$ satisfies equations \eqref{eq:Identification-ConstantsDelta}-\eqref{eq:decomposition-doubleLaplace-B} hence it stores the correlators of a CohFT via Equation~\eqref{eq:CohFT-SpectralCurve-match}.  The following theorem answers the question of whether these two CohFTs coincide.
\begin{theorem}\label{thm:hypermaps} The CohFT associated to the Frobenius manifold given by the superpotential $\lambda=f(p,a)$ coincides with the one reconstructed from the spectral curve $(\mathbb{CP}^1,f(p,a),p,dp_1dp_2/(p_1-p_2)^2)$.
\end{theorem}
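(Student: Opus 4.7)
The plan is to apply Theorem~\ref{thm:genus-0-identification} to the spectral curve $(\mathbb{CP}^1,f(p,a),p,dp_1dp_2/(p_1-p_2)^2)$. This reduces the theorem to two subtasks: (a) verifying the four conditions of that theorem, and (b) identifying the spectral curve with the output of Dubrovin's universal construction of Section~\ref{sec:DubrovinsSuperpotential} applied to the Frobenius manifold $M$ determined by $f(p,a)$ via the residue formulas (\ref{spmet})--(\ref{sp3pt}).

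Task (a) is essentially immediate. Condition (2) holds because $f(p,a)$ has a pole of order $n$ at $p=\infty$. Condition (3) is clear since $\Sigma=\mathbb{CP}^1$ is compact of genus $0$ with $p$ a global coordinate. Condition~(\ref{cond4}) holds on an open subset of the parameter space: $f'(p)(p-a_1)^2$ is a polynomial of degree $n+1$ in $p$, so generically $f$ has $n+1$ simple critical points with pairwise distinct critical values. Condition~(\ref{cond1}) is that $d_\lambda p$ has no zeros in $\mathbb{C}\setminus\bigcup_iL_i$; since $dp/d\lambda=1/f'(p)$, its zeros on $\mathcal{D}$ are exactly the poles of $f$ in $p$, namely $p=a_1$ and $p=\infty$, both of which sit over $\lambda=\infty$ and so contribute no zeros over finite $\lambda$.

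For task (b) I would follow the template of Section~\ref{sec:Ansing}. By Remark~\ref{rem:more-general-phi} it suffices to exhibit any solution $\phi$ of the Gauss-Manin system (\ref{eq:Gauss-Manin-system-5-31-32}) whose local expansions at $\lambda\to u_i$ agree with (\ref{eq:phi-locally-eq1})--(\ref{eq:phi-locally-eq2}) up to a single-valued holomorphic term, and then to compute the superpotential (\ref{eq:Superpotential-Formula}) for it and check that its value is the desired sheet of $f^{-1}$. Label the $n+1$ sheets of $f:\mathbb{CP}^1\to\mathbb{CP}^1$ so that one distinguished sheet $p_0(\lambda)$ is the preimage lying near the simple pole $p=a_1$ (the other $n$ being the preimages near $p=\infty$), and choose the cuts $L_i$ so that at each critical value $u_i$ the sheet $p_0$ collides with one of the other sheets $p_i$; this is possible because the monodromy of the connected covering $f$ acts transitively on the fiber. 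The vanishing cycles $\beta_i=p_0-p_i\in H_0(f^{-1}(\lambda),\mathbb{C})$ and the period-integral formula of Section~\ref{sec:Ansing},
\begin{equation*}
(I_{\beta_i}(\lambda,a),v)=\int_{\beta_i} d_a f(v)\cdot\frac{dp}{d_p f},
\end{equation*}
then give, by the general Hurwitz-space argument, solutions $\phi^{(i)}$ of (\ref{eq:Gauss-Manin-system-5-31-32}) satisfying (\ref{eq:DubrovinProp-1})--(\ref{eq:DubrovinProp-4}). Assembling Dubrovin's combination $\phi=\sum_{i,j}G_{ij}\phi^{(j)}$ and evaluating $\tfrac{\sqrt2}{1-d}\phi^T(U-\lambda)\Psi\un=(I_{\beta_0},(E-\lambda e)/\sqrt2)$ by means of the quasi-homogeneity identity $E\cdot f=nf-p\,\partial_p f$ (a direct computation from (\ref{eq:hypermaps-superpotential}) using the canonical weights $\deg p=\deg a_1=1$, $\deg a_k=k$, $\deg a_{n+1}=n+1$), together with $(f-\lambda)|_{\beta_0}=0$, reduces the integrand on $\beta_0$ to a simple rational function of $p$. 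A residue computation at the two poles $p=a_1,\infty$ of $f$ (using Vieta applied to $(f(p)-\lambda)(p-a_1)$ to control $\sum_\alpha p_\alpha(\lambda)$) collapses the result to $p_0(\lambda)$ on $\mathbb{C}\setminus\bigcup_iL_i$. Analytic continuation of $p_0$ across the cuts unfolds the complement of the cuts into $\mathbb{CP}^1$ with $p$ as global coordinate, completing the identification.

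The main obstacle is the computation of the pairing $G^{ij}=(\phi^{(i)})^T(U-\lambda)\phi^{(j)}$ and the inversion that produces the precise coefficients of Dubrovin's cycle $\beta_0$. In the $A_n$ case the full $S_{n+1}$-symmetry of the polynomial covering forced $G^{ii}=1$ and $G^{ij}=1/2$ for $i\neq j$, permitting an elementary inversion; here the asymmetry between the simple pole at $p=a_1$ and the order-$n$ pole at $p=\infty$ breaks that symmetry, and the contributions of these two pole types to the residue integrals for $G^{ij}$ have to be tracked separately. Once $\beta_0$ is identified, the final collapse to $p_0(\lambda)$ proceeds exactly as in the last display of the proof of Theorem~\ref{thm:an-singularity}, and Theorem~\ref{thm:genus-0-identification} then yields the conclusion.
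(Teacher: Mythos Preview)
Your plan is the same as the paper's, and most of the scaffolding is correct. But the ``main obstacle'' you flag is worse than you think: for this family the matrix $G^{ij}$ is \emph{degenerate}, so the inversion step cannot be carried out at all. Concretely, with the cut system adapted to this cover the distinguished sheet $D_0$ is glued to the same neighbouring sheet $D_1$ along \emph{two} boundary components (those over $L_1$ and $L_{n+1}$), so $\phi^{(n+1)}=\phi^{(1)}$ and $G^{1,n+1}=G^{n+1,1}=1=G^{11}=G^{n+1,n+1}$ while all other off-diagonal entries are $1/2$; the rows indexed by $1$ and $n+1$ coincide. The paper bypasses this exactly via the Remark~\ref{rem:more-general-phi} mechanism you mention: one writes down directly $\phi=\frac{2}{n+1}\sum_{i=1}^{n}\phi^{(i)}=\phi^{\beta_0}$ with $\beta_0=\frac{2}{n+1}\sum_{i=0}^{n}p_i-2p_0$, checks the local behaviour~(\ref{eq:phi-locally-eq1})--(\ref{eq:phi-locally-eq2}) up to a holomorphic term, and then the Euler identity $Ef=f-\frac{p}{n}\,\partial_pf$ together with $(f-\lambda)|_{\beta_0}=0$ and $\sum_ip_i(\lambda)=a_1$ (for $n>1$) collapse $(\ref{eq:Superpotential-Formula})$ to $p_0-\frac{a_1}{n+1}$, i.e.\ the desired branch \emph{shifted by a constant}. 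This shift is harmless for topological recursion, so Theorem~\ref{thm:genus-0-identification} then applies.

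Two further points you should anticipate. First, the conditions (\ref{cond1})--(\ref{cond4}) of Theorem~\ref{thm:genus-0-identification} must be verified for Dubrovin's $p_{\mathrm{Dub}}$, not for the original coordinate $p$; for $n>1$ these differ by a constant so your check goes through, but the logic should be in that order. Second, for $n=1$ one finds $p_{\mathrm{Dub}}=p_0-\frac{\lambda+a_1}{2}=\tfrac12\sqrt{(\lambda-u_1)(\lambda-u_2)}$, so $d_\lambda p_{\mathrm{Dub}}$ has a zero at $\lambda=\tfrac12(u_1+u_2)$ and condition~(\ref{cond1}) fails; Theorem~\ref{thm:genus-0-identification} does not apply and this case is handled separately (the shift of $y$ by a function of $x$ is still innocuous for topological recursion, and one checks the identification directly).
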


\begin{remark}
The correlation differentials for this spectral curve considered for the particular values of the parameters $a$ enumerate hypermaps on the curves. This is proved in~\cite{DOPS14}, see also~\cite{DoManescu}, where some special case of that was conjectured. 

So, Theorem~\ref{thm:hypermaps} is to be used in the converse way: We start with a combinatorial problem that is known to be solved by global topological recursion. It appears that the correlators of this global topological recursion are expressed in terms of a CohFT. This CohFT appears to be homogeneous, so it is associated to a Frobenius manifold, and this Theorem describes precisely the underlying Frobenius manifold. 
\end{remark}

\begin{proof} The proof is completely parallel to the proof of Theorem~\ref{thm:an-singularity}. Note that as in the case of $A_n$-singularity, we claim that the spectral curve is the superpotential itself.
	We use Theorem~\ref{thm:genus-0-identification}, so it is enough to show that we can reproduce the superpotential $\lambda=f(p,a)$ via Dubrovin's construction from Section~\ref{sec:DubrovinsSuperpotential}. 
	
	The canonical coordinates $u_1,\dots,u_{n+1}$ of this Frobenius manifold are the critical values of $f(p,a)$; the Euler vector field is given by 
	\begin{equation}
	E=\sum_{i=1}^n u_i \frac{\partial}{\partial u_i}=\sum_{i=1}^{n+1} \frac{i}{n} a_i \frac{\partial}{\partial a_i};	
	\end{equation}
 	the unit vector field is equal to 
 	\begin{equation}
 	e=\sum_{i=1}^n \frac{\partial}{\partial u_i}= \frac{\partial}{\partial a_n};
 	\end{equation}
	and the constant $d$ is equal to $(n-2)/n$. Note that 
	\begin{equation}\label{eq:Euler-Hypermaps}
	Ef(p,a) = f(p,a)-\frac{p}{n} \frac{df(p,a)}{dp}.
	\end{equation}

	As in the case of $A_n$ singularity, the solutions to the Equation~\eqref{eq:Gauss-Manin-system-5-31-32} are given by the integrals over the cycles $\beta\in H_0(f^{-1}(\lambda),\mathbb{C})$, where the components of the solutions are given by 
	\begin{equation}
	\phi^\beta_i:=\int_\beta \frac{\Delta_i^{\frac 12}}{\sqrt{2}} \frac{\partial f(p,a)}{\partial u_i} \cdot \left(\frac{df(p,a)}{dp}\right)^{-1}.
	\end{equation}
	
	Consider the $\lambda$-plane as the image of the map $\lambda=f(p,\tau)$. Recall that $u_1,\dots,u_{n+1}$ are the critical values of $f(p,\tau)$. We can alway choose a system of cuts $L_i$, $i=1,\dots,n+1$, from $u_i$ to infinity such that the preimage $f^{-1}(\mathbb{C}\setminus \cup_{i=1}^n L_i)$ is a union of $n+1$ disks, $D_0,D_1,\dots,D_n$, glued along the boundary cuts in the following way:
	\begin{itemize}
		\item[--] $D_0$ is glued to $D_i$, $i=2,\dots,n$, along the boundary that is a double cover of $L_i$; in particular, their common boundary contains the critical preimage of $u_i$;
		\item[--] $D_0$ is glued to $D_1$ along two components of the boundary that are double covers of $L_1$ and $L_{n+1}$ and these boundary components have common point $p=a_{1}$. In particular, these boundary components contain the critical preimages of $u_1$ and $u_{n+1}$;
		\item[--] All other lifts of the cut $L_i$ are just cuts inside $D_j$, $j\not=0,i$ for $i=2,\dots,n$ and $j\not=0,1$ for $i=1,n+1$; the endpoints of these cuts are non-critical preimages of $u_i$. 
	\end{itemize}
	For $n=1,2,3$ we give the corresponding pictures for a real orientable blow-up at  infinity (that is, the external boundary circle on the picture corresponds to the infinity point of the source sphere, and the internal circle corresponds to $p=a_1$). The domain $D_0$ is shadowed.
	
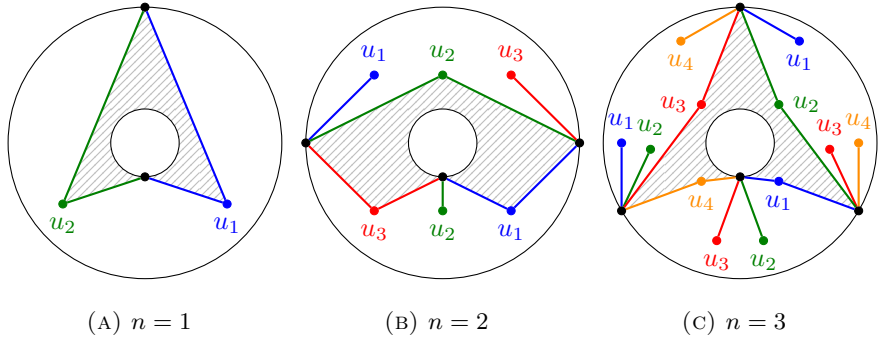
\begin{figure}[h]
\begin{subfigure}[t]{.3\textwidth}
  \begin{tikzpicture}
  [x=\picsize,y=\picsize,
  dott/.style={circle,draw=black,fill=black,inner sep=0pt,minimum size=3pt},
  dottblue/.style={circle,draw=blue,fill=blue,inner sep=0pt,minimum size=3pt},
  dottgreen/.style={circle,draw=green!50!black,fill=green!50!black,inner sep=0pt,minimum size=3pt}
  ]
    \useasboundingbox (-\rectsize,-\rectsize) rectangle (\rectsize,\rectsize);
  \draw (0,0) circle (1);
  \draw (0,0) circle (0.25);
  \node (circlarge1) [dott] at (90 : 1) {};
  \node (circsmall1) [dott] at (270 : 0.25) {};
  \node (u1) [dottblue,label=below:$\textcolor{blue}{u_1}$] at (0.6,-0.45) {};
  \node (u2) [dottgreen,label=below:$\textcolor{green!50!black}{u_2}$] at (-0.6,-0.45) {};
  \draw [thick,blue] (circlarge1) -- (u1) -- (circsmall1);
  \draw [thick,green!50!black] (circlarge1) -- (u2) -- (circsmall1);
  \begin{scope}[on background layer]
  \fill[pattern=north east lines, pattern color=lightgray, dashed]
  (90:1) -- (-0.6,-0.45) -- (270:0.25) arc (270:-90:0.25) -- (0.6,-0.45);
  \end{scope}
  \end{tikzpicture}
  \subcaption{$n=1$}
  \label{fig:sfigg1}
\end{subfigure}
\begin{subfigure}[t]{.3\textwidth}
  \begin{tikzpicture}
  [x=\picsize,y=\picsize,
  dott/.style={circle,draw=black,fill=black,inner sep=0pt,minimum size=3pt},
  dottblue/.style={circle,draw=blue,fill=blue,inner sep=0pt,minimum size=3pt},
  dottgreen/.style={circle,draw=green!50!black,fill=green!50!black,inner sep=0pt,minimum size=3pt},
  dottred/.style={circle,draw=red,fill=red,inner sep=0pt,minimum size=3pt}
  ]
      \useasboundingbox (-\rectsize,-\rectsize) rectangle (\rectsize,\rectsize);
  \draw (0,0) circle (1);
  \draw (0,0) circle (0.25);
  \node (circlarge1) [dott] at (0 : 1) {};
  \node (circlarge2) [dott] at (180 : 1) {};
  \node (circsmall1) [dott] at (270 : 0.25) {};
  \node (u1s) [dottblue,label=above:$\textcolor{blue}{u_1}$] at (-0.5,0.5) {};
  \node (u1d) [dottblue,label=below:$\textcolor{blue}{u_1}$] at (0.5,-0.5) {};
  \node (u2s) [dottgreen,label=below:$\textcolor{green!50!black}{u_2}$] at (0,-0.5) {};
  \node (u2d) [dottgreen,label=above:$\textcolor{green!50!black}{u_2}$] at (0,0.5) {};
  \node (u3s) [dottred,label=above:$\textcolor{red}{u_3}$] at (0.5,0.5) {};
  \node (u3d) [dottred,label=below:$\textcolor{red}{u_3}$] at (-0.5,-0.5) {};
  \draw [thick,blue] (circlarge2) -- (u1s);
  \draw [thick,blue] (circlarge1) -- (u1d) -- (circsmall1);
  \draw [thick,green!50!black] (circsmall1) -- (u2s);
  \draw [thick,green!50!black] (circlarge1) -- (u2d) -- (circlarge2);
  \draw [thick,red] (circlarge1) -- (u3s);
  \draw [thick,red] (circlarge2) -- (u3d) -- (circsmall1);
  \begin{scope}[on background layer]
  \fill[pattern=north east lines, pattern color=lightgray, dashed]
  (270:0.25) -- (-0.5,-0.5) -- (180:1) -- (0,0.5) -- (0:1) -- (0.5,-0.5) -- (-90:0.25) arc (-90:270:0.25);
  \end{scope}
  \end{tikzpicture}
  \subcaption{$n=2$}
  \label{fig:sfigg2}
\end{subfigure}
\begin{subfigure}[t]{.3\textwidth}

\begin{tikzpicture}
[x=\picsize, y=\picsize,
dott/.style={circle,draw=black,fill=black,inner sep=0pt,minimum size=3pt},
dottblue/.style={circle,draw=blue,fill=blue,inner sep=0pt,minimum size=3pt},
dottgreen/.style={circle,draw=green!50!black,fill=green!50!black,inner sep=0pt,minimum size=3pt},
dottred/.style={circle,draw=red,fill=red,inner sep=0pt,minimum size=3pt},
dottdblue/.style={circle,draw=orange!90!yellow,fill=orange!90!yellow,inner sep=0pt,minimum size=3pt}
]
      \useasboundingbox (-\rectsize,-\rectsize) rectangle (\rectsize,\rectsize);
\draw (0,0) circle (1);
\draw (0,0) circle (0.25);

\node (circ1) [dott] at (90 : 1) {};
\node (circ2) [dott] at (210 : 1) {};
\node (circ3) [dott] at (330 : 1) {};
\node (circinner) [dott] at (270 : 0.25) {};

\node (u1main) [dottblue,label=below:$\textcolor{blue}{u_1}$] at (315:0.4) {};
\node (u2main) [dottgreen,label=right:$\textcolor{green!50!black}{u_2}$] at (45:0.4) {};
\node (u3main) [dottred,label=left:$\textcolor{red}{u_3}$] at (135:0.4) {};
\node (u4main) [dottdblue,label=below:$\textcolor{orange!90!yellow}{u_4}$] at (225:0.4) {};

\begin{scope}[shift={(circ1)}]
  \node (u1aux1) [dottblue,label=below:$\textcolor{blue}{u_1}$] at (-30:0.5) {};
  \node (u4aux1) [dottdblue,label=below:$\textcolor{orange!90!yellow}{u_4}$] at (210:0.5) {};
  \draw [thick,blue] (circ1) -- (u1aux1);
  \draw [thick,orange!90!yellow] (circ1) -- (u4aux1);
\end{scope}

\begin{scope}[shift={(circ2)}]
  \node (u1aux2) [dottblue,label=above:$\textcolor{blue}{u_1}$] at (90:0.5) {};
  \node (u2aux2) [dottgreen,label=above:$\textcolor{green!50!black}{u_2}$] at (65:0.5) {};  
  \draw [thick,blue] (circ2) -- (u1aux2);
  \draw [thick,green!50!black] (circ2) -- (u2aux2);
\end{scope}

\begin{scope}[shift={(circ3)}]
  \node (u3aux3) [dottred,label=above:$\textcolor{red}{u_3}$] at (115:0.5) {};
  \node (u4aux3) [dottdblue,label=above:$\textcolor{orange!90!yellow}{u_4}$] at (90:0.5) {};  
  \draw [thick,red] (circ3) -- (u3aux3);
  \draw [thick,orange!90!yellow] (circ3) -- (u4aux3);
\end{scope}

\begin{scope}[shift={(circinner)}, rotate=180]
  \node (u2auxinner) [dottgreen,label=below:$\textcolor{green!50!black}{u_2}$] at (110:0.5) {};
  \node (u3auxinner) [dottred,label=below:$\textcolor{red}{u_3}$] at (70:0.5) {};  
  \draw [thick,green!50!black] (circinner) -- (u2auxinner);
  \draw [thick,red] (circinner) -- (u3auxinner);
\end{scope}

\draw [thick,blue] (circinner) -- (u1main) -- (circ3);
\draw [thick,green!50!black] (circ3) -- (u2main) -- (circ1);
\draw [thick,red] (circ1) -- (u3main) -- (circ2);
\draw [thick,orange!90!yellow] (circ2) -- (u4main) -- (circinner);

\begin{scope}[on background layer]
  \path [pattern=north east lines, pattern color=lightgray, dashed]
  (0,0) circle (1);
  \path [fill = white, opacity=1] (0,0) circle (0.25);
  \path [fill = white, opacity=1] (u2main) -- (circ3) arc (-45:103:1) -- (u2main);
  \path [fill = white, opacity=1] (u3main) -- (circ1) arc (90:210:1) -- (u3main);
  \path [fill = white, opacity=1]
  (circinner) -- (u4main) -- (circ2) arc (210:330:1) -- (circinner);
  \begin{scope}[shift={(u1main)}]
    \path [fill = white, opacity=1] (0,0) arc (140:360:0.5);
    \path [fill = white, opacity=1] (0,0) arc (60:360:0.25);
  \end{scope}
  \begin{scope}[shift={(u4main)}, rotate=-85]
    \path [fill = white, opacity=1] (0,0) rectangle (0.1,0.25);
  \end{scope}
  \begin{scope}[shift={(u4main)}, rotate=-160]
    \path [fill = white, opacity=1] (0,0) rectangle (0.5,0.1);
  \end{scope}
\end{scope}
\end{tikzpicture}
  \subcaption{$n=3$}
  \label{fig:sfigg3}
\end{subfigure}
\caption{Hypermaps $z^n+\dfrac{1}{z}$, $1\leq n \leq 3$}
\label{fig:figg}
\end{figure} 

Consider the vanishing cycles $\beta_i\in H_0(f^{-1}(\lambda),\mathbb{C})$ given by $\beta_i:=p_i-p_0$, where $\lambda=f(p_0,\tau)=f(p_i,\tau)$, and $p_0\in D_0$, $p_i\in D_i$. Then the system of solutions of Equation~\eqref{eq:Gauss-Manin-system-5-31-32}
given by $\phi^{(i)}(\lambda,\tau):=\phi^{\beta_i}(\lambda,\tau)$, $i=1,\dots,n$, $\phi^{(n+1)}=\phi^{(1)}$, satisfies the properties given by Equations~\eqref{eq:DubrovinProp-1}--\eqref{eq:DubrovinProp-4}. In particular, $G^{ii}=1$ for $i=1,\dots,n+1$, $G^{1,n+1}=G^{n+1,1}=1$, and for all other $i\not=j$ $G^{ij}=1/2$. So, this matrix is degenerate.

However, Remark~\ref{rem:more-general-phi} specifies the properties of $\phi$ that are sufficient for Theorems~\ref{thm:right-function-y} and~\ref{thm:genus-0-identification}. Note that
\begin{equation}
\phi:=\frac{2}{n+1}\sum_{i=1}^{n} \phi^{(i)}=\phi^{\beta_0}
\end{equation}
for $\beta_0:=\frac{2}{n+1}\sum_{i=0}^{n} p_i - 2p_0$ satisfies all condition of Remark~\ref{rem:more-general-phi}. With this choice of $\phi$ and, therefore, $\beta_0$, Dubrovin's superpotential can be presented as
\begin{align}
& \frac{\sqrt{2}}{1-d}\phi^T (U-\lambda) \Psi \un  = \frac{n}{2}\int_{\beta_0} (E-\lambda e) f(p,a) \cdot \left(\frac{df(p,a)}{dp}\right)^{-1} .
\end{align} 
Using Equation~\eqref{eq:Euler-Hypermaps}, we have:
\begin{align}
& \frac{\sqrt{2}}{1-d}\phi^T (U-\lambda) \Psi \un = \frac{n}{2} 
\int_{\beta_0} \left(f(p,a)-\frac{p}{n} \frac{df(p,a)}{dp} -\lambda\right) \cdot \left(\frac{df(p,a)}{dp}\right)^{-1} \\ \notag
& = \int_{\beta_0} -\frac{p}2 = p_0-\frac{1}{n+1}\sum_{i=0}^n p_i =
\begin{cases} 
p_0-\frac{\lambda+a_1}{2} & n=1; \\
p_0-\frac{a_1}{n+1} & n>1.
\end{cases}
\end{align}
(in the last equality we used that we know the sum of all roots of the equation $f(p,a)=\lambda$).

Let us now discuss the cases that we get. For $n>1$ Dubrovin's function $\pdub=\pdub(\lambda,a)$ is the branch $D_0$ of the inverse function of $\lambda=f(p,a)$ shifted by a constant. Obviously, $d_\lambda \pdub$ has no zeros in $\mathbb{C}\setminus\cup_{i=1}^{n+1} L_i$, and we can choose as the analytic extension of $\lambda|_{D_0}$ the function $\lambda=f(p,a)$ defined on $\mathbb{CP}^1$. Then Theorem~\ref{thm:genus-0-identification} is applied. We get, therefore, not precisely the statement that we want to prove, but we have instead $y=\pdub=p-a_1/(n+1)$ (the Bergman kernel is still the same). However, it doesn't change anything in topological recursion if we shift $y$ by a constant. 

The case $n=1$ is even more interesting. One can easily check by direct computation that Dubrovin's function $\pdub=\pdub(\lambda,a)$ is equal to $\sqrt{(\lambda-u_1)(\lambda-u_2)}/2$. Further construction of the curve gives the following equation:
$$
\pdub^2-\frac 14 (\lambda-a_1)^2+a_{2}=0
$$
It is a rational curve, and it has a global coordinate $p=\pdub+(\lambda+a_1)/2$, which is our original coordinate $p$, that is, $\lambda=p+a_2/(p-a_1)$. Theorem~\ref{thm:genus-0-identification} can not be  applied directly, but in this case we can just check by hand that we get the statement that we want to prove.  See Appendix~\ref{subsec:discretesurface}.

Note that Theorem~\ref{thm:right-function-y} suggests that the right choice of function $y$ is $y=\pdub=p-(\lambda+a_1)/2$ rather than $y=p$. However, it doesn't change anything in topological recursion if we shift $y$ by a function of $x=\lambda$, so there is no contradiction.
\end{proof}


Since in this example we rather start from a combinatorial problem of enumeration of hypermaps and use Theorem~\ref{thm:hypermaps} in order to clarify the structure of the ELSV-type formula~\eqref{eq:CohFT-SpectralCurve-match} for this combinatorial problem, it is interesting to have a description of the underlying Frobenius manifolds (given by superpotentials) in terms of their prepotentials. We know an algorithm which can produce this prepotential for any given $n$ (this algorithm follows from Dubrovin's construction found in \cite{Dub2dTFT}), but we do not know a general formula which would describe these prepotentials for all $n\geq 1$. Here we list the formulas for cases $n=1,2,3$:

\begin{align*}
&n=1: &  & \frac{a_1^2 a_2}{2} +\frac{a_2^2}{2}\log a_2; \\ 
&n=2: &  & \frac{a_1^3}{6} +a_1a_2a_3+\frac{a_{3}^2}{2}\log a_{3}+\frac{a_1^3a_3}{6} -\frac{3a_3^2}{4}; \\
&n=3: &  & \frac{a_1^2 a_4}{2} + a_1 a_2 a_3 - \frac{3 a_2^2}{4}+\frac{a_2^2}{2}\log \left(a_2\right) + \frac{a_2 a_3^4}{4} +\frac{3 a_2 a_3^2 a_4}{2} +\frac{3 a_2 a_4^2}{2} -\frac{3 a_4^4}{8}
\end{align*}

Note that in the case $n=1$ the corresponding combinatorial problem has also interpretation in terms of the discrete volumes of the moduli space of curves~\cite{NorburyLattice} and discrete surfaces/generalized Catalan numbers~\cite{AlexMirMor,DMSS12,EO09}. The relation of these combinatorial problems to a CohFT is also discussed in~\cite{ACNP,FangLiuZong}, though it is not mentioned there that the underlying Frobenius manifold is given by the prepotential $a_1^2a_2/2+a_2^2/2\cdot \log a_2$.

\section{Elliptic example}
\label{sec:elliptic}
In this section we give an example of a superpotential that satisfies the conditions of Theorem~\ref{thm:any-genus-identification}.

Consider the spectral curve defined by the Weierstrass $\wp$-function
\beq  \label{specweier}
\lambda=\wp(z),\quad p=z,\quad B(z,z')=\left(\wp(z-z')+b\right)dzdz'
\eeq
where $b\in\C$ and $p$ is only defined locally---it is the primitive of a holomorphic differential on the curve---which is sufficient for topological recursion.   The compatibility condition \eqref{dosstest} is satisfied by Proposition~\ref{th:yprim}.  It is equivalent to the elliptic identity:
\begin{equation}  \label{ellipticDT}
\frac{\wp''(z)}{\wp'(z)^2}=\sum_{i=1}^3\frac{\wp(z-\omega_i)}{\wp''(\omega_i)}
\end{equation}
where the sum is over the zeros $\omega_i$ of $\wp'(z)$.  Hence the spectral curve defines a CohFT. 

Introduce three parameters $\omega$, $\omega'$ and $c$ into the spectral curve to define the following superpotential taken from \cite{Dub2dTFT}:
\begin{equation}  \label{dubsup}
\lambda=\wp(z;\omega,\omega')+c,\quad p=\frac{z}{\omega}
\end{equation}
where
\begin{equation}  \label{wp}
\wp(z;\omega,\omega')=\frac{1}{z^2}+\sum_{(m,n)\neq (0,0)}\frac{1}{(z-2m\omega-2n\omega')^2}-\frac{1}{(2m\omega+2n\omega')^2}.
\end{equation}

The Frobenius manifold structure on $M=\{(\omega,\omega',c)\}$ is given by the formulae \eqref{spmet}, \eqref{spint}, \eqref{sp3pt} where the vector fields $\partial$ on $M$ are given by, for example $\partial_\omega$, $\partial_{\omega'}$, $\partial_c$.   Note that in \eqref{ellipticDT}, $\omega_1=\omega$, $\omega_2=\omega'$, $\omega_3=\omega+\omega'$.  

\begin{remark}
Note that we know that the superpotential \eqref{dubsup} defines a Frobenius manifold due to the existence of flat coordinates, proven in \cite{Dub2dTFT}, and also given below.  The CohFT produced by topological recursion applied to \eqref{specweier} is homogeneous if we choose $b$ in \eqref{specweier} so that $\int_AB(z,z')=0$, i.e. $b=\eta/\omega$ where the $A$ and $B$ periods are $2\omega=\oint_Adz$, $2\omega'=\oint_Bdz$ and
$$\eta=-\frac{1}{2}\oint_A\wp(z)dz,\quad\eta'=-\frac{1}{2}\oint_B\wp(z)dz.
$$
The periods satisfy Legendre's relation
$$ \eta\omega'-\eta'\omega=\frac{i\pi}{2}.$$
The homogeneous CohFT corresponds to a conformal Frobenius manifold which gives rise to a superpotential via Dubrovin's construction (actually, since $d=1$ it is a variant of the construction).  What needs to be proven is that the two superpotentials agree.  \end{remark}
 
\begin{theorem}  \label{th:elliptic}
The superpotential \eqref{dubsup} can be obtained via (a variant of) Dubrovin's construction described in Section~\ref{sec:DubrovinsSuperpotential} applied to the Frobenius manifold $M$.  The conditions of Theorem~\ref{thm:any-genus-identification} are satisfied for this superpotential.  Hence the two cohomological field theories---obtained from the superpotential \eqref{dubsup} and topological recursion applied to the spectral curve \eqref{specweier} with $b=\eta/\omega$---agree.
\end{theorem}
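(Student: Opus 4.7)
The plan is to verify the two conclusions of the theorem in order: first that the superpotential \eqref{dubsup} arises from (a variant of) Dubrovin's construction applied to the Frobenius manifold $M$, and second that all hypotheses of Theorem~\ref{thm:any-genus-identification} hold, so that Theorem~\ref{thm:any-genus-identification} closes the argument. The compatibility statement is really the technical heart of the paper's machinery for this example, while the identification with Dubrovin's construction is the main conceptual task.

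First I would describe the spectral curve geometrically: $\mathcal{D}=\C/(2\omega\Z+2\omega'\Z)$ is a compact Riemann surface of genus $1$; $\lambda=\wp(z;\omega,\omega')+c$ has degree $2$ with three critical points, namely the half-periods $\omega_1=\omega$, $\omega_2=\omega'$, $\omega_3=\omega+\omega'$, and each critical point is the unique preimage of its critical value (since $\wp$ has double poles and is $2:1$ with ramification at the half-periods). The critical values $u_i=\wp(\omega_i)+c$ serve as canonical coordinates on $M$, matching the three-dimensional parameter space $(\omega,\omega',c)$. The form $dp=dz/\omega$ is a holomorphic differential; we take $B$ to be the Bergman kernel normalized on the $A$-cycle, which gives $B=(\wp(z-z')+\eta/\omega)\,dz\,dz'$ with $b=\eta/\omega$ as in the statement.

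Next I would treat the identification with Dubrovin's construction. As indicated by the remark preceding the theorem, a genuine variant is needed because $d=1$ so the prefactor $\sqrt{2}/(1-d)$ in \eqref{gradtocoord} and \eqref{eq:Superpotential-Formula} is singular; consequently the Euler vector field carries a nontrivial $r_\alpha$ shift, consistent with the logarithmic term expected in the prepotential. I would set up the extended Gauss-Manin system \eqref{eq:Gauss-Manin-system-5-31-32} for this Frobenius manifold, identify the special solution $\phi$ whose components have the prescribed local expansions \eqref{eq:phi-locally-eq1}-\eqref{eq:phi-locally-eq2} at the half-periods (using that $\wp(z)-u_i+c$ vanishes to second order at $z=\omega_i$ so $\sqrt{u_i-\lambda}$ pulls back to a uniformizer proportional to $z-\omega_i$), and then compute the renormalized version of \eqref{gradtocoord}. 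Because the local behavior of $\phi$ near $\lambda=u_i$ forces the resulting function $\rho$ to be a primitive of a holomorphic differential on $\mathcal{D}$ with no monodromy around $u_i$, and because its periods are controlled by the integrals of $dz$ over $A$- and $B$-cycles (which produce exactly $\omega$ and $\omega'$), the function $p$ built by Dubrovin's procedure must coincide, up to an affine normalization absorbed in the $d=1$ variant, with $z/\omega$ on $\mathcal{D}$. This step essentially re-derives what Dubrovin worked out in \cite{Dub2dTFT} in flat coordinates, recast in the language of Section~\ref{sec:DubrovinsSuperpotential}.

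Finally I would verify the hypotheses of Theorem~\ref{thm:any-genus-identification}: $\mathcal{D}$ is compact of genus $g=1$; each singular fiber of $\lambda$ contains exactly one critical point as noted above; the compatibility test of Section~\ref{sec:Rcompat} is automatic by Proposition~\ref{th:yprim}\eqref{dyhol} since $dy=dz/\omega$ is holomorphic and $B$ is normalized on the chosen $A$-cycle. Theorem~\ref{thm:any-genus-identification} then gives that topological recursion applied to $(\mathcal{D},\lambda,p,B)$ produces the CohFT associated to $M$, which is the desired agreement. The hardest step will be the explicit identification in the previous paragraph: one must carry out the $d=1$ variant of Dubrovin's construction carefully and trace the normalization constants so that the unique solution of the Gauss-Manin system with the required ramification profile at the three half-periods integrates to $z/\omega$ (rather than some affine rescaling). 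Once this bookkeeping is done, the rest of the argument is structural and reduces to verifying each bullet of Theorem~\ref{thm:any-genus-identification}.
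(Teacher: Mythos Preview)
Your overall structure is correct and matches the paper: identify the superpotential with (a variant of) Dubrovin's construction, then verify the three bullets of Theorem~\ref{thm:any-genus-identification}. The verification of the hypotheses---compactness, genus one, unique critical point per singular fiber, compatibility via Proposition~\ref{th:yprim}\eqref{dyhol} since $dy=dz/\omega$ is holomorphic---is exactly as in the paper.

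Where you and the paper diverge is in the identification step. You propose to argue \emph{abstractly}: the solution $\phi$ with the prescribed local behavior integrates to a primitive of a holomorphic differential, and period considerations pin it down as $z/\omega$. The paper proceeds in the opposite direction, and more concretely. It builds the solution of the Gauss--Manin system as a cycle integral $\phi^\beta$ with $\beta=p_0-p_1$ (exactly as in the $A_n$ proof), obtaining explicitly
\[
\phi^\beta_i(\lambda;u)=\frac{\sqrt{2}\,\Delta_i^{1/2}\,\partial_{u_i}\lambda}{d_p\lambda/dp},
\]
and then, since $d=1$ blocks the use of \eqref{gradtocoord}, it \emph{verifies directly} that $\tfrac{1}{\sqrt{2}}\phi^\beta=\nabla_u p$ for the given $p=z/\omega$, and that $\tfrac{1}{\sqrt{2}}\phi^T\Psi\un=dp/d\lambda$, the latter via an explicit elliptic identity for $\partial_{u_i}\lambda$ in terms of $\wp$, $\wp'$, and $\zeta$. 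Remark~\ref{rem:more-general-phi} then closes the loop.

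Your route can be made to work, but the period-matching sentence (``its periods are controlled by the integrals of $dz$ \ldots'') is the soft spot: you would need to argue uniqueness of the Gauss--Manin solution with the required ramification profile and then show that the resulting $d\rho$ has the correct $A$-period, which in the end forces you through computations equivalent to the paper's. The paper's verify-rather-than-derive approach avoids this by starting from the known $p=z/\omega$ and checking the key equation \eqref{eq:dpdlambda} via elliptic function identities; this is shorter and sidesteps the normalization bookkeeping you flagged as the hardest part.
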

\begin{proof}
To apply Dubrovin's construction to $M$ we construct a solution of the Gauss-Manin system as in the proof of Theorem~
\ref{thm:an-singularity}.

The flat metric \eqref{spmet} for the superpotential \eqref{dubsup} is given by  
\begin{equation}
(\partial,\partial'):=\sum_{i=1}^3\res_{z=\omega_i} \frac{\partial\lambda\cdot\partial'\lambda}{\wp'} dp.
\end{equation}
We use this to construct a vector field $I_\beta(\lambda;u)$ on $M$ for any cycle $\beta\in H_0(\lambda^{-1}({\rm pt}),\mathbb{C})$ specified by:
\begin{equation}
(I_\beta(\lambda;u),\partial) := \int_\beta\frac{\partial(\lambda)}{ d_p\lambda/dp}.
\end{equation} 
The elliptic curve \eqref{dubsup} is built by gluing two copies of the disk $D=\mathbb{C}\setminus \cup_{i=1}^3 L_i$ in the $\lambda$ plane along $L_i$.  Choose $\beta$ to be the cycle given by $p_0-p_1$, for $p_0$ and $p_1$ the pre-images of $\lambda$ in each of the two disks.  In normalized canonical coordinates $I_\beta(\lambda;u)$ is represented by a solution $\phi^{\beta}(\lambda;u)=\sum_i\phi^\beta_i(\lambda;u)\partial_{v_i}$ of the Gauss-Manin system \eqref{eq:Gauss-Manin-system-5-31-32}  which has components given by
\begin{align*}
\phi^\beta_i(\lambda;u)&=\left(I_\beta(\lambda;u),\frac{1}{\sqrt{2}}\Delta_i^{\frac 12} \partial_{u_i}\right)\\
&=\int_\beta\frac{\Delta_i^{\frac 12}\partial_{u_i}\lambda}{\sqrt{2}\cdot d_p\lambda/dp}\\
&=\frac{\sqrt{2}\cdot\Delta_i^{\frac 12}\partial_{u_i}\lambda}{d_p\lambda/dp}
\end{align*}
where the integral over $\beta$ simply doubles the integrand since the integrand is skew symmetric.  Since $d=1$, we cannot use the inversion formula \eqref{gradtocoord} so we directly check that $\frac{1}{\sqrt{2}}\phi^\beta=\nabla_u p$ as follows.
$$\nabla_up=\eta^{-1}d_up=\frac{1}{\omega}\sum_{i=1}^3\Delta_i\cdot\frac{\partial_{u_i}\lambda}{\wp'(z)}\partial_{u_i}
=\frac{1}{\omega}\sum_{i=1}^3\frac{\Delta_i^{\frac12}\partial_{u_i}\lambda}{\wp'(z)}\partial_{v_i}=\frac{1}{\sqrt{2}}\phi^\beta.$$

Using
$$\partial_{u_i}\lambda=\frac{1}{2\wp''(\omega_i)}\frac{\wp'(z)^2}{\wp(z)-\wp(\omega_i)}+\frac{z\wp(\omega_i)+\zeta(z)}{\wp''(\omega_i)}\wp'(z)$$
which can be proven from the known variations $\sum_iu_i^k\partial_{u_i}$, $k=0,1,2$, we see that the solution $\phi^{\beta}(\lambda;u)$ satisfies 
\begin{align*}
\frac{1}{\sqrt{2}}\phi^T  \Psi \un &=\frac{1}{\omega}\sum_{i=1}^3\frac{\partial_{u_i}\lambda}{\wp'(z)}\\
&=\frac{1}{\omega}\sum_{i=1}^3\frac{1}{2\wp''(\omega_i)}\frac{\wp'(z)}{\wp(z)-\wp(\omega_i)}+\frac{z\wp(\omega_i)+\zeta(z)}{\wp''(\omega_i)}\\
&=\frac{1}{\omega\wp'(z)}=\frac{dp}{d\lambda}
\end{align*}
which is \eqref{eq:dpdlambda} and hence via Remark~\ref{rem:more-general-phi} we see that the properties of $\phi$ are sufficient for Theorems~\ref{thm:right-function-y} and~\ref{thm:genus-0-identification}.
Hence the theorem follows.
\end{proof}

Theorem~\ref{th:elliptic} states that we can study the CohFT obtained from the superpotential \eqref{dubsup} via topological recursion applied to the spectral curve \eqref{specweier}.  We will need the three-point function of this CohFT in calculations below.  We calculate it in two ways to demonstrate the proof, although we know from the theorem that they coincide.

\subsection{Three-point function.}

{\em Superpotential.} Introduce the canonical coordinates
$$u_i=\wp(\omega_i)+c,\quad i=1,2,3
$$
where, as usual, $\omega_1=\omega$, $\omega_2=\omega'$, $\omega_3=\omega+\omega'$.  The three-point calculations take place in the ring $\C[E]/\wp'=\C[\wp]/\wp'$ and we have
$$\frac{\partial\lambda}{\partial u_1}\equiv\frac{(\lambda-u_2)(\lambda-u_3)}{(u_1-u_2)(u_1-u_3)},\quad \frac{\partial\lambda}{\partial u_1}\frac{\partial\lambda}{\partial u_j}\equiv \delta_{1j}\frac{\partial\lambda}{\partial u_1},\ j=2,3.
$$
and cyclic permutations of the above.
This is quite general and also can be proven via elliptic identities.  Hence the three-point function for the superpotential is
$$\left\langle\frac{\partial}{\partial u_i},\frac{\partial}{\partial u_i},\frac{\partial}{\partial u_i}\right\rangle=\left\langle\frac{\partial}{\partial u_i},\frac{\partial}{\partial u_i}\right\rangle=\left\langle\frac{\partial}{\partial u_i}\right\rangle=\sum_{j=1}^3\res_{z=\omega_j}\frac{\frac{\partial\lambda}{\partial u_i}}{\wp'(z)}\frac{dz}{\omega^2}=\frac{1}{\omega^2\wp''(\omega_i)}.$$
Thus
\beq  \label{3pt}
\left\langle\frac{\partial}{\partial v_i},\frac{\partial}{\partial v_i},\frac{\partial}{\partial v_i}\right\rangle=\omega\sqrt{\wp''(\omega_i)}
\eeq
where $\frac{\partial}{\partial v_i}=\Delta_i^{\frac{1}{2}}\frac{\partial}{\partial u_i}$ for $\Delta_i^{-1}=\left\langle\frac{\partial}{\partial u_i},\frac{\partial}{\partial u_i}\right\rangle=\omega^2\cdot\wp''(\omega_i)$ give the normalized canonical coordinates.

{\em Topological recursion.}  The three-point function obtained via topological recursion is
\begin{align*}
\omega_{0,3}(z_1,z_2,z_3)&=\sum_{j=1}^3\res_{z=\omega_j}\frac{\omega dz}{\wp'(z)}\prod_{i=1}^3(\wp(z_i-z)+b)dz_i\\
&=\sum_{j=1}^3\frac{\omega}{\wp''(\omega_j)}\prod_{i=1}^3(\wp(z_i-\omega_j)+b)dz_i\\
&=\sum_{j=1}^3\omega\sqrt{\wp''(\omega_j)}V^j_0(z_1)V^j_0(z_2)V^j_0(z_3)
\end{align*}
for 
$$V^j_0(z)=\left.\frac{(\wp(z-\omega_j)+b)dzdz_j}{ds_j}\right|_{s_j=0}=
\frac{\wp(z-\omega_j)+b}{\sqrt{\wp''(\omega_i)}}dz
$$
where $\lambda=\wp(z)+c=\frac{1}{2}s_j^2+\wp(\omega_j)+c$ defines the local coordinate $s_j$.  The coefficients of $V^j_0(z_i)$ define the three-point function of the cohomological field theory which agree with \eqref{3pt}.

\subsection{Flat coordinates}
The cohomological field theory is defined on the three-dimensional vector space $\C[\wp]/\wp'$ equipped with its natural ring structure and gives rise to a Frobenius manifold structure on the family $M$ of such rings parametrized by $\{\omega,\omega',c\}$.  It will be convenient to express the metric on $M$ with respect to a natural basis of vector fields on $M$ corresponding to the basis $\{1,\wp,\wp^2\}$ of $\C[\wp]/\wp'$ since the metric requires knowledge of the variation of $\wp$ under the action of vector fields on the Frobenius manifold.  We will see that $\{\omega,\omega',c\}$ are not flat coordinates and find in Lemma~\ref{th:flat} flat coordinates $\{t_1,t_2,t_3\}$ on $M$, i.e. so that the metric on $M$ is constant with respect to them. 

Recall that correlation functions of the cohomological field theory arising from topological recursion applied to a spectral curve appear as coefficents of auxiliary differentials on the spectral curve.  Proposition~\ref{th:auxflat} gives the auxiliary differentials that correspond to the flat basis for the metric. 

In the following lemma we calculate the vector fields on $M$ that correspond to the basis elements $1,\wp,\wp^2$ of $\C[\wp]/\wp'$.  This uses $g_2=g_2(\omega,\omega')$ defined by $\wp'(z)^2=4\wp(z)-g_2\wp-g_3$. 
\begin{lemma}
Under the map $TM\to\C[\wp]/\wp'$ defined by $\partial\mapsto\partial\lambda(\text{mod\ }\wp')$ for $\lambda=\wp(z;\omega,\omega')+c$
\beq\label{basis}
\partial_c\mapsto 1,\quad -\tfrac{1}{2}\left(\omega\partial_\omega+\omega'\partial_{\omega'}\right)\mapsto\wp,\quad -\tfrac{1}{2}\left(\eta\partial_\omega+\eta'\partial_{\omega'}\right)+\tfrac{1}{6}g_2\partial_c\mapsto\wp^2.
\eeq
\end{lemma}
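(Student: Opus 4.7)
The first statement holds trivially, since $\partial_c\lambda = 1$ corresponds to $1\in\C[\wp]/\wp'$.

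For the second statement, apply Euler's identity to the weight $-2$ homogeneity $\wp(tz;t\omega,t\omega') = t^{-2}\wp(z;\omega,\omega')$: differentiation at $t=1$ gives
$$\omega\,\partial_\omega\wp + \omega'\partial_{\omega'}\wp = -2\wp - z\,\wp'(z),$$
and the last term is a multiple of $\wp'$, so $-\tfrac{1}{2}(\omega\partial_\omega+\omega'\partial_{\omega'})\lambda\equiv\wp\pmod{\wp'}$.

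For the third statement, set $f_i := \eta\,\partial_\omega e_i + \eta'\,\partial_{\omega'}e_i$, where $e_i=\wp(\omega_i;\omega,\omega')$. Since $\wp'(\omega_i)=0$, the values of $\partial_\omega\wp(z;\omega,\omega')$ and $\partial_{\omega'}\wp(z;\omega,\omega')$ at $z=\omega_i$ coincide with the total derivatives $\partial_\omega e_i$, $\partial_{\omega'}e_i$ even when $\omega_i$ depends on $(\omega,\omega')$. The claim then reduces, via Lagrange interpolation at the distinct points $(e_i)_{i=1}^3$, to proving
$$f_i = -2\,e_i^2 + \tfrac{1}{3}g_2, \qquad i=1,2,3.$$
By Vandermonde invertibility on the basis $\{1,e_i,e_i^2\}$, it suffices to match the three moment conditions. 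The right-hand side of the prediction, combined with the identities $\sum e_i=0$, $\sum e_i^2=g_2/2$, $\sum e_i^3=3g_3/4$, $\sum e_i^4 = g_2^2/8$, produces $\sum_i f_i = 0$, $\sum_i e_i f_i = -\tfrac{3}{2}g_3$, $\sum_i e_i^2 f_i = -\tfrac{1}{12}g_2^2$, while the left-hand sides evaluate (by writing $\sum e_i^k\partial e_i$ as $\tfrac{1}{k+1}\partial\sum e_i^{k+1}$) to $0$, $\tfrac{1}{4}(\eta\partial_\omega+\eta'\partial_{\omega'})g_2$, and $\tfrac{1}{4}(\eta\partial_\omega+\eta'\partial_{\omega'})g_3$ respectively. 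Hence the lemma is reduced to the two modular identities
$$(\eta\partial_\omega + \eta'\partial_{\omega'})g_2 = -6\,g_3, \qquad (\eta\partial_\omega + \eta'\partial_{\omega'})g_3 = -\tfrac{1}{3}g_2^2.$$

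Establishing these two identities is the main obstacle. My plan is to use the heat equation $\partial_\tau\theta_1 = (4\pi i)^{-1}\partial_z^2\theta_1$ applied to the classical sigma-function representation $\sigma(z;\omega,\omega') \propto e^{\eta z^2/(2\omega)}\,\theta_1(z/(2\omega)\,|\,\tau)$, together with the identifications $g_2 = \tfrac{\pi^4}{12\omega^4}E_4(\tau)$, $g_3 = \tfrac{\pi^6}{216\,\omega^6}E_6(\tau)$, $\eta = \tfrac{\pi^2}{12\omega}E_2(\tau)$ and the Legendre relation $\eta\omega'-\eta'\omega = i\pi/2$. The Ramanujan differential equations $\partial_\tau E_4 = \tfrac{2\pi i}{3}(E_2 E_4 - E_6)$, $\partial_\tau E_6 = \pi i\,(E_2 E_6 - E_4^2)$, which are consequences of the heat equation, give the $\tau$-derivatives of $g_2$ and $g_3$ at fixed $\omega$; combining with the weight $-4$ and $-6$ Euler relations for $g_2,g_3$ converts the pair $(\partial_\omega, \partial_{\omega'})$ into $(\omega\partial_\omega+\omega'\partial_{\omega'},\, \partial_\tau)$, and the quasi-modular contribution of $E_2$ cancels after using Legendre—the $E_2$-dependent terms group into the combinations already appearing, and the result is exactly $-6g_3$ and $-\tfrac{1}{3}g_2^2$. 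With the two identities in hand, the Vandermonde argument locks in $f_i = -2e_i^2 + \tfrac{1}{3}g_2$ and the lemma follows.
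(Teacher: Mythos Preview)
Your first two parts agree with the paper's proof. For the third mapping the paper simply invokes the classical Frobenius--Stickelberger identity
\[
\eta\,\partial_\omega\wp(z)+\eta'\,\partial_{\omega'}\wp(z)+\zeta(z)\,\wp'(z)=-2\wp(z)^2+\tfrac{1}{3}g_2,
\]
which immediately gives the result modulo $\wp'$. Your route---evaluation at the half-periods, Vandermonde inversion, and the Ramanujan identities for $E_4,E_6$---is genuinely different and, once repaired, valid; but there is a real gap at the Lagrange interpolation step.

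You assert that the claim ``reduces, via Lagrange interpolation at the distinct points $(e_i)_{i=1}^3$'' to the three scalar equalities $f_i=-2e_i^2+\tfrac{1}{3}g_2$. This reduction is only legitimate once you know \emph{a priori} that $(\eta\partial_\omega+\eta'\partial_{\omega'})\wp(z)$ is congruent modulo $\wp'(z)$ to a polynomial in $\wp$ of degree $\le 2$. You have not shown this: the function $\partial_\omega\wp(z)$ is not elliptic (under $z\mapsto z+2\omega$ it shifts by $-2\wp'(z)$), so it is not automatic that its class lands in the three-dimensional ring $\C[\wp]/(4\wp^3-g_2\wp-g_3)$. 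The missing step is to observe that $(\eta\partial_\omega+\eta'\partial_{\omega'})\wp(z)+\zeta(z)\wp'(z)$ \emph{is} elliptic (use $\zeta(z+2\omega)=\zeta(z)+2\eta$, $\zeta(z+2\omega')=\zeta(z)+2\eta'$ to cancel the quasi-periods), even, with a pole of order $4$ at $z=0$; hence it equals $a+b\wp+c\wp^2$ for some constants. Only then does evaluating at the three $\omega_i$ pin down $a,b,c$.

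Once you insert this quasi-periodicity argument, note that you are essentially one Laurent expansion away from the Frobenius--Stickelberger identity itself: matching the $z^{-4},z^{-2},z^0$ coefficients at $z=0$ gives $c=-2$, $b=0$, $a=\tfrac{1}{3}g_2$ directly, without any recourse to $E_2,E_4,E_6$ or the Legendre relation. So while your modular-forms computation of $(\eta\partial_\omega+\eta'\partial_{\omega'})g_2=-6g_3$ and $(\eta\partial_\omega+\eta'\partial_{\omega'})g_3=-\tfrac{1}{3}g_2^2$ is correct and interesting in its own right, it is a detour: the paper's approach is both shorter and avoids the gap.
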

\begin{proof}
The variation $\partial_c\lambda=1$ is obvious.  The identity
\begin{equation}
\omega\partial_\omega\wp(z)+\omega'\partial_{\omega'}\wp(z)+z\wp'(z)=-2\wp(z)
\end{equation}
follows immediately from the expansion \eqref{wp} of $\wp$ and yields $-\tfrac{1}{2}\left(\omega\partial_\omega+\omega'\partial_{\omega'}\right)\mapsto\wp$.  The final identification uses the identity proven in \cite{FS1882}
\begin{equation}
\eta\partial_\omega\wp(z)+\eta'\partial_{\omega'}\wp(z)+\zeta(z)\wp'(z)=-2\wp(z)^2+\tfrac{1}{3}g_2
\end{equation}
where $\zeta(z)$ is the Weierstrass $\zeta$-function 
$$\zeta(z;\omega,\omega')=\frac{1}{z}+\sum_{(m,n)\neq (0,0)}\frac{1}{z-2m\omega-2n\omega'}+\frac{1}{2m\omega+2n\omega'}+\frac{z}{(2m\omega+2n\omega')^2}$$
which is not an elliptic function (C.63 in \cite{Dub2dTFT}). 
Note that $\eta=\zeta(\omega)$, $\eta'=\zeta(\omega')$. 
\end{proof}
The metric 
$$\langle\wp^j,\wp^k\rangle=\sum_{i=1}^3\res_{z=\omega_i}\frac{\wp^{j+k}}{\wp'(z)}\frac{dz}{\omega^2}=-\res_{z=0}\frac{\wp^{j+k}}{\wp'(z)}\frac{dz}{\omega^2}
$$
is given by 
\begin{center}
  \begin{tabular}{c || c | c | c }
      & \quad\quad1\quad\quad &$\wp$ & $\wp^2$\\
     \hline
    \hline
   1 & 0 & 0 & $1/2\omega^2$\\ \hline
    $\wp$ & 0 & $1/2\omega^2$  & 0\\ \hline
    $\wp^2$ & $1/2\omega^2$ & 0  & $g_2/8\omega^2$
  \end{tabular}
\end{center}

\begin{lemma}[Dubrovin \cite{Dub2dTFT}]  \label{th:flat}
Flat coordinates for the metric are given by
$$ t_1=c-\frac{\eta}{\omega},\quad t_2=\frac{1}{\omega},\quad t_3=\frac{\omega'}{\omega}.$$
\end{lemma}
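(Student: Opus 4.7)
My plan is to verify constancy of the Gram matrix $\bigl(\langle\partial_{t_i},\partial_{t_j}\rangle\bigr)$ by direct computation. First, inverting the coordinate change $\omega=1/t_2$, $\omega'=t_3/t_2$, $c=t_1+\eta\,t_2$ and applying the chain rule together with the $(-1)$-homogeneity relation $\omega\eta_\omega+\omega'\eta_{\omega'}=-\eta$ (coming from $\eta(s\omega,s\omega')=s^{-1}\eta$) produces
\[
\partial_{t_1}=\partial_c,\qquad \partial_{t_2}=2\eta\,\partial_c-\omega^2\partial_\omega-\omega\omega'\partial_{\omega'},\qquad \partial_{t_3}=\eta_{\omega'}\,\partial_c+\omega\partial_{\omega'}.
\]
Applying these vector fields to $\lambda$ and using the identities \eqref{basis} to solve the $2\times 2$ linear system for $\partial_\omega\wp,\partial_{\omega'}\wp$ modulo $\wp'$---the determinant of this system is $\omega\eta'-\omega'\eta=-\pi i/2$ by Legendre's relation---gives $\partial_{t_1}\lambda\equiv 1$, $\partial_{t_2}\lambda\equiv 2\omega\wp+2\eta$, and an explicit expression for $\partial_{t_3}\lambda$ linear in $1$, $\wp$, and $\wp^2$ with coefficients involving $\eta_{\omega'}$, $\eta$, $\omega$, and $g_2$.

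Substituting these into the metric table yields $\langle\partial_{t_1},\partial_{t_1}\rangle=\langle\partial_{t_1},\partial_{t_2}\rangle=\langle\partial_{t_2},\partial_{t_3}\rangle=0$, $\langle\partial_{t_1},\partial_{t_3}\rangle=2/(\pi i)$, and $\langle\partial_{t_2},\partial_{t_2}\rangle=2$, all manifestly constant. The only non-obvious entry is
\[
\langle\partial_{t_3},\partial_{t_3}\rangle=\frac{4\eta_{\omega'}}{\pi i}+\frac{2\omega^2 g_2}{3\pi^2}-\frac{8\eta^2}{\pi^2},
\]
which, via Cramer's rule combined with the homogeneity relation above, vanishes if and only if
\[
\eta\,\eta_\omega+\eta'\,\eta_{\omega'}=-\tfrac{1}{12}\,g_2\,\omega.
\]

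The main obstacle is proving this last Frobenius--Stickelberger type identity for $\eta$. My plan is to integrate in $z$ the second identity of \cite{FS1882} invoked in the preceding lemma, $\eta\,\partial_\omega\wp(z)+\eta'\,\partial_{\omega'}\wp(z)+\zeta(z)\wp'(z)=-2\wp(z)^2+\tfrac13 g_2$: using $\wp''=6\wp^2-g_2/2$ and integration by parts on $\zeta\wp'$ produces
\[
\eta\,\partial_\omega\zeta(z)+\eta'\,\partial_{\omega'}\zeta(z)=\tfrac12\wp'(z)+\zeta(z)\wp(z)-\tfrac{g_2 z}{12}+C(\omega,\omega').
\]
The integration constant $C$ vanishes by Taylor expansion at $z\to 0$: on the left, $\partial_\omega\zeta$ and $\partial_{\omega'}\zeta$ come from the regular Taylor coefficients $-G_{2k}(\omega,\omega')\,z^{2k-1}$ of $\zeta$ and start at order $z^3$, while on the right the singular terms of $\tfrac12\wp'$ and $\zeta\wp$ cancel and the $z^1$-coefficient equals $g_2/12$ exactly via $g_2=60\,G_4$. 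Evaluating at $z=\omega$---where $\wp'(\omega)=0$---and accounting for the chain-rule correction $\partial_\omega\eta=\partial_\omega\zeta(\omega)-\wp(\omega)$ from $\eta=\zeta(\omega;\omega,\omega')$ then gives the required identity, completing the proof.
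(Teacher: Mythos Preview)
Your proof is correct and follows essentially the same approach as the paper: both compute the Gram matrix of $\partial_{t_1},\partial_{t_2},\partial_{t_3}$ by passing through the basis $1,\wp,\wp^2$ via \eqref{basis} and the precomputed metric table, and both reduce constancy to the single identity
\[
\eta\,\eta_\omega+\eta'\,\eta_{\omega'}=-\tfrac{1}{12}g_2\omega,
\]
which is equivalent to Dubrovin's (C.69), $(\eta\partial_\omega+\eta'\partial_{\omega'})(\eta/\omega)=-\tfrac{1}{12}g_2-\eta^2/\omega^2$. The only substantive difference is that the paper simply cites (C.69), whereas you supply a self-contained derivation by integrating the Frobenius--Stickelberger relation for $\wp$ and eliminating the integration constant via the $z\to 0$ expansion; this makes your argument independent of the external reference at the cost of a little extra work.
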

\begin{proof}
This is (5.95) in \cite{Dub2dTFT}.  We simply use change of coordinates given by \eqref{basis} and the metric calculated above.  We have $\partial_c=\partial_{t_1}$.  From the identity
$$(\omega\partial_\omega+\omega'\partial_{\omega'})\frac{\eta}{\omega}=-2\frac{\eta}{\omega}$$
which uses the fact that $\omega\partial_\omega+\omega'\partial_{\omega'}$ is the degree operator, $\frac{\eta}{\omega}$ is homogeneous of degree -2 we have $\omega\partial_\omega+\omega'\partial_{\omega'}=\frac{2\eta}{\omega}\partial_{t_1}-\frac{1}{\omega}\partial_{t_2}$.  The identity
$$(\eta\partial_\omega+\eta'\partial_{\omega'})\frac{\eta}{\omega}=-\frac{1}{12}g_2-\frac{\eta^2}{\omega^2}$$
appearing as (C.69) in  \cite{Dub2dTFT} gives $
\eta\partial_\omega+\eta'\partial_{\omega'}=\left(\frac{1}{12}g_2+\frac{\eta^2}{\omega^2}\right)\partial_{t_1}-\frac{\eta}{\omega^2}\partial_{t_2}-\frac{i\pi}{2\omega^2}\partial_{t_3}$.
Hence we have
\beq\label{basis2}
\partial_{t_1}\mapsto 1,\quad -\frac{\eta}{\omega}\partial_{t_1}+\frac{1}{2\omega}\partial_{t_2}\mapsto\wp,\quad \left(\tfrac{1}{8}g_2-\frac{\eta^2}{2\omega^2}\right)\partial_{t_1}+\frac{\eta}{2\omega^2}\partial_{t_2}+\frac{i\pi}{4\omega^2}\partial_{t_3}\mapsto\wp^2.
\eeq

Hence the metric is given by:
\begin{center}
  \begin{tabular}{c || c | c | c }
      & \quad\quad$\partial_{t_1}$\quad\quad & $\partial_{t_2}$ & $\partial_{t_3}$\\
     \hline
    \hline
   $\partial_{t_1}$& 0 & 0 & $2/i\pi$\\ \hline
   $\partial_{t_2}$ & 0 & 2  & 0\\ \hline
    $\partial_{t_3}$&$2/i\pi$  & 0  & 0
  \end{tabular}
\end{center}
which is constant so that $\{t_1,t_2,t_3\}$ are flat coordinates.
\end{proof}
\begin{remark}
As mentioned in Section~\ref{sec:bergman} we can choose a different $(0,2)$ term $B(z,z')$ on the spectral curve \eqref{specweier} which still satisfies the compatibility condition \eqref{dosstest} by varying $b\in\C$.  For each $b$ it gives rise to a CohFT with the same genus 0 three-point function since ancestor invariants are coefficients of $B$-dependent differentials.  When $b$ is chosen so that $B(z,z')$ is normalised along a choice of cycle, e.g. $b=\eta'/\omega'$ so $\int_BB(z,z')=0$, then the CohFT is homogeneous and hence the same CohFT as for $b=\eta/\omega$.  Other choices of $b$ gives rise to non-homogeneous CohFTs.
\end{remark}

\begin{proposition}  \label{th:auxflat}
The flat coordinates correspond to the following auxiliary differentials:
\begin{align*}
dt_1&\longleftrightarrow T^1_0=(\omega\wp+b) dz-2\omega d\left(\frac{\wp^2}{\wp'}\right)+2\eta d\left(\frac{\wp}{\wp'}\right)+\left(\frac{\omega g_2}{4}+\frac{\eta^2}{\omega}\right)d\left(\frac{1}{\wp'}\right) \\
dt_2&\longleftrightarrow T^2_0=-d\left(\frac{\wp}{\wp'}\right)-\frac{\eta}{\omega}d\left(\frac{1}{\wp'}\right)\\
dt_3&\longleftrightarrow T^3_0=-\frac{i\pi}{2\omega}d\left(\frac{1}{\wp'}\right).
\end{align*}
\end{proposition}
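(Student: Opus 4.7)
The plan is to identify each $T^a_0$ as the linear combination of normalized canonical auxiliary differentials $d\xi^i$ dictated by the change of basis from the canonical frame to the flat frame, and then to rewrite this combination in closed form using standard elliptic identities.

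First I would record that on the spectral curve \eqref{specweier} the normalized canonical auxiliary differential is
\[
d\xi^i(z) = (\wp(z-\omega_i)+b)\,dz/\sqrt{\wp''(\omega_i)},
\]
with the expected leading behaviour $ds_i/s_i^2$ at $\omega_i$ in the local coordinate $s_i^2=2(\lambda-u_i)$. Under the change of basis $\partial_{t_a}=\sum_i M^i_a \partial/\partial v_i$ from flat to normalized canonical, the standard CohFT/TR correspondence (Section~\ref{sec:TRCohFT}) identifies $T^a_0 = \sum_i (M^{-1})^a_i\, d\xi^i$; equivalently, $T^a_0$ is the unique meromorphic 1-form on the elliptic curve having at worst double poles at the three $\omega_i$, vanishing $A$-period (guaranteed by $b=\eta/\omega$), and leading coefficient $(M^{-1})^a_i$ in front of $ds_i/s_i^2$ at each $\omega_i$.

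Second I would compute the matrix $M$ explicitly. Lemma~\ref{th:flat} gives $\partial_{t_1}=\partial_c$, so that $\partial_{t_1}=\sum_i\partial/\partial u_i$ and, using $\partial/\partial v_i = (\omega\sqrt{\wp''(\omega_i)})^{-1}\partial/\partial u_i$, one finds $M^i_1=\omega\sqrt{\wp''(\omega_i)}$. For $\partial_{t_2}$ and $\partial_{t_3}$ I would apply the scaling identities $(\omega\partial_\omega+\omega'\partial_{\omega'})\wp(z) = -2\wp(z)-z\wp'(z)$ and $(\eta\partial_\omega+\eta'\partial_{\omega'})\wp(z) = -2\wp(z)^2+g_2/3-\zeta(z)\wp'(z)$, evaluated at the critical points where $\wp'$ vanishes, combined with the Legendre relation $\eta\omega'-\eta'\omega=i\pi/2$, to express $M^i_2$ and $M^i_3$ in terms of $e_i$, $\wp''(\omega_i)$, $\eta$, $\omega$ and $g_2$. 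Inverting $M$ then yields $(M^{-1})^a_i$ in closed form.

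Third, I would recognise the resulting combinations $\sum_i(M^{-1})^a_i d\xi^i$ as the proposed $T^a_0$. The exact differentials $d(1/\wp')$, $d(\wp/\wp')$, $d(\wp^2/\wp')$ have vanishing periods and double poles only at the $\omega_i$, while $(\omega\wp+b)dz$ is the only element of this 4-dimensional space with an additional pole at $z=0$. Matching the principal parts of the two expressions at the three critical points gives three linear conditions per form and determines the coefficients; the key inputs are the identity $\wp''(\omega_i)=6e_i^2-g_2/2$, the partial fraction identity $(\wp')^2=4(\wp-e_1)(\wp-e_2)(\wp-e_3)$, and the Weierstrass addition formula $\wp(z+\omega_i) = -\wp(z)-e_i+\tfrac14(\wp'(z)/(\wp(z)-e_i))^2$ which lets one rewrite $d\xi^i$ in the chosen basis.

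The main obstacle will be the bookkeeping for $T^1_0$: since $\partial_{t_1}$ is the flat identity, $T^1_0$ inherits the non-exact piece $(\omega\wp+b)dz$ that is absent from $T^2_0$ and $T^3_0$, and pinning down the coefficient $\omega g_2/4+\eta^2/\omega$ of $d(1/\wp')$ requires a careful application of the identities above, complemented by the derivative relations for $\eta$ dictated by the Legendre relation. These last relations also account for the appearance of $\eta/\omega$ in $T^2_0$ and of $\eta^2/\omega$ in $T^1_0$.
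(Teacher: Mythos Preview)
Your plan is correct and follows the same overall route as the paper: identify $T^a_0$ as the flat-to-canonical change of basis applied to the $d\xi^i$, then express the result in the basis $\{(\omega\wp+b)dz,\ d(\wp^k/\wp'),\ k=0,1,2\}$. The execution in the paper is somewhat cleaner, however. Rather than computing $M$ (your flat-to-normalized-canonical matrix) directly, inverting it, and then matching principal parts via the Weierstrass addition formula, the paper factors the change of basis through the intermediate basis $\{1,\wp,\wp^2\}$: it writes the canonical--to--$\{1,\wp,\wp^2\}$ matrix as the Vandermonde-type matrix $(\wp(\omega_i)^k)$ and the $\{1,\wp,\wp^2\}$--to--flat matrix as the lower-triangular matrix $T$ already produced in the proof of Lemma~\ref{th:flat}. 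The point is that the product $[U^1,U^2,U^3]\cdot(\wp(\omega_i)^k)^{-1}$ can be evaluated in closed form, without ever inverting the Vandermonde matrix explicitly, by the single family of identities
\[
\frac{\wp(z)^k}{\wp'(z)^2}=\sum_{i=1}^3\frac{\wp(\omega_i)^k\,\wp(z-\omega_i)}{\wp''(\omega_i)^2},\qquad k=0,1,2,
\]
which immediately yields the differentials $-2\omega\,d(\wp^k/\wp')$ (plus the non-exact term for $k=2$). One then multiplies on the right by $T$. This replaces your anticipated addition-formula/principal-part bookkeeping---especially for $T^1_0$---with a short linear-algebra step; in particular, the coefficient $\omega g_2/4+\eta^2/\omega$ drops out directly from the entries of $T$ rather than from derivative relations for $\eta$.
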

\begin{proof}
The auxiliary differentials on the spectral curve corresponding to the normalized canonical basis are straighforward.  They are given by $V^i_kdz$ where
$$V^i_0=\frac{\wp(z-\omega_i)}{\sqrt{\wp''(\omega_i)}}$$
and for $k>0$, $V^i_k$ is the principal part of the $k$th derivative of $V^i_0$ with respect to $\wp(z)$.  We also have the canonical basis $U^i_0=\omega\wp(z-\omega_i)$.  The auxiliary differentials $T^i_kdz$ corresponding to flat coordinates are linear combinations of $V^i_kdz$
$$ V^i_k=\Psi^i_{\mu}\cdot T^{\mu}_k
$$
where we recall that $\Psi^i_{\mu}$ is the transition matrix from flat coordinates labeled by $\mu$ to normalized canonical coordinates labeled by $i$.  
We can calculate $\Psi$ via
$$\left(\begin{array}{ccc}1&1&1\\ \wp(\omega_1)&\wp(\omega_2)&\wp(\omega_3)\\\wp(\omega_1)^2&\wp(\omega_2)^2&\wp(\omega_3)^2\end{array}\right)\left(\begin{array}{c}\partial_{u_1}\\\partial_{u_2}\\\partial_{u_3}\end{array}\right)=\left(\begin{array}{ccc}1&0&0\\ -\frac{\eta}{\omega}&\frac{1}{2\omega}&0\\
\frac{1}{8}g_2-\frac{\eta^2}{2\omega^2}&\frac{\eta}{2\omega^2}&\frac{i\pi}{4\omega^2}\end{array}\right)\left(\begin{array}{c}\partial_{t_1}\\\partial_{t_2}\\\partial_{t_3}\end{array}\right)
$$
which we write as $M\frac{\partial}{\partial u}=T\frac{\partial}{\partial t}$ hence $M^{-1}T=\Delta^{1/2}\Psi^T$. 
The auxiliary differentials corresponding to $1,\wp,\wp^2$ in the Landau-Ginzburg model are:
\begin{align*}[U^1,U^2,U^3]\cdot M^{-1}=&\\
2\omega dz\left[\frac{\wp(z-\omega_1)+b}{\wp''(\omega_1)},\right.&\left.\frac{\wp(z-\omega_2)+b}{\wp''(\omega_2)},\frac{\wp(z-\omega_3)+b}{\wp''(\omega_3)}\right]
\hspace{-1mm}\left(\hspace{-1mm}\begin{array}{ccc}\wp(\omega_1)^2-\frac{1}{4}g_2&\wp(\omega_1)&1\\ \wp(\omega_2)^2-\frac{1}{4}g_2&\wp(\omega_2)&1\\ \wp(\omega_3)^2-\frac{1}{4}g_2&\wp(\omega_3)&1\end{array}\hspace{-1mm}\right)\\
=\left[-2\omega d\left(\frac{\wp^2}{\wp'}\right)\right.&\left.+(\omega\wp +b)dz+\frac{\omega g_2}{2}d\left(\frac{1}{\wp'}\right),-2\omega d\left(\frac{\wp}{\wp'}\right),-2\omega d\left(\frac{1}{\wp'}\right)\right]
\end{align*}
which is proven using the elliptic identities
$$\frac{\wp(z)^k}{\wp'(z)^2}=\sum_{i=1}^3\frac{\wp(\omega_i)^k\wp(z-\omega_i)}{\wp''(\omega_i)^2},\quad k=0,1,2$$
and slight generalisations for $k>2$.
Hence
\begin{align*}
[T^1,T^2,T^3]&=[U^1,U^2,U^3]\cdot M^{-1}\cdot T\\
&=\Big[-2\omega d\left(\frac{\wp^2}{\wp'}\right)+(\omega\wp+b) dz+\left(\frac{\omega g_2}{4}+\frac{\eta^2}{\omega}\right)d\left(\frac{1}{\wp'}\right) +2\eta d\left(\frac{\wp}{\wp'}\right),\\
&\hspace{5cm}-d\left(\frac{\wp+\eta/\omega}{\wp'}\right),
 -\frac{i\pi}{2\omega}d\left(\frac{1}{\wp'}\right)\Big]
\end{align*}
\end{proof}

The following lemma allows us to apply equation \eqref{eq:CohFT-SpectralCurve-match} to obtain ancestor invariants for the CohFT.
\begin{lemma}
The following kernels $K^i_0$ 
$$K^1_0=y(z),\quad K^2_0=-2\omega\zeta(z)+2\eta,\quad K^3_0=\frac{4}{i\pi}\left(\omega z\wp(z)^2-\left(\frac{\eta^2}{2\omega}+\frac{\omega}{8}g_2\right)z+\eta\zeta(z)\right)$$
are dual (as linear functionals) to $T^i_0$ for $i=1,2,3$, i.e.
$$\sum_{j=1}^3\res_{z=\omega_j}K^j_0(z)T^i_k(z)=\delta_{ij}\delta_{k0}.$$
\end{lemma}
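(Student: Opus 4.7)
The plan is to verify the duality $\sum_l \res_{z=\omega_l} K^\mu_0(z) T^\nu_k(z) = \delta^\mu_\nu\, \delta_{k,0}$ by direct residue computation, after reducing to the normalized canonical basis. By the relation $V^i_k = \sum_\mu \Psi^i_\mu T^\mu_k$ from Proposition~\ref{th:auxflat}, the claim is equivalent to
\beq
\sum_{l=1}^3 \res_{z=\omega_l}\left(K^\mu_0(z)\, V^i_k(z)\, dz\right) = \Psi^i_\mu\,\delta_{k,0},
\eeq
and it suffices to check this latter identity. This reformulation is convenient because $V^i_0\, dz = (\wp(z-\omega_i)+b)\,dz/\sqrt{\wp''(\omega_i)}$ has a single double pole at $\omega_i$ with zero residue, while for $k\geq 1$ the global form $V^i_k\, dz$ is a meromorphic 1-form on ${\cal D}$ whose singular behaviour at each critical point is controlled by $ds_i/s_i^{2k+2}$ in the local coordinate $s_i=\sqrt{2(\wp(z)-\wp(\omega_i))}$.

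For $k=0$, the identity $\res_{u=0}(f(u)\,du/u^2)=f'(0)$ collapses the pairing to $(K^\mu_0)'(\omega_i)/\sqrt{\wp''(\omega_i)}$. A short calculation using $\zeta'(z)=-\wp(z)$ and $\wp'(\omega_i)=0$ gives $(K^1_0)'(\omega_i)=1/\omega$, $(K^2_0)'(\omega_i)=2\omega\wp(\omega_i)$, and $(K^3_0)'(\omega_i)=(4/i\pi)\bigl(\omega\wp(\omega_i)^2-\eta\wp(\omega_i)-\eta^2/(2\omega)-\omega g_2/8\bigr)$. I will match these expressions against $\sqrt{\wp''(\omega_i)}\,\Psi^i_\mu$ as read off from $\Delta^{1/2}\Psi^T = M^{-1}T$ in the preceding proposition, invoking the elliptic identity $\wp''(\omega_i)=6\wp(\omega_i)^2-g_2/2$ and, for $\mu=3$, Legendre's relation $\eta\omega'-\eta'\omega=i\pi/2$ to reconcile the $i\pi$ normalization of $K^3_0$ with the $i\pi/2\omega^2$ in the metric entry for $\partial_{t_3}$.

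For $k\geq 1$, the pairing vanishes because each $K^\mu_0$ is a primary dual. For $\mu=1$, $K^1_0=y$ is precisely the flat-identity / string-equation operator recalled in the Remark following Theorem~\ref{thm:right-function-y}, which annihilates the descendant auxiliary differentials. For $\mu=2,3$, the analogous vanishing follows from the global residue theorem on $\mathcal{D}$: the 1-form $K^\mu_0\cdot V^i_k\,dz$ has additional poles only at lattice points, where the Weierstrass quasi-periodicities $\zeta(z+2\omega)=\zeta(z)+2\eta$ and $\zeta(z+2\omega')=\zeta(z)+2\eta'$, combined once more with Legendre's relation, force the extra residue contributions to cancel.

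The main obstacle will be the $\mu=3$ verifications in both the $k=0$ and $k\geq 1$ cases. The precise combination $\omega z\wp(z)^2-(\eta^2/(2\omega)+\omega g_2/8)z+\eta\zeta(z)$ in $K^3_0$ is tuned so that the linear-in-$z$ counterterms cancel the non-elliptic drift of $\eta\zeta(z)$ under the lattice periodicities, simultaneously reproducing the correct $\Psi^i_3$ for $k=0$ and killing all extra residues for $k\geq 1$. Managing these cancellations uniformly in $i$ via repeated use of Legendre's relation and $\sum_i\wp(\omega_i)=0$ is the technical heart of the proof.
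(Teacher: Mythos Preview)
Your route is correct in outline but genuinely different from the paper's.  The paper never passes to the normalized canonical basis $V^i_k$; instead it exploits directly the explicit formulae for $T^\mu_0$ in Proposition~\ref{th:auxflat} as linear combinations of $d(\wp^k/\wp')$ plus terms analytic at the $\omega_j$.  The key manoeuvre is integration by parts:
\[
\sum_{j}\res_{z=\omega_j}K^\mu_0\, d\!\left(\frac{\wp^k}{\wp'}\right)=-\sum_{j}\res_{z=\omega_j}dK^\mu_0\cdot\frac{\wp^k}{\wp'}=\res_{z=0}\,dK^\mu_0\cdot\frac{\wp^k}{\wp'},
\]
the last equality because $dK^\mu_0$---unlike $K^\mu_0$ itself---is an honest elliptic differential, so the global residue theorem applies directly without any quasi-periodicity bookkeeping.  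All pairings then reduce to a single Laurent computation at $z=0$.  The same device handles $k>0$: the paper observes that $T^\mu_k$ is an exact differential $df$ with $f$ vanishing to order~$2$ at $z=0$, and since $dK^\mu_0$ has at worst a double pole there, integration by parts kills the pairing.

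Your approach of converting to $V^i_0$ and reading off $(K^\mu_0)'(\omega_i)/\sqrt{\wp''(\omega_i)}$ for $k=0$ is a clean alternative that makes the link with $\Psi$ transparent.  For $k\ge1$ and $\mu=2,3$, however, your plan to run a global residue argument directly on $K^\mu_0\cdot V^i_k\,dz$ forces you to confront the quasi-periodicity of $z$ and $\zeta(z)$ head-on, via a fundamental-domain boundary computation.  That can be carried through, but the paper's integration by parts is strictly simpler: passing from $K^\mu_0$ to $dK^\mu_0$ restores ellipticity, so no boundary terms ever appear and Legendre's relation is not needed for the $k\ge1$ vanishing.
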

\begin{proof}
Each kernel is analytic at $z=\omega_i$, $i=1,2,3$ and hence annihilates differentials analytic at $z=\omega_i$.  Consider the action of each kernel on $d(\wp^k/\wp')$ for $k=0,1,2$.
$$\sum_{j=1}^3\res_{z=\omega_j}K^i_0(z)d\left(\frac{\wp^k}{\wp'}\right)=-\sum_{j=1}^3\res_{z=\omega_j}dK^i_0(z)\frac{\wp^k}{\wp'}=\res_{z=0}dK^i_0(z)\frac{\wp^k}{\wp'}$$
so $K^1_0=y(z)=z/\omega$ annihilates $d(\wp^k/\wp')$ for $k=0,1$ and sends $d(\wp^2/\wp')$ to $-1/2\omega$.  Similarly $K^2_0=\zeta(z)$ annihilates $d(\wp^k/\wp')$ for $k=0,2$ and sends $d(\wp/\wp')$ to $1/2$.  Apply the kernels to $T^i_0$ given in Proposition~\ref{th:auxflat} as linear combinations of $d(\wp^k/\wp')$ (and terms analytic at $z=\omega_i$) to achieve the result.

The kernels $K^1_0$ and $K^2_0$ annihilate exact differentials that vanish to order 2 at $z=0$, in particular $T^i_k$ for $k>0$ by integration by parts.  One can also check that $K^3_0$ annihilates $T^i_k$ for $k>0$.
\end{proof}
{\em Remark.}  One can also produce kernels $K^i_j$ dual to each $T^i_j$.


The 3-point function in flat coordinates leads to the prepotential given in \cite{Dub2dTFT} (C.87):
$$F_0=\frac{1}{i\pi}t_1^2t_3+t_1t_2^2-\frac{i\pi}{2}t_2^4\left(\frac{1}{24}-\sum_{n=1}^{\infty}\frac{nq^n}{1-q^n}\right),\quad q=e^{2\pi i t_3}.
$$

\begin{proposition}
$$\exp F_1 =t_2^{1/8}\eta(q)^{1/4},\quad \eta(q)=q^{1/24}\prod_{n=1}^{\infty}(1-q^n).$$
\end{proposition}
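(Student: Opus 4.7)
The plan is to compute $F_1=\omega_{1,0}$ directly from topological recursion applied to the spectral curve \eqref{specweier}, using Eynard's classical genus-one formula
$$
F_1 \;=\; -\tfrac{1}{24}\,\log \prod_{i=1}^3 \left.\tfrac{dy}{ds_i}\right|_{s_i=0}\; -\; \tfrac{1}{2}\log \tau_B,
$$
where $s_i$ is the local coordinate at $\omega_i$ normalised so that $x=u_i+\tfrac{1}{2}s_i^2$, and $\tau_B$ is the Bergman tau-function of the family of elliptic curves equipped with the $A$-cycle-normalised bidifferential $B$. The first term is elementary: expanding $\wp$ near $\omega_i$ gives $(dy/ds_i)_{s_i=0}=1/(\omega\sqrt{\wp''(\omega_i)})$, and using $\wp''(\omega_i)=2\prod_{j\neq i}(e_i-e_j)$ together with the discriminant identity $\Delta=16\prod_{i<j}(e_i-e_j)^2$, the product of these three terms reduces to a constant times $\omega^{-3}\Delta(q)^{-1/4}$.

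The second step is the identification of $\tau_B$. The Rauch variational formula gives
$$
\frac{\partial \log \tau_B}{\partial u_i} \;=\; -\tfrac{1}{2}\,\mathrm{Res}_{z=\omega_i}\frac{B(z,\sigma_i z)}{dx(z)},
$$
and since the $A$-cycle normalisation corresponds to $b=\eta/\omega$, I expect $\tau_B$ to be proportional to $\omega^{\alpha}\,\eta(q)^{\beta}$ for explicit constants $\alpha,\beta$. These constants can be fixed either by (a) computing the residues above explicitly via the local expansion of $\wp(z-\sigma_i z)$ at $\omega_i$ and integrating the resulting Pfaffian system in $u$, or (b) invoking Kokotov--Korotkin's explicit formulas for the Bergman tau-function in genus one and translating their normalisation to ours. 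The fundamental modular identity $(2\omega)^{12}\Delta(q)=(2\pi)^{12}\eta(q)^{24}$ provides the bridge between the $\Delta$-contribution from the first term and the $\eta$-contribution from $\tau_B$.

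The third and final step is to pass to the flat coordinates of Lemma~\ref{th:flat}, where $t_2=1/\omega$ and $q=e^{2\pi i t_3}$. Combining the two ingredients above and using the modular identity, the $\omega$-dependent prefactors should collapse into the single power $t_2^{1/8}$, while the $\Delta$-contribution and the $\tau_B$-contribution combine into $\eta(q)^{1/4}$, yielding exactly $\exp F_1 = t_2^{1/8}\eta(q)^{1/4}$ as claimed.

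The main obstacle is the explicit identification of $\tau_B$ with a power of the Dedekind $\eta$-function, including both the exponent $\beta$ and the compensating power $\alpha$ of $\omega$; getting the normalisations right is delicate because there are several different conventions in the literature. A secondary subtlety is the additive ambient constant in $F_1$: topological recursion determines $F_1$ only up to a constant independent of the moduli $(\omega,\omega',c)$, and one must verify that our choice of the Bergman kernel's $A$-cycle normalisation fixes this constant so that the equality (rather than mere proportionality) in the statement holds.
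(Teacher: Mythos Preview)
Your approach differs substantially from the paper's. The paper does not use the Eynard closed formula for $F_1$ at all. Instead it computes $\omega_{1,1}(z_1)$ directly from the recursion kernel $K(z_1,z)$ (summing the residues of $K(z_1,z)\,\omega_{0,2}(z,\sigma_i z)$ over the three half-periods), then expands the result in the flat auxiliary basis $T^\alpha_j$ of Proposition~\ref{th:auxflat} by pairing against the dual kernels $K^\alpha_0$. The coefficients of $T^2_0$ and $T^3_0$ are identified, via~\eqref{eq:CohFT-SpectralCurve-match}, with $\partial_{t_2}F_1=\omega/8$ and $\partial_{t_3}F_1=i\eta\omega/(4\pi)$, and these integrate to the stated expression. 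The computation is elementary once the machinery of auxiliary differentials and their duals is in place, and it stays entirely within the DOSS framework already established in the paper.

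Your route via $F_1^{\mathrm{TR}}=-\tfrac{1}{24}\log\prod_i (dy/ds_i)|_0-\tfrac{1}{2}\log\tau_B$ is plausible but has a gap you did not flag: this formula defines the Eynard--Orantin \emph{symplectic invariant} $F_1^{\mathrm{TR}}$, whereas the proposition concerns the genus-one free energy of the CohFT. The correspondence of Section~\ref{sec:TRCohFT} (and Theorem~\ref{thm:any-genus-identification}) identifies $\omega_{g,k}$ for $k\geq 1$ with CohFT correlators; it says nothing about $\omega_{g,0}$. To close the gap you would need the special-geometry variational relation tying $\partial_{t_\alpha}F_1^{\mathrm{TR}}$ to the primary part of $\omega_{1,1}$, at which point you have essentially re-derived the paper's computation as an intermediate step. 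If instead you import Kokotov--Korotkin for $\tau_B$ together with the modular identity $(2\omega)^{12}\Delta=(2\pi)^{12}\eta(q)^{24}$, the argument can be made to work, but it brings in nontrivial external results and --- as you yourself note --- the normalisation-matching (sign of the $\tfrac{1}{24}$ term, choice of $b=\eta/\omega$ in $\tau_B$, and the power of $\omega$) is genuinely delicate. Both approaches, incidentally, determine $F_1$ only up to an additive constant.
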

\begin{proof}
Topological recursion---defined in Section~\ref{sec:TR}---applied to the spectral curve \eqref{specweier} uses the kernel 
\begin{align*}K(z_1,z)&=\frac{\omega}{2}\frac{\int^{z}_{\sigma_i(z)}(\wp(z_1-w)+b)dwdz_1}{(z-\sigma_i(z))\wp'(z)dz}\\&=\frac{\omega}{4}\frac{(\zeta(z_1+z)-\zeta(z_1-z)+2\eta_i+2b(z-\omega_i))dz_1}{(z-\omega_i)\wp'(z)dz}
\end{align*}
where $\sigma_i(z)=2\omega_i-z$.  Hence
\begin{align*}
\omega_{1,1}(z_1)&=\sum_{j=1}^3\res_{z=\omega_j}K(z_1,z)\wp(2z)=\sum_{j=1}^3\res_{z=\omega_j}\frac{\omega}{4}\frac{(\zeta(z_1+z)-\zeta(z_1-z))}{(z-\omega_i)\wp'(z)}\wp(2z)dzdz_1\\
&=\frac{\omega}{8}\left(2\sum_{j=1}^3\frac{\wp(\omega_i)\wp(z_0-\omega_i)}{\wp''(\omega_i)}-
\sum_{j=1}^3\frac{\wp(z_0-\omega_i)^2}{\wp''(\omega_i)}\right)dz_1\\
&=\frac{\omega}{8}\left(\frac{2\wp\wp''}{(\wp')^2}-\frac{(\wp'')^3}{(\wp')^4}+10g_2\frac{\wp}{(\wp')^2}+15g_3\frac{1}{(\wp')^2}+11\right)dz_1
\end{align*}
where $\eta_i\in\C$  and $b$ are annihilated by the residues.
Integrate the kernels $K^i_j$ against $\omega_{1,1}$ to get
$$\omega_{1,1}=0\cdot T^1_0+\frac{\omega}{8}T^2_0+\frac{i\eta\omega}{4\pi}T^3_0+\frac{1}{8}T^1_1+\frac{\eta}{4}T^2_1+\frac{g_2\omega^2-12\eta^2}{48i\pi}T^3_1.
$$
The primary part uses only $T^k_0$ and yields
$$F_1=\frac{1}{8}\log t_2 +f(t_3),\quad f'(t_3)=\frac{i\pi}{2}\left(\frac{1}{24}-\sum_{n=1}^{\infty}\frac{nq^n}{1-q^n}\right)$$
which is obtained from $\omega_{1,1}$ since $\frac{\partial F_1}{\partial t_2}=\frac{1}{8t_2}=\frac{\omega}{8}$ agrees with the coefficient of $T^2_0$ and $\frac{\partial F_1}{\partial t_3}=\frac{i\pi}{2}\left(\frac{1}{24}-\sum_{n=1}^{\infty}\frac{nq^n}{1-q^n}\right)=\frac{i\eta\omega}{4\pi}$ agrees with the coefficient of $T^2_0$.

\end{proof}

\section{General theory} \label{sec:general-theory}

In the preceding sections, we have investigated the construction of a global spectral curve producing the ancestor potential of a Frobenius manifold by topological recursion in some examples or assuming some additional properties of the curve defined by Dubrovin's superpotential. In the present section, we begin with the data of a semi-simple Frobenius manifold, and produce a global curve in a general setup not coming from the superpotential but rather from a family of curves built out of the reflection  group generated by the monodromies of the solutions of our Fuchsian system. In particular, it explains how our setup is related to the spectral curve built by Milanov in~\cite{MilanovGlobal}.

\subsection{Spectral curves from reflection group}
Here we define a family of spectral curves associated to the reflection group defined by the monodromies of the Fuchsian system given by Equation~\eqref{eq:Gauss-Manin-system-5-31-32}. The spectral curve defined by Dubrovin's superpotential is a particular point in this family. 

\begin{definition}
	For any $\gamma = (\gamma_1,\dots,\gamma_n) \in \mathbb{C}^n$, let us define a function $\phi^{[\gamma]} : \mathbb{C} \backslash \{L_i\} \to \mathbb{C}$ by
	\beq
	\phi^{[\gamma]}(\lambda;u) :=\sum_{i=1}^\mu \gamma_i \phi^{(i)}(\lambda;u)
	\eeq
	where $\phi^{(i)}$ are solutions to Equation~\ref{eq:Gauss-Manin-system-5-31-32} defined as in section~\ref{sec:DubrovinsSuperpotential}.
	
	We define the corresponding function $p^{[\gamma]}$  analytic  on $\mathbb{C}  \backslash \{L_i\} $ by
	\beq
	p^{[\gamma]}(\lambda,u) := \frac{\sqrt 2}{1-d} \left(\phi^{[\gamma]}\right)^T (U-\lambda) \Psi \un .
	\eeq
	
	Finally, let us define the pairing
	\beq
	\forall (\gamma,\gamma') \in \mathbb{C}^{2n} \, , \; (\gamma|\gamma') := - 2  \sum_{i,j} \gamma_i G^{ij} \gamma_j'.
	\eeq
	
\end{definition}

The main property of these functions is that $\phi^{[\gamma]}$ has the local behavior 
\begin{align} \label{eq:phi-locally-eq1-gen}
& \phi_j^{[\gamma]} = \frac{\sum_{i=1}^n \gamma_i G^{ij}}{\sqrt{u_j-\lambda}} + O(1)\ \text{for}\ \lambda\to u_j, & j=1,\dots,n;\\
\label{eq:phi-locally-eq2-gen}
& \phi_a^{[\gamma]} = \sum_{i=1}^n \gamma_i G^{ij} \sqrt{u_j-\lambda}\cdot O(1)\ \text{for}\ \lambda\to u_j, & a\not=j; a,j=1,\dots,n 
\end{align}
and $p^{[\gamma]}$ has  a local behavior for $\lambda\to u_i$ given by 
\begin{equation}\label{eq:p-lambda-locally-gen}
p^{[\gamma]}(\lambda,u) = p^{[\gamma]}(u_i,u)+ \sum_{j=1}^n \gamma_j G^{ji} \Psi_{i,\un} \sqrt{2(u_i-\lambda)}+O(u_i-\lambda),\quad i=1,\dots,n.
\end{equation}

Let $e_1,\dots,e_n$ be the standard basis of $\mathbb{C}^n$. We have $\phi^{[e_i]}(\lambda;u) = \phi^{(i)}(\lambda;u)$.

\begin{remark}
	Dubrovin's standard superpotential defined in Section~\ref{sec:DubrovinsSuperpotential} is obtained by considering the particular case $\gamma_j = \sum_{i=1}^n G_{ij}$.
\end{remark}

From now on, we assume that the reflections $\refl_i$ generate a finite group $W$.  Infinite families of Frobenius manifolds with finite group $W$ are given in \cite{Dub2dTFT}.

For any $\gamma \in \mathbb{C}^n$, one can define  a Riemann surface ${\cal D}^{[\gamma]}$ as a cover $\lambda^{[\gamma]}: {\cal D}^{[\gamma]} \to \mathbb{C}$ where $\lambda^{[\gamma]}(\tilde{p}^{[\gamma]},u)$ is the inverse function to $\tilde{p}^{[\gamma]}$ defined out of ${p}^{[\gamma]}$ by resolving the zeroes of $dp^{[\gamma]}$ as in Section~\ref{sec:DubrovinsSuperpotential}.  It is important to remark that the construction of $D^{[\gamma]}$ as a branched cover of $\mathbb{C} \backslash \{L_i\}$ does not depend on $\gamma$ but rather on a choice of gluing for the different sheets---see Remark~\ref{gluegenus} for a discussion of these choices. In this section, we consider the most naive gluing and the resulting spectral curve.
 
We consider the reflection $\refl_i$ as a linear map on the space $\mathbb{C}^n$ changing the coordinates of the vectors by the following rule:
\beq
\gamma_j \to \left\{ \begin{array}{l}
	\gamma_j \quad \hbox{if} \quad j\neq i \cr
	\gamma_i  + (\gamma | e_i) \quad \hbox{if} \quad j=i \cr
\end{array}
\right. .
\eeq
We denote $w\gamma$ the image of a vector $\gamma$ under the action of an element $w \in W$.

 We build the spectral curve ${\cal D}^{[\gamma]}$ as follows. A point $z \in {\cal D}^{[\gamma]}$ is defined by a pair $(\lambda,p)\in \hat{D} \times \mathbb{C}$ such that $p^{[\gamma]}(\lambda,u) = p$. By definition of $p^{[\gamma]}(\lambda,u)$, this defines a cover of $\hat{D}$ with ramification points in the fibres above the critical values $u_1,\dots,u_n$. We now glue in the most naive way, meaning that each point in the fibre above any of the $u_i$ is a simple ramification point. Let us now describe this sheeted cover.

Our spectral curve is obtained by analytic continuation of $p^{[\gamma]}$ from $\hat{D}$ through the (pre-images of the) cuts $L_i$. Each copy of $\hat{D}$ is then viewed as a sheet of a branched cover of the $\lambda$ plane. 
We can analytically continue $p^{[\gamma]}$ through $L_i$ seen as a cut on a Riemann surface giving rise to a new function of $\lambda$
\beq
p^{[\refl_i\gamma]}(\lambda):= \refl_i p^{[\gamma]}(\lambda;u) := \frac{\sqrt 2}{1-d} \left(\refl_i \phi^{[\gamma]}\right)^T (U-\lambda) \Psi \un
\eeq
where
\beq
\refl_i \phi^{[\gamma]}(\lambda;u)  = \sum_{j=1}^\mu \gamma_j \refl_{i} \phi^{(j)}(\lambda;u)
= \phi^{[\gamma]}(\lambda;u) + (\gamma|e_i) \phi^{[e_i]}(\lambda;u).
\eeq 
In other terms, we glue along the images of the cut  $L_i$ the sheets given by $p^{[\refl_i \delta]}(\lambda)$ and $p^{[\delta]}(\lambda)$  for all $\delta$ in the $W$-orbit of the initial vector $\gamma$.

The above procedure defines a $|W|$ sheeted cover ${\cal D}$ of the $\lambda$ plane such that the fiber above a point $\lambda$ is $\{p^{[w \gamma]}(\lambda,u)\}_{w \in W}$. The different sheets of this cover can thus be labelled by elements $w \in W$ and we denote by $\lambda^{[w]}$ the unique point in the fiber above a generic point $\lambda$ belonging to the sheet labelled by $w$. We define by $p$ the unique function on ${\cal D}$ such that
\beq
\forall w \in W \, , \; p\left(\lambda^{[w]}\right) = p^{[w\gamma]}(\lambda,u) 
\eeq
for a generic $\lambda$.

This cover is branched over all the points in the fibres above the points $u_i$, $i=1,\dots,n$, and a ramification point above $u_i$ joins the sheets labelled by $w$ and $\refl_i w$ for some $w \in W$.

This branched cover is our spectral curve. It has  $|W|/2$ simple ramification points over $u_i$, $i=1,\dots,n$. We denote by $u_{i}^{[w]}$, $w\in W$, the point in the fiber above $u_i$ such that $p\left(u_i^{[w]}\right) = p^{[w \gamma]}(u_i)$. This notation is ambiguous, so we denote by $W_i$ the minimal set such that 
\beq
\lambda^{-1}(u_i) = \{u_i^{w}\, | \,w \in W_i\}.
\eeq
By definition, one has the important relation
\beq
\forall w \in W_i \, , \; p\left(\lambda^{[w]}\right) - p\left(\lambda^{[\refl_i w]}\right) = 
{(w \gamma | e_i) \over (\gamma | e_i)} \left[ p\left(\lambda^{[Id]}\right) - p\left(\lambda^{[\refl_i]}\right) \right] .
\eeq

Thanks to our assumption of finiteness of $W$, ${\cal D}$ can be compactified by introducing ramification points of higher order above $\infty$.

The order of these ramification points above $\infty$ deserves some investigation. Since the reflection group $W$ is finite, then the ramification index of such a point is equal to the Coxeter number $h(W)$, i.e. the order of a Coxeter transformation. In such a case, there exists a longest positive root $\sum_i m_i \alpha_i$ (reminding that the set $\{\alpha_i\}$ is a set of simple roots) and the Coxeter number is equal to $1+\sum_i m_i$.

Let us recall as well that a Coxeter transformation is a product of all simple reflections. The different order for this product leading to different transformations, all with the same order. The different Coxeter transformations correspond to the different points in the fiber above $\infty$. In the case of an infinite group, this order is infinite and the different ramification points in the fiber above $\infty$ correspond to different conjugacy classes of Coxeter transformations.

We now have a Riemann surface $\Sigma$ which is a branched cover of the $\lambda$ plane. In our case, when the group is finite, its genus is given by the Riemann-Hurwitz formula:
\beq
2-2g(\Sigma) = 2 |W| - {|W| \over 2} n - (h(W)-1) {|W| \over h(W)}.
\eeq

\begin{remark}  \label{gluegenus}
	We have built {\bf a} curve using this procedure. There exist two ways of changing the cover built in this way. First by specifying some particular value for the vector $\gamma$. Second, by choosing a different gluing procedure for building the cover: for each point in the fibre above a critical value $u_i$, one can decide whether it is a ramification point or not. We followed here the most naive procedure where all the points are ramification points, recovering the spectral curve built by Milanov in the case of simple singularities~\cite{MilanovGlobal}. This procedure is the most general but gives the highest possible genus of the curve.
	
	In the preceding sections, we had chosen a particular value of $\gamma$ prescribed by Dubrovin's construction as well as the simplest possible curve by considering covers where only one point in the fibre above each critical value is a ramification point. This leads to the lowest genus spectral curve possible but requires one to study the gluing procedure carefully case by case.
\end{remark}

\subsection{Global topological recursion and correlation functions of a CohFT}

\subsubsection{Global topological recursion}

We remark that we are not in the cases discussed in the preceding sections since the spectral curve has $|W|/2$ ramification points in the fibre above one critical value. This implies that the topological recursion has to be modified a little in order to take the right form. 

\bd
We define the correlation functions defined by the global topological recursion applied to ${\cal D}$ as the differential forms defined by induction through
\begin{align*}
& \om_{g,k}(z_1,\dots,z_k) = \\
& \sum_{i=1}^n \sum_{w \in W_i}  \Res_{z \to u_{i}^{[w]}} {\int_{z}^{\sigma_{i,w}(z)} B(z_1,\cdot) \over 2 \left(\om_{0,1}(z) - \om_{0,1}(\sigma_i(z))\right)} \left[
\om_{g-1,k+1}(z,\sigma_{i,w}(z),z_2,\dots,z_k)
\phantom{ \sum_{h=1}^g}\right. \\
& \left. + \sum_{A\sqcup B = \{2,\dots,k\}} \sum_{h=0}^g \om_{h,|A|+1}(z, \vec z_A) \om_{g-h,|B|+1}(\sigma_{i,w}(z), \vec z_B) \right] ,
\end{align*}
where
\beq
\om_{0,1}(z) := p(z) d\lambda(z) ,
\eeq
\beq
\om_{0,2}(z_1,z_2) = \sum_{w \in W} (\gamma | w \gamma) B(z_1,z_2)
\eeq
and $\sigma_{i,w}$ is the local involution exchanging the two sheets meeting at $u_{i}^{w}$. In the right hand side, all the contributions involving a factor of $\om_{0,1}$ are set to $0$.

\ed

Note that, in this recursion, the recursion kernel does not involve $\om_{0,2}$ itself but rather $B$. This might seem to break the usual symmetry between the different arguments of $\om_{g,k}$ but, as we shall see in the next section, it is not the case.

\subsubsection{From global to local}

In \cite{DOSS12}, the correspondence between topological recursion and CohFT was discussed only at the local level. In order to match the correlation functions defined by the global topological recursion with those of the CohFT, let us translate the global recursion into a local one written in terms of integrals in the $\lambda$-plane around the critical values $u_i$.

\bl
The global topological recursion on the spectral curve ${\cal D}$ with $x = \lambda$, $y=p$ and $B(p_1,p_2)$ is equivalent to the local recursion with local spectral curve
\beq
\forall i=1, \dots n \, ,  \; \om_{0,1}^{[i]}(\lambda) = \Delta_{i,\lambda} p(\lambda^{[Id]}) d\lambda
\eeq
and
\beq
\forall i,j=1,\dots n \, , \; \om_{0,2}^{[i,j]}(\lambda_1,\lambda_2) = \Delta_{i,\lambda_1} \Delta_{j,\lambda_2} \om_{0,2}(\lambda_1^{[Id]},\lambda_2^{[Id]})
\eeq
where
\beq
\Delta_{i,\lambda} f(\lambda^{[w]}) = { f(\lambda^{[w]}) - f(\lambda^{[\refl_i w]}) \over 2}
\eeq
for a meromorphic form $f$ on ${\cal D}$.

In other words, the discontinuities
\beq
\om_{g,k}^{[i_1,\dots,i_k]}(\lambda_1,\dots,\lambda_k) :=  \prod_{j=1}^k \Delta_{i_j,\lambda_{j}} \om_{g,k}(\lambda_1^{[Id]},\dots,\lambda_k^{[Id]})
\eeq
of the correlation functions $\om_{g,k}$ produced by the global recursion satisfy the corresponding local recursion.
\el

\begin{proof}

	It is first important to note that
	\beq\label{disc-om-gn}
	\Delta_{i,\lambda_1} \om_{g,k+1}(\lambda^{[w]},\lambda_1^{[w_1]},\dots,\lambda_k) = {(w \gamma | e_i) \over (\gamma | e_i)} \Delta_{i,\lambda_1} \om_{g,k+1}(\lambda^{[Id]},\lambda_1^{[w_1]},\dots,\lambda_k).
	\eeq
	This is proved by induction and follows from the definition of $\om_{0,2}$ in terms of the Bergman kernel.
	This property allows us to rewrite the topological recursion in a local version where one sums only over one of the ramification points in the fiber above each of the critical values $u_i$.
	
	Writing $z = \lambda^{[w]}$, one can rewrite the term $\Res_{z \to u_{i}^{[w]}}$ as a residue when $\lambda \to u_i$ in the following way:
	\beq
	\Res_{\lambda^{[w]} \to u_i^{[w]}} = 2 \Res_{\lambda \to u_i}.
	\eeq
	This gives 
	\begin{align*}
	& \om_{g,k}(z_1,\dots,z_k) = \\
	& \left.\sum_{i=1}^n \sum_{w\in W_i}  \Res_{\lambda \to u_{i}} {\int_{\lambda^{[w]}}^{\lambda^{[\refl_i w]}} B(z_1,\cdot) \over  2 \Delta_{i,\lambda} \om_{0,1}(\lambda^{[w]})} 
	\Delta_{i,\lambda} \Delta_{i,\lambda'}\right[
	\om_{g-1,k+1}(\lambda^{[w]},\lambda'^{[w]},z_2,\dots,z_k) \\ 
	& \left.\left. 
	\sum_{A\sqcup B = \{2,\dots,k\}}  \sum_{h=0}^g  \om_{h,|A|+1}(\lambda^{[w]}, \vec z_A) \om_{g-h,|B|+1}(\lambda'^{[w]}, \vec z_B) \right] \right|_{\lambda' = \lambda}.
	\end{align*}
	Plugging property \eqref{disc-om-gn} into this equation, the global recursion reads
	\begin{align*}
	& \om_{g,k}(z_1,\dots,z_k) = \\
	& \left. \sum_{i=1}^n   \Res_{\lambda \to u_{i}} {{\sum\limits_{w\in W_i}} {(w \gamma | e_i) \over (\gamma|e_i)} \int_{\lambda^{[w]}}^{\lambda^{[\refl_i w]}} B(z_1,\cdot)  \over 2 \Delta_{i,\lambda} \om_{0,1}(\lambda^{[Id]})} 
	\Delta_{i,\lambda} \Delta_{i,\lambda'}\right[
	\om_{g-1,k+1}(\lambda^{[Id]},\lambda'^{[Id]},z_2,\dots,z_k)
	\\
	&\left.  \left.  \sum_{A\sqcup B = \{2,\dots,k\}} \sum_{h=0}^g \om_{h,|A|+1}(\lambda^{[Id]}, \vec z_A) \om_{g-h,|B|+1}(\lambda'^{[Id]}, \vec z_B) \right]\right|_{\lambda' = \lambda} .
	\end{align*}
	Finally, using the fact that $ 2 \sum_{w \in W_i} = \sum_{w \in W}$ in the expression above, one gets
	\begin{align*}
	& \om_{g,k}(z_1,\dots,z_k) = \\ &  
	\left. \frac 14  \sum_{i=1}^n   \Res_{\lambda \to u_{i}} { \Delta_{i,\lambda} \int^{\lambda^{[Id]}} \om_{0,2}(z_1,.)  \over \Delta_{i,\lambda} \om_{0,1}(\lambda^{[Id]})} 
	\Delta_{i,\lambda} \Delta_{i,\lambda'}\right[ 
	\om_{g-1,k+1}(\lambda^{[Id]},\lambda'^{[Id]},z_2,\dots,z_k)
	\\ & \left.\left.
	\sum_{A\sqcup B = \{2,\dots,k\}} \sum_{h=0}^g \om_{h,|A|+1}(\lambda^{[Id]}, \vec z_A) \om_{g-h,|B|+1}(\lambda'^{[Id]}, \vec z_B)
	\right]\right|_{\lambda' = \lambda} .
	\end{align*}
	Acting with the operators $\prod_{j=1}^k \Delta_{i_j,\lambda_j}$ on both sides proves the lemma.
\end{proof}

\subsubsection{Identification of the local initial data with a CohFT}

Now that we have derived a local topological recursion equivalent to the global one, one only needs to identify its initial data with the data of a CohFT following the dictionary of \cite{DOSS12}. For this purpose, we will follow exactly the same steps as in the preceding sections. Let us first state precisely the identification that we want to find since it is slightly different from the usual setup where one has only one ramification point in each fiber and a specific value for $\gamma$.

First of all, let us remind that, according to \cite{EynardIntersection},  the Laplace transform of the local two point function reads
\beq
\frac{1}{ 2 \pi \sqrt{\zeta_1 \zeta_2}} \iint\limits_{\substack{\lambda_1-u_i \in \mathbb{R} \\ \lambda_2-u_j \in \mathbb{R}}} \om_{0,2}^{[ij]}(\lambda_1,\lambda_2) e^{{\lambda_1 - u_i \over \zeta_1} + {\lambda_2 - u_j \over \zeta_2}}
= { \sum_{k=1}^n f(\zeta_1)_k^i \, f(\zeta_2)_k^j \over \zeta_1 + \zeta_2},
\eeq
where
\beq
f(\zeta)_k^i:= - {1 \over \sqrt{2\pi \zeta} } \int\limits_{\lambda_1-u_i \in \mathbb{R}}\left. {\om_{0,2}^{[ij]}(\lambda_1,\lambda_2) \over d\sqrt{-2 \lambda_2 +2u_j} }\right|_{\lambda_2 = u_k} e^{{\lambda_1 - u_i \over \zeta}}.
\eeq
In these terms, the identification consists in showing that
\beq\label{general-om02}
f(\zeta)_k^i = \sum_{j=1}^n (\gamma | e_j) G_{j i} R(\zeta)_k^i
\eeq
and
\beq\label{general-om01}
\sum_{j=1}^n (\gamma | e_j) G_{j i}  \sum_{k=1}^n R(\zeta)_k^i \Delta_{k}^{-{1\over 2}} = {1 \over \sqrt{2 \pi \zeta}} \int\limits_{\lambda - u_i \in \mathbb{R}} \om_{0,1}^{[i]}(\lambda) \cdot e^{\lambda - u_i \over \zeta} 
\eeq
where  $R(\zeta)$ is the $R$-matrix defining the CohFT we started from for deriving our Fuchsian system.

Note that the proof of Equation~\eqref{general-om01} is  a simple verbatim of the proof of Section~\ref{sec:genus0} by replacing $\phi^{(i)}$ by $\sum_{j=1}^n (\gamma|e_j) G_{ji} \phi^{(i)}$. A corollary of this identification is the following theorem:

\bt \label{thm:general-set-up}
The correlation functions $\om_{g,k}$ produced by the global recursion generate the correlation functions of the original CohFT through
\begin{align*}
& \om_{g,k}(\lambda(z_1)^{[Id]},\dots,\lambda(z_k)^{[Id]})= \\ & \sum_{\substack{i_1,\dots,i_k \\ j_1,\dots,j_k \\ d_1,\dots,d_k}} \prod_{l=1}^k \left[ (\gamma | e_j) G_{j i} \right]\int_{\overline{\mathcal{M}}_{g,k}} \alpha_{g,k}(e_{i_1},\dots,e_{i_k}) \prod_{l=1}^k
\psi_l^{d_l} d\left(\left(\frac{d}{dx}\right)^{d_l} \xi_{i_l}(z_l)\right).
\end{align*}
\et

\subsubsection{Compatibility condition and homogeneity}

Let us now remark the compatibility between Equation~\eqref{general-om02} and Equation~\eqref{general-om01} can be written as the usual compatibility condition for the Bergman kernel by considering all the ramification points, i.e.
\beq
\eta(z) = \sum_{i=1}^n \sum_{w \in W_i} \Res_{z'  = u_i^{[w]}} {dp \over d\lambda}(z')  B(z,z') +
\Res_{z'  = z} {dp \over d\lambda}(z')  B(z,z')  
\eeq
is invariant any local involution $\sqrt{\lambda(z)-u_i} \to - \sqrt{\lambda(z)-u_i}$. 

Finally, it is an easy exercise to prove the homogeneity at the level of $\om_{0,2}$ by using Rauch's variational formula as in Section~\ref{sec:highergenus}.

\appendix
\numberwithin{equation}{section}
\section{Frobenius manifolds of rank 2}
In this appendix we explicitly construct global spectral curves for two rank 2 CohFTs.  We begin with the prepotential $F(t_1,t_2)$ which one uses to produce the structure of a Frobenius manifold.  We follow Dubrovin's construction to produce a superpotential.  In both cases we need to vary the construction slightly due to degeneracy of the Gauss-Manin system.  The two examples satisfy the conditions of Theorem~\ref{thm:any-genus-identification} and hence topological recursion produces the CohFT associated to the Frobenius manifold.  Note that although the two examples are of genus zero, they do not satisfy the conditions of Theorem~\ref{thm:genus-0-identification}.

\subsection{Gromov-Witten invariants of $\C\mathbb{P}^1$}
$$
F=\dfrac{t_1^2\, t_2}{2}+e^{t_2},\quad E=t_1\partial_{t_1}+2\partial_{t_2},\quad E\cdot F=2F\ (+t_1^2)
$$
$$\eta^{\alpha\beta}=\left(\begin{array}{cc}0&1\\1&0\end{array}\right),\quad 
\Psi=\frac{1}{\sqrt{2}}\left(\begin{array}{cc}e^{-t_2/4}&e^{t_2/4}\\-ie^{-t_2/4}&ie^{t_2/4}\end{array}\right)
$$
$$\mu=\left(\begin{array}{cc}-\frac{1}{2}&0\\0&\frac{1}{2}\end{array}\right),\quad V=\Psi\mu\Psi^{-1}=\frac{i}{2}\left(\begin{array}{cc}0&-1\\1&0\end{array}\right)
$$
$$u_1=t_1+2e^{t_2/2},\quad u_2=t_1-2e^{t_2/2}$$
$$
V_1=\partial_{u_1}\Psi\cdot\Psi^{-1}=\frac{1}{u_1-u_2}\cdot V=-V_2
$$
The vector fields $\phi$ given in canonical coordinates satisfy \eqref{eq:Gauss-Manin-system-5-31-32} which is equivalent to the Fuchsian system:
\begin{equation}  \label{fuchsian}
(U-\lambda)\partial_{\lambda}\phi=(\tfrac{1}{2}+V)\phi
\end{equation}
and
\begin{equation}  \label{fuchscan}
\partial_{u_i}\phi=\left(-\frac{B_i}{\lambda-u_i}+V_i\right)\phi,\quad B_i=-E_i(\tfrac{1}{2}+V),\quad V_i=\partial_{u_i}\Psi\cdot\Psi^{-1}.
\end{equation}
This has general solution
$$\phi=\frac{c_1}{(u_1-u_2)^{1/2}}\left(\begin{array}{r}\sqrt{\dfrac{u_2-\lambda}{u_1-\lambda}}\\\hspace{-2mm}-i\sqrt{\dfrac{u_1-\lambda}{u_2-\lambda}}\end{array}\right)+\frac{c_2}{(u_1-u_2)^{1/2}}\left(\begin{array}{c}i\\1\end{array}\right).$$
We choose the solution $c_1=1$, $c_2=0$.  Since $d=1$ \eqref{eq:Superpotential-Formula}
does not apply.  Nevertheless, $\phi$ is the gradient of $p$ so we can calculate
$$dp(\lambda,u)=\frac{1}{u_1-u_2}\left(\sqrt{\dfrac{u_2-\lambda}{u_1-\lambda}}du_1+\sqrt{\dfrac{u_1-\lambda}{u_2-\lambda}}du_2\right).
$$

In this example, we will also go through the equivalent treatment in terms of flat coordinates for the pencil of metrics.  The vector fields $\phi$ are gradient vector fields of the flat coordinates 
$$\phi_i
=\Psi_{i\alpha}\eta^{\alpha\beta}\partial_{\beta}x(t_1-\lambda,t_2,...,t_n)$$
for the pencil of metrics $g-\lambda\eta$ where
$$g^{\alpha\beta}=\left(\begin{array}{cc}2e^{t_2}&t_1\\t_1&2\end{array}\right).$$
The flat coordinates for the pencil are of the form $x(t_1-\lambda,t_2,...,t_n)$ so it is enough to consider the case $\lambda=0$, i.e. find flat coordinates for the intersection form.  These are given by solutions of the Gauss-Manin system of linear differential equations ((5.9) in \cite{Dub98}):
$$ g^{\alpha\gamma}\partial_{\beta}\xi_{\gamma}+\sum_{\gamma}(\tfrac{1}{2}-\mu_{\gamma})c^{\alpha\gamma}\xi_{\gamma}=0,\quad \xi_{\beta}=\partial_{\beta}x.$$

$$\begin{array}{ccccccc}
2e^{t_2}\partial_1^2x&+&t_1\partial_1\partial_2x&+&0&=&0\\
2e^{t_2}\partial_1\partial_2x&+&t_1\partial_2^2x&+&e^{t_2}\partial_1x&=&0\\
t_1\partial_1^2x&+&2\partial_1\partial_2x&+&\partial_1x&=&0\\
t_1\partial_1\partial_2x&+&2\partial_2^2x&+&0&=&0
\end{array}
$$
$$\Rightarrow x=c_1\cdot\arccos\left(\tfrac{1}{2}t_1e^{-t_2/2}\right)+c_2\cdot t_2
$$
Choose
$$p=i\arccos\left(\tfrac{1}{2}(t_1-\lambda)e^{-t_2/2}\right)
$$
$$\lambda=t_1-2e^{t_2/2}\cos(-ip)=t_1-e^{t_2/2}(e^p+e^{-p})
$$
Note that the critical points of $\lambda$ are indeed $t_1\pm2e^{t_2/2}=u_{1/2}.$  It was proven in \cite{DOSS12} that the curve $p=\ln z$, $\lambda=a+b(z+1/z)$ does indeed produce the CohFT for Gromov-Witten invariants of $\C\mathbb{P}^1$.


\subsection{Discrete surfaces}\label{subsec:discretesurface}
The 2-dimensional Hurwitz-Frobenius manifold $H_{0,(1,1)}$ of double branched covers of the sphere, with two branch points and unramified at infinity was defined by Dubrovin \cite{Dub2dTFT}.  Its potential is 
$$
F=\dfrac{t_1^2\, t_2}{2}+\dfrac{1}{2}t_2^2\log t_2,\quad E=t_1\partial_{t_1}+2t_2\partial_{t_2},\quad E\cdot F=4F\ (+t_2^2)
$$
$$\eta^{\alpha\beta}=\left(\begin{array}{cc}0&1\\1&0\end{array}\right),\quad 
\Psi=\frac{1}{\sqrt{2}}\left(\begin{array}{cc}t_2^{1/4}&t_2^{-1/4}\\-it_2^{1/4}&it_2^{-1/4}\end{array}\right)
$$
$$\mu=\left(\begin{array}{cc}\frac{1}{2}&0\\0&-\frac{1}{2}\end{array}\right),\quad V=\Psi\mu\Psi^{-1}=\frac{i}{2}\left(\begin{array}{cc}0&-1\\1&0\end{array}\right)
$$
$$u_1=t_1+2t_2^{1/2},\quad u_2=t_1-2t_2^{1/2}$$
$$
V_1=\partial_{u_1}\Psi\cdot\Psi^{-1}=\frac{-1\ }{u_1-u_2}\cdot V=-V_2
$$
The general solution of \eqref{fuchsian} and \eqref{fuchscan} is
$$\phi=\frac{c_1}{(u_1-u_2)^{1/2}}\left(\begin{array}{r}\sqrt{\dfrac{u_2-\lambda}{u_1-\lambda}}\\\hspace{-2mm}i\sqrt{\dfrac{u_1-\lambda}{u_2-\lambda}}\end{array}\right)+\frac{c_2}{(u_1-u_2)^{1/2}}\left(\begin{array}{c}i\\1\end{array}\right).$$
The solutions of Dubrovin described in \eqref{eq:DubrovinProp-1}-\eqref{eq:DubrovinProp-4} yield $\phi^{(1)}=\phi^{(2)}$ hence $G^{ij}$ is degenerate.  We use one of the solutions $\phi=\phi^{(1)}$ in \eqref{eq:Superpotential-Formula} to get
$$p(\lambda,u)=\frac{t_2^{1/4}}{2} \frac{\sqrt{(u_1-\lambda)(u_2-\lambda)}}{(u_1-u_2)^{1/2}} \left(\begin{array}{cc}1&i\end{array}\right) \left(\begin{array}{c}1\\-i\end{array}\right)=\dfrac{1}{2}\sqrt{ (u_1-\lambda)(u_2-\lambda) }.
$$
This corresponds to the spectral curve $\lambda=t_1+z+t_2/z$, $p=z-t_2/z$ which arises from the well-studied Hermitian matrix model with Gaussian potential hence discrete maps \cite{EO09} and was shown to correspond to the given CohFT in \cite{ACNP}.



\end{document}